\documentclass[11pt]{article}
\usepackage{graphicx}
\usepackage{subfigure}
\usepackage{float}
\usepackage{amsfonts,amsbsy,amssymb,amsmath,amsthm,amsfonts,mathtools}
\usepackage{verbatim}
\usepackage{parskip}
\usepackage[T1]{fontenc}
\usepackage{multicol}
\usepackage[onehalfspacing]{setspace}
\usepackage{color}
\usepackage[autostyle]{csquotes}
\usepackage{pgf}
\usepackage{hyperref}
\usepackage{thmtools}
\usepackage{caption}
\usepackage{soul}
\usepackage{pdflscape}
\usepackage{relsize}
 
\declaretheoremstyle[%
spaceabove=-6pt,%
spacebelow=6pt,%
headfont=\normalfont\itshape,%
postheadspace=1em,%
qed=\qedsymbol%
]{mystyle} 

\hypersetup{
	colorlinks=true,
	linkcolor=blue,
	filecolor=blue,      
	urlcolor=blue,
	citecolor=blue,
	pdftitle={Overleaf Example},
	pdfpagemode=FullScreen,
}

\newtheorem{thm}{Theorem}[section]
\newtheorem{prop}{Proposition}[section]
\newtheorem{cor}[thm]{Corollary}
\newtheorem{defn}[thm]{Definition}

\newtheorem{lemma}[thm]{Lemma}

\newtheorem{remark}[prop]{Remark}
\newtheorem{example}[thm]{Example}

\newcommand{\FF}{{\mathbb F}}

\newcommand{\CC}{\mathcal{C}}
\newcommand{\CD}{\mathcal{D}}

\newcommand{\cc}{\mathbf{c}}
\newcommand{\ba}{\mathbf{a}}
\newcommand{\bx}{\mathbf{x}}
\newcommand{\by}{\mathbf{y}}
\newcommand{\bb}{\mathbf{b}}

\newcommand{\rk}{\mathrm{rk}}
\newcommand{\G}{\mathcal{G}}
\newcommand{\HH}{\mathcal{H}}

\title{On $\ell$-rank additive intersection pairs (RAIP) of codes}

\author{Sanjit Bhowmick$^{1}$, Kuntal Deka$^{1}$ and Sihem Mesnager$^{2}$
	\footnote{
		$^{1}$Department of Electrical and Electronics Engineering, Indian Institute of Technology Guwahati,\\
		Assam, 781039, India. Email:sanjitbhowmick@rnd.iitg.ac.in ;kuntaldeka@iitg.ac.in\\
		$^{2}$Department of Mathematics, University of Paris VIII, F-93526 Saint-Denis, Laboratory Analysis, Geometry and Applications, LAGA, University Sorbonne Paris Nord, CNRS, UMR 7539, F-93430, Villetaneuse, France and Telecom Paris, Polytechnic institute of Paris, 91120 Palaiseau, France. Email: smesnager@univ-paris8.fr 
	}
}

\begin{document}
	\maketitle
	\begin{abstract} 
		
		This paper introduces and studies \(\ell\)-rank additive intersection pairs (RAIP) of codes over finite fields for a given positive integer \(\ell\). The notion of \(\ell\)-RAIP provides a common framework that generalizes additive complementary dual (ACD) codes, additive complementary pairs (ACP) of codes, and the hull of an additive code. We establish necessary and sufficient conditions characterizing when a pair of additive codes forms an \(\ell\)-RAIP. Furthermore, for (q>2), we prove that any pair of additive codes is monomially equivalent to an ACP of codes. As a consequence, for (q>3), every additive code is monomially equivalent to an ACD code. { A key contribution of the paper is a general construction method for \(\ell\)-RAIP of codes derived from self-orthogonal additive codes, which yields families of \((\ell+1)\)-RAIP of codes under suitable conditions.} In addition, several explicit constructions of \(\ell\)-RAIP of codes are presented.
		
	\end{abstract}
	
	\noindent\textbf{Keywords:} Additive code, Character of a finite abelian group, Linear $\ell$-intersection pairs of codes, $\ell$-rank additive intersection pairs of codes,  Hull codes,  Generalised Reed Solomon (GRS) code.
	
	\noindent\textbf{2020 AMS Classification Code:} 94B05; 94B60.

	\section{Introduction}\label{sec:intr}
	Additive codes are defined over additive groups, with additive codes over finite fields being constructed using the additive group of a finite field. Although all linear codes over finite fields are additive codes, {the converse does not valid,} meaning that the class of additive codes is broader than that of linear codes. Additive codes over finite fields have gained significant attention due to their relevance in quantum error correction and quantum computing \cite{Cal98,Ezerman}. Additionally, additive codes find applications in secret sharing schemes (see \cite{Kim17}). { These codes can be studied under various dualities arising from the characters of finite abelian groups} while common dualities such as Euclidean, Hermitian, and Trace dualities serve as important examples, other dualities can also be employed depending on the requirements of a particular application. Thus, additive codes provide a flexible framework in which different dualities can be utilized to achieve specific coding objectives. Delsarte introduced additive codes in 1973 using association schemes \cite{Del73}. For both symmetric and non-symmetric dualities, an additive code satisfies the cardinality condition, which states that the product of the number of elements in the code and the number of elements in its dual, with respect to any duality, equals the total number of elements in the ambient space (see \cite{Woo99}). Determining the optimal parameters for additive quaternary codes was first proposed by Blokhuis and Brouwer \cite{Blokhuis04}.
	
	The development of additive coding theory has progressed through several important contributions. Linear codes have been further generalized to additive codes over finite fields. It is important to note that additive codes are non-linear codes that possess elegant algebraic structures and can be fully described in terms of a basis. The concept of an $\FF_2$-linear additive code over $\FF_4$ was first introduced by Calderbank et al. \cite{Cal98}. Their work also established a connection between additive codes and quantum stabilizer codes through the construction of quantum stabilizer codes from $\FF_2$-linear additive self-orthogonal codes over $\FF_4$. Subsequently, Bierbrauer et al. \cite{Bierbrauer} extended the theory of additive codes to arbitrary finite fields. In related work, Huffman \cite{Huff13} investigated additive codes over finite fields and their dual codes with respect to both Euclidean and Hermitian trace bilinear forms. He also derived the MacWilliams identity and a Singleton-type bound for additive codes over finite fields. { Recently, Shi et al. \cite{shi22} pointed out that, in the design of orthogonal direct-sum masking schemes used to protect cryptographic implementations, only the additivity and complementarity properties of a code are required, rather than its linearity. Such masking schemes are employed to protect against side-channel attacks (SCAs), which exploit information leaked through physical implementations, and fault injection attacks (FIAs), which intentionally induce faults to reveal secret information. This observation broadens the range of codes that can be used in secure cryptographic applications. As a result, additive codes with complementary duals (ACD codes) can also be utilized as countermeasures against both passive and active side-channel analyses in embedded cryptosystems.} The Hamming distance of the corresponding ACD code determines the security level of such schemes. Motivated by this observation, Shi et al. \cite{shi222} investigated $\FF_2$-linear additive codes with complementary duals over $\FF_4$ and derived ACD codes over $\FF_4$ from binary codes using ordinary and Hermitian trace bilinear forms.

	Recently, Dougherty et al. generalized the concepts of Linear Complementary Dual (LCD) codes and one-dimensional hull codes to develop additive complementary dual (ACD) codes \cite{Dougherty22} and one-rank hull codes \cite{Dou124}. They focused on quaternary ACD and one-rank hull codes and determined optimal parameters for these codes over symmetric dualities for lengths \(n \leq 10\). This work motivated Agrawal and Sharma to introduce a specific class of non-symmetric dualities called skew-symmetric dualities and to study ACD codes under these conditions (see \cite{Agrawal24, Agrawa24}). While the notion of complementary duality focuses on codes whose intersection with their dual is trivial, a natural generalization is to consider pairs of codes whose intersection has a prescribed dimension. This perspective led to the introduction of linear \(\ell\)-intersection pairs of codes by Guenda et al. \cite{Guenda2019}. This concept serves as a generalization of Linear Complementary Dual (LCD) codes and linear complementary pairs (LCPs) of codes, as discussed in \cite{BDM23, CG18, CMT18, CMTQ19, Mas92}. Two linear codes over finite fields are said to form a linear \(\ell\)-intersection pair if the dimension of their intersection is \(\ell\). The authors provided a characterization of such pairs in terms of the corresponding generator and parity-check matrices of the codes. They also demonstrated that linear \(\ell\)-intersection pairs of codes can be used to construct entanglement-assisted quantum error-correcting codes.
	
	{ Motivated by these developments in both additive coding theory and \(\ell\)-intersection pairs of codes, the present paper extends the notion of linear \(\ell\)-intersection pairs to the additive setting by introducing \(\ell\)-rank additive intersection pairs (RAIP) of codes. The paper also provides two necessary and sufficient conditions for two additive codes, \(\CC\) and \(\CD\), over finite fields to form \(\ell\)-rank additive intersection pairs (RAIP).}

	The present paper is organized as follows. In Section~\ref{sec:2}, we we first review the necessary background on linear codes, particularly linear \(\ell\)-intersection pairs of codes, and then present relevant material on additive codes, characters of finite groups, and other concepts required in our study. In Section~\ref{sec:3}, we provide a characterization of \(\ell\)-rank additive intersection pairs (RAIP) of codes using parity-check and generator matrices. Furthermore, we show that any additive code over \(\FF_{q^m}\), with \(q \geq 4\), is equivalent to an ACD code. In Sections~\ref{sec:4} and \ref{sec:5}, we present several constructions of \(\ell\)-rank additive intersection pairs (RAIP) of codes. Finally, Section~\ref{sec:conclusion} concludes the paper.
	
	\section{Some preliminaries}\label{sec:2} 
	Throughout this paper, the notation $|E|$ denotes the cardinality of a finite set $E$, and $\ell$ is a positive integer. This section provides an overview of the foundational definitions and pivotal results necessary for deriving our main algebraic contributions.
	
	Let $q=p^e$ be a prime power, and let $\mathbb{F}_q$ represent the finite field of order $q$. For a positive integer $n$, the set $\mathbb{F}_q^n$ of all $n$-tuples over $\mathbb{F}_q$ forms an $n$-dimensional vector space. A linear code $C$ of length $n$ and dimension $k$ over $\mathbb{F}_q$ is a $k$-dimensional subspace of $\mathbb{F}_q^n$, whose elements are called codewords. The dual code of $C$ is defined as $C^\perp=\{\mathbf{x}\in \mathbb{F}_q^n \mid \langle \mathbf{x}, \mathbf{c} \rangle=0 \text{ for all } \mathbf{c}\in C\}$. A linear code $C$ is a linear complementary dual (LCD) code if $C\cap C^\perp=\{\mathbf{0}\}$. Furthermore, a pair of codes $\{C,D\}$ of the same length $n$ is called a linear complementary pair (LCP) if $\dim C+\dim D=n$ and $C\cap D=\{\mathbf{0}\}$. More generally, for a non-negative integer $\ell$, the pair $\{C,D\}$ is an $\ell$-dimensional linear intersection pair ($\ell$-DLIP) if $\dim(C\cap D)=\ell$. The following proposition provides a necessary and sufficient condition for a pair of linear codes to form an $\ell$-DLIP.
	
	\begin{prop}\cite[Lemma~4, Remark~1, 2]{Bhowmick23} \label{p-2.1}
		For $i=1,2$, let $C_i$ be a linear $[n, k_i]$ code over $\mathbb{F}_{q^m}$ with a generator matrix $\mathcal{G}_i$ and parity-check matrix $\mathcal{H}_i$. The following statements are equivalent
		\begin{enumerate}
			\item[a)] The pair $\{C_1, C_2\}$ is an $\ell$-DLIP of codes;
			\item[b)] $\mathrm{rk}\left(\begin{array}{cc}
				\mathcal{G}_1 \\
				\mathcal{G}_2 
			\end{array}\right)=k_1+k_2-\ell$; 
			\item[c)] $\mathrm{rk}\left(\begin{array}{cc}
				\mathcal{H}_1 \\
				\mathcal{H}_2 
			\end{array}\right)=2n-k_1-k_2-\ell$.
		\end{enumerate}
	\end{prop}
	{ Linear codes over finite fields admit a natural generalization known as \emph{additive codes}. Throughout this manuscript, let $m \ge 2$ be an integer and let $\FF_{q^m}$ denote the finite field with $q^m$ elements.
		
		Since $\FF_{q^m}$ is an $m$-dimensional vector space over its subfield $\FF_q$, the ambient space $\FF_{q^m}^n$ can naturally be viewed as an $(mn)$-dimensional vector space over $\FF_q$, with addition and scalar multiplication performed component-wise. An additive code of length $n$ over $\FF_{q^m}$ is simply an $\FF_q$-linear subspace of $\FF_{q^m}^n$. Thus, additive codes need not be linear over the larger field $\FF_{q^m}$, but these codes are linear over the base field $\FF_q$.
		If an additive code $\mathcal C$ contains $q^k$ codewords, where $0 \le k \le mn$, then $\mathcal C$ is called an $(n,q^k)$ additive code. Equivalently, $\mathcal C$ has dimension $k$ as an $\FF_q$-vector space, and we write
		$\mathrm{rk}_q(\mathcal C)=k.$
		A generator matrix of $\mathcal C$ is any matrix over $\FF_{q^m}$ whose rows form an $\FF_q$-basis of $\mathcal C$.}
	
	\noindent
	Further, for a finite abelian group $G$, an additive code $\mathcal{C}$ of length $n$ over $G$ can be viewed as an additive subgroup of $G^n$. To define a generalized duality over such groups, we utilize the framework of group characters.
	
	\noindent
	A character of $G$ is a group homomorphism $\chi: G \rightarrow \mathbb{C}^*$, where $\mathbb{C}^*$ denotes the multiplicative group of nonzero complex numbers. The set of all characters of $G$ forms a group $\widehat{G}$ under pointwise multiplication. It is a standard result that $\widehat{G}$ is abstractly isomorphic to $G$, though not canonically. Let $\varphi: G \rightarrow \widehat{G}$ be a fixed choice of isomorphism. For every element $u \in G$, we denote its corresponding character as $\chi_u = \varphi(u)$. 
	
	\noindent
	To evaluate how elements interact under this choice of isomorphism, we construct a character table or \textit{duality matrix} $M$. Let the elements of $G$ be indexed as $\{g_1, g_2, \ldots, g_{|G|}\}$. The duality matrix $M$ is a square matrix of size $|G| \times |G|$ whose rows are indexed by the characters $\chi_{g_i}$ and whose columns are indexed by the group elements $g_j$. The entry in the $i$-th row and $j$-th column is defined by evaluating the character on the group element
	$$M_{i,j} = \chi_{g_i}(g_j).$$
	
	{ \begin{example}
			Consider the finite field $\mathbb{F}_4 = \{0, 1, \alpha, \alpha^2\}$, where $\alpha^2 + \alpha + 1 = 0$. As an additive group, $(\mathbb{F}_4, +)$ is isomorphic to the Klein four-group $\mathbb{Z}_2 \times \mathbb{Z}_2$. Let $\text{Tr}: \mathbb{F}_4 \rightarrow \mathbb{F}_2$ be the standard field trace function. For each $u \in \mathbb{F}_4$, we define the character $\chi_u(v) = (-1)^{\text{Tr}(uv)}$. The resulting duality matrix $M$ over $\mathbb{F}_4$ is explicitly given by
			$$
			M = \begin{array}{r|cccc}
				& 0 & 1 & \alpha & \alpha^2 \\
				\hline
				\chi_{0} & 1 & 1 & 1 & 1 \\
				\chi_{1} & 1 & 1 & -1 & -1 \\
				\chi_{\alpha} & 1 & -1 & -1 & 1 \\
				\chi_{\alpha^2} & 1 & -1 & 1 & -1 
			\end{array}
			$$
	\end{example}}
	\noindent
	For two vectors $\mathbf{u} = (u_1, u_2, \ldots, u_n)$ and $\mathbf{v} = (v_1, v_2, \ldots, v_n)$ in $G^n$, their inner product is defined as
	$$\langle \mathbf{u}, \mathbf{v}\rangle_{M} = \prod\limits_{i=1}^{n}\chi_{u_i}(v_i).$$
	Because this inner product maps into the multiplicative group $\mathbb{C}^*$, the identity element is $1$. Thus, we say that $\mathbf{u}$ and $\mathbf{v}$ are orthogonal with respect to $M$ if and only if $\langle \mathbf{u}, \mathbf{v} \rangle_M = 1$. 
	
	\noindent
	Consequently, the orthogonal dual code depends fundamentally on our choice of $M$. We denote this generalized dual code as $\mathcal{C}^M$ (to distinguish it from the standard Euclidean dual $\mathcal{C}^\perp$) and define it as
	$$\mathcal{C}^M = \{\mathbf{u} \in G^n \mid \langle \mathbf{u}, \mathbf{v} \rangle_M = 1, \text{ for all } \mathbf{v} \in \mathcal{C}\}.$$
	Note that $|\mathcal{C}||\mathcal{C}^M|=|G|^n$ \cite{Woo99}. For a generator matrix $\G$ and parity check matrix $\HH$ of $\CC$, one can easily check that
	\begin{equation}\label{eq-GP}
		\G\odot_{M}\HH^\top=\textbf{1},   
	\end{equation}
	where $\mathbf{1}$ represents the all-ones matrix.
	Additionally, we say an additive code $\CC$ over a finite group $G$ is an additive complementary dual (ACD) with respect to $M$ if it $ \CC \cap \CC^M = \{\textbf 0\}$. Further, $\CC$ is self-orthogonal if $\CC\subseteq \CC^M$. If a pair $\{\CC, \CD\}$ satisfies $|\CC||\CD|=|G^n|$ and $\CC\cap\CD=\{\textbf{0}\}$, we call the pair an additive complementary pair (ACP) of codes.
	
	\section{Characterization of $\ell$-RAIP of codes}\label{sec:3}
	\begin{defn}
		For a non-negative integer $\ell$, a pair $\{\mathcal{C}, \mathcal{D}\}$ of additive codes of length $n$ over $\mathbb{F}_{q^m}$ is called an $\ell$-rank additive intersection pair ($\ell$-RAIP) of codes if $\mathrm{rk}_q(\mathcal{C} \cap \mathcal{D}) = \ell$.
	\end{defn}
	
	The structural properties of $\ell$-RAIPs generalize several classical additive code configurations. We collect these connections informally below.
	{
		\begin{remark}\label{rem-facts}
			From Definition 3.1, the following structural relations follow immediately.
			\begin{enumerate}
				\item[\textit{(i)}] An additive code $\CC$ is an ACD if $\{\CC, \CC^\perp\}$ is $0$-RAIP (for details information, see
				\cite{Agrawal24,Dougherty22,choi23}).
				\item[\textit{(ii)}] An additive $0$-RAIP $\{\CC, \CD\}$ with $\mathrm{rk}_q(\CC)+\mathrm{rk}_q(\CD)=mn$ is an ACP (for details
				information, see \cite{BD24}).
				\item[\textit{(iii)}] The rank of hull of any additive code $\CC$ is $\ell$ if and only if $\{\CC, \CC^\perp\}$ is $\ell$-RAIP (for details information about, see \cite{Agrawa24,Dou124}).
			\end{enumerate}
		\end{remark}
	}
	
	We now characterize these additive intersection pairs using matrix representations. { To connect character theory with linear algebra, we introduce an entry-wise matrix logarithm. Let $\xi \in \mathbb{C}^{\ast}$ be a primitive $p$-th root of unity, and let $A=[A_{i,j}]$ be a matrix whose entries belong to the cyclic group $\langle \xi \rangle$. We define the matrix logarithm of $A$ with respect to $\xi$, denoted by $\log_{\xi}(A)$, as the matrix obtained by applying the discrete logarithm base $\xi$ to each entry of (A); that is,
		\[
		\log_{\xi}(A)=\bigl[\log_{\xi}(A_{i,j})\bigr].
		\]
		Since every entry $A_{i,j}$ is a power of $\xi$, the resulting entries of $\log_{\xi}(A)$ lie in $\mathbb{F}_{p}$.
	}
	
	\begin{thm}\label{th-3.1}
		For $i=1,2$, let $\mathcal{C}_i$ be an additive $(n, q^k)$ code over $\mathbb{F}_{q^m}$ with generator matrix $\mathcal{G}_i$ and parity-check matrix $\mathcal{H}_i$. The pair $\{\mathcal{C}_1, \mathcal{C}_2^M\}$ is an $\ell$-RAIP (i.e., $\mathrm{rk}_q\left(\CC_1 \cap \CC_2^M\right)=\ell$) if and only if
		$$ \mathrm{rk}_q\left(\log_{\xi}\left(\mathcal{G}_1 \odot_{M} \mathcal{G}_2^\top \right)\right) = \mathrm{rk}_q\left(\log_{\xi}\left(\mathcal{G}_2 \odot_{M} \mathcal{G}_1^\top\right)\right) = k - \ell $$
		and 
		$$ \mathrm{rk}_q\left(\log_{\xi}\left(\mathcal{H}_1 \odot_{M} \mathcal{H}_2^\top\right)\right) = \mathrm{rk}_q\left(\log_{\xi}\left(\mathcal{H}_2 \odot_{M} \mathcal{H}_1^\top\right)\right) = nm - k - \ell. $$
	\end{thm}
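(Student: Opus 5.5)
The approach is to turn the character inner product into an honest bilinear form and then read off intersections as radicals of that form restricted to $\CC_1\times\CC_2$. Define $B:\FF_{q^m}^n\times\FF_{q^m}^n\to\FF_p$ by $B(\mathbf u,\mathbf v)=\log_\xi\langle \mathbf u,\mathbf v\rangle_M$. Since $u\mapsto\chi_u$ is a homomorphism and each $\chi_u$ is a character of an elementary abelian $p$-group, $B$ is biadditive. I will take the rank $\mathrm{rk}_q$ in the statement to mean rank over the field in which $B$ is bilinear. For the dualities of interest (trace-type, symmetric or skew-symmetric) $B$ is $\FF_q$-bilinear; in general one works over $\FF_p$. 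With this convention, $\log_\xi(\G_1\odot_M\G_2^\top)$ is exactly the Gram matrix $A=[B(\mathbf g^{(1)}_i,\mathbf g^{(2)}_j)]$ of $B$ on the chosen bases of $\CC_1$ and $\CC_2$. Moreover $B$ is nondegenerate, because $\varphi$ is an isomorphism.

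\textbf{Generator-matrix condition.} Write $\mathbf u\in\CC_1$ as $\mathbf u=\mathbf x\G_1$ with $\mathbf x\in\FF_q^k$. Then $\mathbf u\in\CC_2^M$ iff $B(\mathbf u,\mathbf g^{(2)}_j)=0$ for all $j$, i.e. iff $\mathbf xA=\mathbf 0$. Hence $\CC_1\cap\CC_2^M$ is isomorphic to the left kernel of $A$, so
\[
\mathrm{rk}_q(\CC_1\cap\CC_2^M)=k-\mathrm{rk}(A).
\]
Thus $\ell$-RAIP is equivalent to $\mathrm{rk}(A)=k-\ell$.

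For the second Gram matrix, $\log_\xi(\G_2\odot_M\G_1^\top)=[B(\mathbf g^{(2)}_i,\mathbf g^{(1)}_j)]$. This equals $A^\top$ when $M$ is symmetric and $-A^\top$ when $M$ is skew-symmetric, so it has the same rank. Showing that the two ranks agree for an arbitrary $M$ is the step I expect to be the main obstacle. I would first try to use only that $B$ and its opposite form $B^{op}(\mathbf u,\mathbf v)=B(\mathbf v,\mathbf u)$ define the same orthogonality relation, which holds for the symmetric and skew-symmetric dualities used in the paper. If that fails for general $M$, I would state the result under that hypothesis on $M$.

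\textbf{Parity-check condition.} Since $\HH_i$ generates $\CC_i^M$, which has $\mathrm{rk}_q=nm-k$ by $|\CC||\CC^M|=|G|^n$, the previous paragraph applied to the pair $(\CC_1^M,\CC_2^M)$ gives
\[
\mathrm{rk}\,\log_\xi(\HH_1\odot_M\HH_2^\top)=(nm-k)-\mathrm{rk}_q\bigl(\CC_1^M\cap(\CC_2^M)^M\bigr).
\]
Under the same hypothesis on $M$ we have $(\CC_2^M)^M=\CC_2$, by the cardinality identity together with the inclusion $\CC_2\subseteq\CC_2^{MM}$. So it remains to show $\mathrm{rk}_q(\CC_1^M\cap\CC_2)=\ell$.

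Now $\CC_2\cap\CC_1^M$ is the right radical of $B$ restricted to $\CC_1\times\CC_2$, namely the right kernel of $A$. Its dimension is $k-\mathrm{rk}(A)$, because both codes have rank $k$ and row rank equals column rank. This is $\ell$ exactly when the first condition holds, and the reverse implication runs the same chain backwards. The analogous computation with $\HH_2\odot_M\HH_1^\top$ completes the argument, again invoking the symmetric or skew-symmetric step flagged above.
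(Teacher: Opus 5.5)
Your core step is the paper's argument. Writing $\mathbf u=\mathbf x\G_1$ and reading $\CC_1\cap\CC_2^M$ off the left kernel of $A=\log_\xi(\G_1\odot_M\G_2^\top)$ is exactly the paper's system $A^\top\mathbf n=\mathbf 0$. Where you diverge is in not accepting the paper's closing ``by symmetry'', and your suspicion is correct. For a general duality the two swapped equalities are false, so restricting to symmetric or skew-symmetric $M$ is forced, not a fallback.

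Here is a counterexample. Take $q=m=2$ and $n=k=1$. On $\FF_4$, in coordinates with respect to $\{1,\omega\}$, put $\chi_u(v)=(-1)^{u_1v_1+u_1v_2+u_2v_2}$. The form has invertible matrix $\begin{pmatrix}1&1\\0&1\end{pmatrix}$, so $u\mapsto\chi_u$ is an admissible isomorphism, but $\chi_1(\omega)=-1$ while $\chi_\omega(1)=1$. For $\CC_1=\{0,1\}$ and $\CC_2=\{0,\omega\}$ one finds $\CC_2^M=\{0,\omega^2\}$ and $\CC_1^M=\{0,\omega\}$, so $\ell=0$. On the other hand:
\begin{itemize}
\item $\log_\xi(\G_1\odot_M\G_2^\top)=(1)$ and $\log_\xi(\G_2\odot_M\G_1^\top)=(0)$;
\item $\log_\xi(\HH_1\odot_M\HH_2^\top)=\log_\xi(\chi_\omega(\omega^2))=(1)$ and $\log_\xi(\HH_2\odot_M\HH_1^\top)=\log_\xi(\chi_{\omega^2}(\omega))=(0)$.
\end{itemize}
The swapped products have rank $0\neq 1=k-\ell=nm-k-\ell$. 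In general they measure $\mathrm{rk}_q(\CC_2\cap\CC_1^M)$, which is $1$ here. This is also why the corollary $\mathrm{rk}_q(\CC_1\cap\CC_2^M)=\mathrm{rk}_q(\CC_2\cap\CC_1^M)$ fails without the hypothesis. With $M$ symmetric or skew-symmetric, your argument is complete.

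The one genuine difference in route is the parity-check part. You go through $(\CC_2^M)^M=\CC_2$ and then identify $\CC_2\cap\CC_1^M$ with the right radical of $A$; both steps consume reflexivity. The paper instead expands $\mathbf y\in\CC_2^M$ in the rows of $\HH_2$ and imposes $\mathbf y\in\CC_1$ via $\HH_1\odot_M\mathbf y^\top=\mathbf 1$. That is, it reads $\CC_1\cap\CC_2^M$ off the \emph{right} kernel of $\log_\xi(\HH_1\odot_M\HH_2^\top)$. This only needs $\{\mathbf y:\langle\mathbf h,\mathbf y\rangle_M=1\ \forall\,\mathbf h\in\CC_1^M\}=\CC_1$, which holds for every $M$: inclusion by definition, equality by the cardinality identity. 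So the unswapped $\HH$-condition is valid for arbitrary $M$. It holds in the example above, even though there $(\CC_2^M)^M=\{0,1\}\neq\CC_2$. Your route therefore spends the hypothesis where it is not needed. The sharp statement is that each unswapped condition characterizes $\mathrm{rk}_q(\CC_1\cap\CC_2^M)=\ell$ for any $M$, and the swapped ones agree with them once $M$ is symmetric or skew-symmetric.

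One correction to your setup: $B$ is $\FF_p$-valued, so it is not $\FF_q$-bilinear when $q\neq p$, not even for trace dualities. In that case an $\FF_q$-basis can make the $\FF_p$-matrix miscount. Take $\chi_u(v)=\xi^{\mathrm{Tr}_{q^m/p}(uv)}$, $n=1$, $\CC_1=\FF_q\cdot 1$ and $\CC_2=\FF_q\, g$, where $a=\mathrm{Tr}_{q^m/q}(g)\neq 0$ but $\mathrm{Tr}_{q/p}(a)=0$. The log matrix is $(0)$, predicting $\ell=1$, yet $\CC_1\cap\CC_2^M=\{0\}$. So both your argument and the paper's need $q=p$, or else $\FF_p$-bases and $\FF_p$-ranks throughout. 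The paper makes the same tacit assumption when it writes $s_{ij}\in\FF_q$ and raises $\xi^{s_{ij}}$ to exponents $n_i\in\FF_q$.
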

	{

		\begin{proof} Let the rows of $\mathcal{G}_1$ be denoted by \[ \{\mathbf{g}_{11},\mathbf{g}_{12},\ldots,\mathbf{g}_{1k}\}, \] which form an $\mathbb{F}_q$-basis of $\mathcal{C}_1$. Likewise, let \[ \{\mathbf{g}_{21},\mathbf{g}_{22},\ldots,\mathbf{g}_{2k}\} \] be the rows of $\mathcal{G}_2$, forming an $\mathbb{F}_q$-basis of $\mathcal{C}_2$. Consider an arbitrary codeword $\mathbf{x}\in \mathcal{C}_1\cap \mathcal{C}_2^M$. Since $\mathbf{x}\in \mathcal{C}_1$, it can be uniquely written as \[ \mathbf{x}=\sum_{i=1}^{k} n_i\mathbf{g}_{1i}, \qquad n_i\in\mathbb{F}_q. \] On the other hand, because $\mathbf{x}\in \mathcal{C}_2^M$, it is orthogonal to every basis vector of $\mathcal{C}_2$ with respect to the $M$-inner product. Hence, \[ \langle \mathbf{x},\mathbf{g}_{2j}\rangle_M = \prod_{i=1}^{k} \left( \chi_{\mathbf{g}_{1i}}(\mathbf{g}_{2j}) \right)^{n_i} = 1, \qquad j=1,2,\ldots,k. \] Since all character values belong to the cyclic group generated by the primitive $p$-th root of unity $\xi$, we may write \[ \chi_{\mathbf{g}_{1i}}(\mathbf{g}_{2j}) = \xi^{s_{ij}}, \qquad s_{ij}\in\mathbb{F}_q. \] Substituting this expression into the above relation yields \[ \prod_{i=1}^{k} \left(\xi^{s_{ij}}\right)^{n_i} = \xi^{\sum_{i=1}^{k}s_{ij}n_i} = 1, \] which implies \[ \sum_{i=1}^{k}s_{ij}n_i\equiv 0 \pmod p. \] Consequently, the coefficients $(n_1,\ldots,n_k)$ satisfy the homogeneous linear system \[ \begin{pmatrix} s_{11} & s_{21} & \cdots & s_{k1}\\ s_{12} & s_{22} & \cdots & s_{k2}\\ \vdots & \vdots & \ddots & \vdots\\ s_{1k} & s_{2k} & \cdots & s_{kk} \end{pmatrix} \begin{pmatrix} n_1\\ n_2\\ \vdots\\ n_k \end{pmatrix} = \begin{pmatrix} 0\\ 0\\ \vdots\\ 0 \end{pmatrix}. \] By construction, the coefficient matrix is precisely the transpose of the entrywise logarithm of the character evaluation matrix. Therefore, the above system can be written compactly as \[ \left( \log_{\xi} \left( \mathcal{G}_1\odot_M\mathcal{G}_2^\top \right) \right)^\top \mathbf{n} = \mathbf{0}, \] where \[ \mathbf{n} = (n_1,n_2,\ldots,n_k)^\top. \] Suppose that \[ \mathrm{rk}_q\!\left( \log_{\xi} \left( \mathcal{G}_1\odot_M\mathcal{G}_2^\top \right) \right) = r. \] Then the solution space of the above system has dimension $k-r$. Since each solution vector $\mathbf{n}$ determines a unique codeword in $\mathcal{C}_1\cap\mathcal{C}_2^M$, it follows that \[ \dim_{\mathbb{F}_q} \left( \mathcal{C}_1\cap\mathcal{C}_2^M \right) = k-r. \] Therefore, \[ \dim_{\mathbb{F}_q} \left( \mathcal{C}_1\cap\mathcal{C}_2^M \right) = \ell \quad\Longleftrightarrow\quad r=k-\ell, \] which establishes the first assertion. For the second statement, let \[ \mathbf{y}\in \mathcal{C}_1\cap\mathcal{C}_2^M. \] Since $\mathbf{y}\in\mathcal{C}_2^M$, it can be expressed as an $\mathbb{F}_q$-linear combination of the rows of $\mathcal{H}_2$, namely \[ \mathbf{y} = \sum_{i=1}^{nm-k} n_i\mathbf{H}_{2i}. \] Imposing the additional condition $\mathbf{y}\in\mathcal{C}_1$ leads to a homogeneous linear system whose coefficient matrix is \[ \log_{\xi} \left( \mathcal{H}_1\odot_M\mathcal{H}_2^\top \right). \] Repeating the same argument as above, we obtain \[ \mathrm{rk}_q \left( \log_{\xi} \left( \mathcal{H}_1\odot_M\mathcal{H}_2^\top \right) \right) = nm-k-\ell. \] Finally, the equalities involving \[ \log_{\xi} \left( \mathcal{G}_2\odot_M\mathcal{G}_1^\top \right) \] and \[ \log_{\xi} \left( \mathcal{H}_2\odot_M\mathcal{H}_1^\top \right) \] follow immediately by symmetry. This completes the proof. \end{proof}
	}
	
	\begin{cor}
		For $i=1,2$, let $\CC_i$ be additive $(n, q^k)$ codes over $\FF_{q^m}$. Then
		\[
		\mathrm{rk}_q(\CC_1\cap \CC_2^M)=\mathrm{rk}_q(\CC_2\cap \CC_1^M).
		\]
	\end{cor}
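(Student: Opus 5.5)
The plan is to derive the corollary from Theorem~\ref{th-3.1}, using the fact that a matrix and its transpose have the same rank. Fix generator matrices $\G_1,\G_2$ of $\CC_1,\CC_2$; these are $k\times n$ matrices over $\FF_{q^m}$. Write $A=\log_{\xi}(\G_1\odot_M\G_2^\top)$ and $B=\log_{\xi}(\G_2\odot_M\G_1^\top)$. In the proof of Theorem~\ref{th-3.1}, the coefficient vectors of codewords of $\CC_1\cap\CC_2^M$ form exactly the null space of $A^\top$. Hence
\[
\mathrm{rk}_q(\CC_1\cap\CC_2^M)=k-\mathrm{rk}(A).
\]
Applying the same argument with the roles of the two codes swapped gives
\[
\mathrm{rk}_q(\CC_2\cap\CC_1^M)=k-\mathrm{rk}(B).
\]

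The whole claim therefore reduces to showing $\mathrm{rk}(A)=\mathrm{rk}(B)$. It suffices to prove $B=\pm A^\top$, or more generally that $B$ is obtained from $A^\top$ by an invertible transformation. The $(j,i)$ entry of $B$ is $\log_\xi\chi_{\mathbf g_{2j}}(\mathbf g_{1i})$, while the $(i,j)$ entry of $A$ is $\log_\xi\chi_{\mathbf g_{1i}}(\mathbf g_{2j})$. So the key step is comparing $\chi_{\mathbf u}(\mathbf v)$ with $\chi_{\mathbf v}(\mathbf u)$.

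\begin{itemize}
\item If $M$ is a symmetric duality, meaning $\chi_a(b)=\chi_b(a)$ (as for the trace dualities), then $B=A^\top$ immediately.
\item If $M$ is skew-symmetric, meaning $\chi_a(b)=\chi_b(a)^{-1}$, then $B=-A^\top$, and the rank is unchanged.
\end{itemize}

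This step is the main obstacle. For an arbitrary isomorphism $\varphi$ there is no a priori relation between $\chi_a(b)$ and $\chi_b(a)$, so the bare transpose argument does not cover the general case.

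For a general duality $M$, I would first argue by a cardinality count rather than through matrices. The inner product $(\mathbf u,\mathbf v)\mapsto\log_\xi\langle\mathbf u,\mathbf v\rangle_M$ is a nondegenerate $\FF_p$-bilinear form $\beta$ on $G^n$. Let $\beta^{op}(\mathbf u,\mathbf v)=\beta(\mathbf v,\mathbf u)$ denote the opposite form. Then
\begin{itemize}
\item $\CC_2\cap\CC_1^M$ is the left kernel of $\beta$ restricted to $\CC_2\times\CC_1$;
\item $\CC_1\cap\CC_2^M$ is the left kernel of $\beta$ restricted to $\CC_1\times\CC_2$.
\end{itemize}
The restricted pairings $\CC_1\times\CC_2\to\FF_p$ and $\CC_2\times\CC_1\to\FF_p$ are represented by $A$ and $B$ over $\FF_p$. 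The rank of a bilinear pairing between two spaces of the same dimension $k$ equals the codimension of either of its kernels. For the general case I would try to show that the left kernel of the second pairing and the right kernel of the first pairing have equal size.

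Failing that, I would rely on the standing convention of the cited works \cite{Dougherty22,Agrawal24}, where $M$ is either symmetric or skew-symmetric, and note that under that convention the transpose relation above completes the proof.
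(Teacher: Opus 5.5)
Your reduction is the paper's own. By Theorem~\ref{th-3.1} the two ranks are $k-\mathrm{rk}(A)$ and $k-\mathrm{rk}(B)$, so everything hinges on $\mathrm{rk}(A)=\mathrm{rk}(B)$. The paper simply asserts this; it is the unproved ``by symmetry'' clause of Theorem~\ref{th-3.1}. You rightly see that it needs an argument. Under the hypothesis that $M$ is symmetric or skew-symmetric, your observation ($B=A^\top$, resp.\ $B=-A^\top$) is a complete proof. The general case, however, cannot be closed, because the corollary is false for an arbitrary duality.

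Take $q=2$, $m=2$, $n=k=1$, $\xi=-1$ and $\FF_4=\{0,1,\omega,\omega^2\}$. In coordinates with respect to $\{1,\omega\}$, put $\log_\xi\chi_u(v)=u^\top N v$ with
\[
N=\begin{pmatrix}1&1\\0&1\end{pmatrix},
\]
that is, $\chi_1(1)=\chi_1(\omega)=\chi_\omega(\omega)=-1$ and $\chi_\omega(1)=1$. As $N$ is invertible, $u\mapsto\chi_u$ is an isomorphism $\FF_4\to\widehat{\FF_4}$. So this is a legitimate duality, and it is neither symmetric nor skew-symmetric. For $\CC_1=\{0,1\}$ and $\CC_2=\{0,\omega\}$ one gets $\CC_2^M=\{0,\omega^2\}$ and $\CC_1^M=\{0,\omega\}$, hence
\[
\mathrm{rk}_q(\CC_1\cap\CC_2^M)=0\neq 1=\mathrm{rk}_q(\CC_2\cap\CC_1^M).
\]
In your notation, $A=(1)$ and $B=(0)$.

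So your planned general-case route is a dead end. The left kernel of the second pairing and the right kernel of the first are the kernels of two different maps $\CC_2\to\mathrm{Hom}(\CC_1,\FF_p)$, namely $\mathbf v\mapsto\beta(\mathbf v,\cdot)$ and $\mathbf v\mapsto\beta(\cdot,\mathbf v)$. Asking them to have equal size merely restates the claim.

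Nor is there a convention to fall back on. The paper fixes an arbitrary isomorphism $\varphi$, and its own Example~\ref{ex-raip-computation} uses $\chi_1(\lambda)=-1\neq1=\chi_\lambda(1)$.

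The correct conclusion is that the corollary needs the explicit extra hypothesis that $M$ be symmetric or skew-symmetric. Under that hypothesis, your transpose argument is the right proof. For $q=p$ the hypothesis is essentially forced. Applied to rank-one codes $\FF_p\mathbf u$ and $\FF_p\mathbf v$, the statement says $\beta(\mathbf u,\mathbf v)=0\Leftrightarrow\beta(\mathbf v,\mathbf u)=0$. That makes $\beta$ reflexive, hence symmetric or alternating.
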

	
	\begin{proof}
		For  $i=1,2$, suppose that $\G_i$ is a generator matrix of $\CC_i$. By applying Theorem~\ref{th-3.1}, we get $\mathrm{rk}_q\left(\log_{\xi}\left(\G_1\odot_{M} \G_2^\top \right)\right)=k-\mathrm{rk}_q\left(\CC_1\cap\CC_2^M\right)$ and $\mathrm{rk}_q\left(\log_{\xi}\left(\G_2\odot_{M} \G_1^\top \right)\right)=k-\mathrm{rk}_q\left(\CC_2\cap\CC_1^M\right)$. Since $\mathrm{rk}_q\left(\log_{\xi}\left(\G_1\odot_{M} \G_2^\top \right)\right)=\mathrm{rk}_q\left(\log_{\xi}\left(\G_2\odot_{M} \G_1^\top \right)\right)$, hence the desired result follows immediately.
	\end{proof}
	
	Consequently, the following statements provide precise characterizations of ACP and ACD for additive codes over finite fields in terms of the invertibility of certain associated matrices.
	
	\begin{cor}
		For $i=1,2$, let $\CC_i$ be additive $(n, q^k)$ codes over $\FF_{q^m}$. Then the pair $\{\CC_1, \CC_2^M\}$ is an ACP if and only if
		\[
		\log_{\xi}\left(\G_1\odot_{M} \G_2^\top \right)
		\]
		is invertible over $\FF_q$.
	\end{cor}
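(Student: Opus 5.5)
This should follow directly from Theorem~\ref{th-3.1} with $\ell=0$, combined with a cardinality count that supplies the size condition in the definition of an ACP. Set $A=\log_{\xi}\left(\G_1\odot_{M}\G_2^\top\right)$, a $k\times k$ matrix over $\FF_q$, since both codes have $\FF_q$-rank $k$.

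\emph{Cardinality condition.} By the duality count recalled in Section~\ref{sec:2}, $|\CC_2||\CC_2^M|=q^{mn}$, so $|\CC_2^M|=q^{mn-k}$. Then
\[
|\CC_1||\CC_2^M|=q^k\,q^{mn-k}=q^{mn}=|\FF_{q^m}^n|
\]
holds automatically. Hence $\{\CC_1,\CC_2^M\}$ is an ACP exactly when $\CC_1\cap\CC_2^M=\{\mathbf 0\}$, that is, exactly when it is a $0$-RAIP.

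\emph{Reduction to the rank of $A$.} Apply Theorem~\ref{th-3.1} with $\ell=0$. The pair is a $0$-RAIP if and only if $\mathrm{rk}_q(A)=k$. For a square $k\times k$ matrix, full rank is the same as invertibility over $\FF_q$. More directly, the proof of Theorem~\ref{th-3.1} shows that the coordinate vectors $\mathbf n$ of elements of $\CC_1\cap\CC_2^M$ are exactly the solutions of $A^\top\mathbf n=\mathbf 0$. So the intersection is trivial if and only if $A^\top$, and hence $A$, is nonsingular.

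\emph{Main obstacle.} The delicate point is that the entries of $A$ are logarithms of $p$-th roots of unity, so they genuinely lie in $\FF_p$. When $q\neq p$ one must make sure the rank over $\FF_q$ is the rank intended in Theorem~\ref{th-3.1}. I would simply invoke that theorem as stated rather than re-deriving it. The only care needed is to note that the rank condition of Theorem~\ref{th-3.1} is equivalent, for a square matrix, to invertibility over the same field. The other conditions in Theorem~\ref{th-3.1} (involving $\G_2\odot_M\G_1^\top$ and the parity-check matrices) are equivalent by that theorem, so they need not be checked separately.
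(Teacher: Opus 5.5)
Your reduction is the paper's own. The corollary is presented there as an immediate consequence of Theorem~\ref{th-3.1} with $\ell=0$. Your observation that $|\CC_1||\CC_2^M|=q^{k}q^{mn-k}=q^{mn}$ holds automatically, so that ACP is the same as $0$-RAIP, is exactly the step the paper leaves implicit.

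The gap is the point you flag and then set aside by citing Theorem~\ref{th-3.1} ``as stated''. Your version of the worry is off target, since the rank of a matrix with entries in $\FF_p$ is the same over $\FF_p$ and over $\FF_q$. The real problem is that the kernel argument you quote breaks down when $q\neq p$, for two reasons:
\begin{itemize}
\item It uses $\chi_{\mathbf{x}}=\prod_i\chi_{\mathbf{g}_{1i}}^{n_i}$ for $\mathbf{x}=\sum_i n_i\mathbf{g}_{1i}$. This makes sense only for $n_i\in\FF_p$, because $u\mapsto\chi_u$ is merely a group isomorphism and says nothing about $\chi_{a\mathbf{g}}$ for $a\in\FF_q\setminus\FF_p$.
\item It tests membership in $\CC_2^M$ only against the $k$ rows of $\G_2$. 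That is justified only if those rows generate $\CC_2$ as an abelian group, which happens only for $q=p$.
\end{itemize}
Consequently Theorem~\ref{th-3.1}, and this corollary, fail in general for $q\neq p$, so citing them does not close the gap.

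For a counterexample, let $q=4$, $m=2$, $n=1$, $\FF_4=\{0,1,\omega,\omega^2\}$, and $\FF_{16}=\FF_4(\beta)$ with $\beta^2+\beta+\omega=0$. Write $\mathrm{Tr}_{a/b}$ for the trace from $\FF_a$ to $\FF_b$, and take the trace duality $\chi_u(v)=(-1)^{\mathrm{Tr}_{16/2}(uv)}$. Put $\CC_1=\FF_4\beta$ and $\CC_2=\FF_4$, so $k=1$, $\G_1=(\beta)$ and $\G_2=(1)$.
\begin{itemize}
\item For $v\in\FF_4$ one has $\mathrm{Tr}_{16/2}(uv)=\mathrm{Tr}_{4/2}\bigl(v\,\mathrm{Tr}_{16/4}(u)\bigr)$, so $\CC_2^M=\ker\mathrm{Tr}_{16/4}$.
\item Since $\mathrm{Tr}_{16/4}(a\beta)=a$ for $a\in\FF_4$, we get $\CC_1\cap\CC_2^M=\{\mathbf 0\}$, so $\{\CC_1,\CC_2^M\}$ is an ACP.
\item Yet $\log_{\xi}(\G_1\odot_M\G_2^\top)=\bigl(\mathrm{Tr}_{16/2}(\beta)\bigr)=\bigl(\mathrm{Tr}_{4/2}(1)\bigr)=(0)$ is singular.
\end{itemize}

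Your argument is complete for $q=p$. For $q=p^e$, the correct statement takes $\G_i$ to be $ek\times n$ matrices whose rows are $\FF_p$-bases of $\CC_i$. Your kernel argument then shows that $\{\CC_1,\CC_2^M\}$ is an ACP if and only if the resulting $ek\times ek$ matrix over $\FF_p$ is invertible. In the example, the bases $\{\beta,\omega\beta\}$ and $\{1,\omega\}$ give $\left(\begin{smallmatrix}0&1\\1&1\end{smallmatrix}\right)$, which is invertible, as it should be.

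Finally, for the converse rely only on your identity $\mathrm{rk}(\CC_1\cap\CC_2^M)=k-\mathrm{rk}(A)$, valid for $q=p$. Do not rely on the claim that the remaining conditions of Theorem~\ref{th-3.1} are equivalent to $\mathrm{rk}(A)=k$: for non-symmetric $M$ they are not. Take $\FF_4\cong\FF_2^2$, $\chi_u(v)=(-1)^{u^\top Sv}$ with $S=\left(\begin{smallmatrix}1&1\\0&1\end{smallmatrix}\right)$, $\G_1=(e_1)$ and $\G_2=(e_2)$. Then $\log_{\xi}(\G_1\odot_M\G_2^\top)$ and $\log_{\xi}(\G_2\odot_M\G_1^\top)$ have ranks $1$ and $0$.
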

	
	\begin{cor}
		Let $\CC$ be an additive $(n, q^k)$ code over $\FF_{q^m}$. Then $\CC$ is an ACD if and only if
		\[
		\log_{\xi}\left(\G\odot_{M} \G^\top \right)
		\]
		is invertible over $\FF_q$.
	\end{cor}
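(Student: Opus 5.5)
The plan is to specialize the previous two results to the case $\CC_1=\CC_2=\CC$. By definition, $\CC$ is an ACD exactly when $\CC\cap\CC^M=\{\mathbf{0}\}$, that is, when $\mathrm{rk}_q(\CC\cap\CC^M)=0$. Equivalently, the pair $\{\CC,\CC^M\}$ is a $0$-RAIP, as in Remark~\ref{rem-facts}(i) with the dual taken with respect to $M$. So the statement is the $\ell=0$ instance of Theorem~\ref{th-3.1} with $\G_1=\G_2=\G$.

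First, I will apply Theorem~\ref{th-3.1} with $\CC_1=\CC_2=\CC$, generator matrix $\G$, and $\ell=0$. It gives
\[
\mathrm{rk}_q(\CC\cap\CC^M)=0 \iff \mathrm{rk}_q\left(\log_{\xi}\left(\G\odot_M\G^\top\right)\right)=k .
\]
In the generator-matrix part of the theorem, the two conditions (on $\G_1\odot_M\G_2^\top$ and on $\G_2\odot_M\G_1^\top$) now coincide, so no symmetry issue arises. Second, $\log_{\xi}(\G\odot_M\G^\top)$ is a $k\times k$ matrix over $\FF_q$, since $\G$ has $k$ rows. A square $k\times k$ matrix has rank $k$ if and only if it is invertible. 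Combining the two steps yields the equivalence.

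Alternatively, the statement can be read off the preceding corollary on ACPs. With $\CC_1=\CC_2=\CC$, we have $|\CC||\CC^M|=|G|^n$, so the cardinality condition for an ACP holds automatically. Hence $\{\CC,\CC^M\}$ is an ACP precisely when $\CC$ is ACD.

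The only point needing care is that the entries of $\log_\xi$ really form a matrix over $\FF_q$ whose rank governs the solution space of the system in the proof of Theorem~\ref{th-3.1}. This is inherited from that proof rather than reproved here, so I do not expect a genuine obstacle. I would simply note that the proof goes through verbatim for $\CC_1=\CC_2$.
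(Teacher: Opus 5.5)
Your argument is the paper's own: the corollary is stated there as an immediate consequence of Theorem~\ref{th-3.1}. It comes from taking $\CC_1=\CC_2=\CC$ and $\ell=0$ in the generator-matrix part, which the proof of that theorem shows is equivalent on its own via $\mathrm{rk}_q(\CC\cap\CC^M)=k-\mathrm{rk}\bigl(\log_{\xi}(\G\odot_M\G^\top)\bigr)$, and noting that a $k\times k$ matrix of rank $k$ is invertible.

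One caution about the point you set aside as no genuine obstacle: it is a real obstacle when $q=p^e$ with $e>1$, and your argument inherits it from the paper.

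\begin{itemize}
\item The proof of Theorem~\ref{th-3.1} uses $\langle \sum_i n_i\mathbf{g}_{1i},\mathbf{g}_{2j}\rangle_M=\prod_i\chi_{\mathbf{g}_{1i}}(\mathbf{g}_{2j})^{n_i}$. This only makes sense for $n_i\in\FF_p$, and the entries of $\log_\xi(\cdot)$ lie in $\FF_p$.
\item A concrete counterexample: take $q=4$, $m=2$, $n=k=1$, $\chi_u(v)=(-1)^{\mathrm{Tr}_{16/2}(uv)}$, and $\CC=\FF_4\beta$ with $\mathrm{Tr}_{16/4}(\beta)=1$.
\item Then $a\beta\in\CC^M$ forces $a\,\mathrm{Tr}_{16/4}(\beta^2)=a\,\mathrm{Tr}_{16/4}(\beta)^2=a=0$, so $\CC$ is ACD.
\item Yet $\chi_\beta(\beta)=(-1)^{\mathrm{Tr}_{4/2}(1)}=1$, so $\log_\xi(\G\odot_M\G^\top)=(0)$ is not invertible.
\end{itemize}

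So your derivation, like the paper's, is valid when $q$ is prime. For general $q$ one must use an $\FF_p$-basis of $\CC$, which gives an $ek\times ek$ matrix over $\FF_p$.
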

	{
		\begin{thm}\label{th-3.2} For $i=1,2$ with $i\neq j$, let $\CC_i$ be an additive $(n, q^k)$ code over $\mathbb{F}_{q^m}$ with a generator matrix $\mathcal{G}_i$ and a parity-check matrix $\mathcal{H}_i$. If \[ \mathrm{rk}_q \begin{pmatrix} \mathcal{G}_i\\ \mathcal{H}_j \end{pmatrix} =nm-\ell, \] then \[ \mathrm{rk}_q\left(\CC_1\cap\CC_2^M\right)\ge \ell. \] \end{thm}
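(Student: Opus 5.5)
The plan is to reduce everything to a dimension count of $\mathbb{F}_q$-subspaces of $\mathbb{F}_{q^m}^n$ and to apply the Grassmann formula. The rows of $\mathcal{H}_j$ form an $\mathbb{F}_q$-basis of $\CC_j^M$, since a parity-check matrix of $\CC_j$ is a generator matrix of its $M$-dual, as encoded by \eqref{eq-GP}. Hence
\[
\mathrm{rk}_q\begin{pmatrix}\mathcal{G}_i\\ \mathcal{H}_j\end{pmatrix}=\mathrm{rk}_q\left(\CC_i+\CC_j^M\right).
\]
Here $\mathrm{rk}_q$ of a matrix over $\mathbb{F}_{q^m}$ means the $\mathbb{F}_q$-dimension of the $\mathbb{F}_q$-span of its rows, exactly as in the definition of $\mathrm{rk}_q(\CC)$. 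I also use $|\CC_j^M|=q^{mn}/|\CC_j|$ from \cite{Woo99}, which gives $\mathrm{rk}_q(\CC_j^M)=nm-k$.

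The key step is the Grassmann identity for $\mathbb{F}_q$-subspaces:
\[
\mathrm{rk}_q(\CC_i\cap\CC_j^M)=\mathrm{rk}_q(\CC_i)+\mathrm{rk}_q(\CC_j^M)-\mathrm{rk}_q(\CC_i+\CC_j^M)=k+(nm-k)-(nm-\ell)=\ell .
\]
So under the hypothesis we in fact get equality $\mathrm{rk}_q(\CC_i\cap\CC_j^M)=\ell$, which in particular gives the claimed inequality. This is the additive analogue of Proposition~\ref{p-2.1}(b), with $k_1=k$ and $k_2=nm-k$.

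Next we pass from $(i,j)$ to the specific pair $\CC_1\cap\CC_2^M$ named in the conclusion. If $(i,j)=(1,2)$ we are done. If $(i,j)=(2,1)$, we have shown $\mathrm{rk}_q(\CC_2\cap\CC_1^M)=\ell$, and the first corollary of Theorem~\ref{th-3.1} gives $\mathrm{rk}_q(\CC_1\cap\CC_2^M)=\mathrm{rk}_q(\CC_2\cap\CC_1^M)$. That corollary applies because both codes have the same size $q^k$, as in the hypothesis.

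The only delicate point is to justify that the rows of $\mathcal{H}_j$ really span $\CC_j^M$, that is, that the paper's parity-check matrix is a full-rank generator of the dual. I would state this explicitly as the meaning of $\mathcal{H}_j$, using that $\CC_j^M$ is an $\mathbb{F}_q$-subspace (orthogonality is preserved under sums and $\mathbb{F}_q$-multiples because characters are homomorphisms). With that in place, the rest is routine linear algebra over $\mathbb{F}_q$.
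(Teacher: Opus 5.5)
Your proof is correct relative to the paper's earlier results and is essentially the paper's argument. The paper also obtains an intersection of rank exactly $\ell$ from the Grassmann formula, namely $\dim(R_{\G_2}\cap R_{\HH_1})=\ell$, after passing without comment from the hypothesis on $\G_1,\HH_2$ to one on $\G_2,\HH_1$. It then transfers this to $\CC_1\cap\CC_2^M$ via the $\ell$ zero rows of $\log_{\xi}(\G_2\odot_M\G_1^\top)$ and Theorem~\ref{th-3.1}. That step is just a hands-on version of the corollary you cite, so your route is shorter and already yields the equality of Theorem~\ref{th-3.32}.

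Two caveats apply to you and the paper alike. First, the swap between $\CC_2\cap\CC_1^M$ and $\CC_1\cap\CC_2^M$ rests on the ``by symmetry'' step of Theorem~\ref{th-3.1}, which is only justified for symmetric or skew-symmetric $M$. For a general duality the two ranks can differ, so your case $(i,j)=(2,1)$ can fail, while your case $(1,2)$ needs no symmetry at all. Second, characters being homomorphisms only makes $\CC_j^M$ closed under $\FF_p$-multiples. Its $\FF_q$-linearity, and hence $\mathrm{rk}_q(\CC_j^M)=nm-k$ and the existence of $\HH_j$, must be assumed, as the paper tacitly does.
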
 \begin{proof} We prove the statement for $i=1$ and $j=2$, since the case $i=2$ and $j=1$ follows analogously. Let \[ R_{\G_2}=\operatorname{row}_{\mathbb F_q}(\G_2) \qquad\text{and}\qquad R_{\HH_1}=\operatorname{row}_{\mathbb F_q}(\HH_1) \] denote the $\mathbb F_q$-row spaces of $\G_2$ and $\HH_1$, respectively. Since $\CC_2$ has cardinality $q^k$ and $\CC_1^M$ has cardinality $q^{mn-k}$, we have \[ \dim(R_{\G_2})=k \qquad\text{and}\qquad \dim(R_{\HH_1})=mn-k. \] The assumption \[ \mathrm{rk}_q \begin{pmatrix} \mathcal{G}_1\\ \mathcal{H}_2 \end{pmatrix} =nm-\ell \] implies that \[ \dim(R_{\G_2}+R_{\HH_1})=mn-\ell. \] Therefore, \[ \begin{aligned} \dim(R_{\G_2}\cap R_{\HH_1}) &=\dim(R_{\G_2})+\dim(R_{\HH_1}) -\dim(R_{\G_2}+R_{\HH_1})\\ &=k+(mn-k)-(mn-\ell)\\ &=\ell. \end{aligned} \] Since $\dim(R_{\G_2}\cap R_{\HH_1})=\ell$, by performing invertible row operations on $\G_2$ (which leave the code $\CC_2$ unchanged), we may assume that \[ \widetilde{\G}_2= \begin{pmatrix} \G_{2,1}\\ \G_{2,2} \end{pmatrix}, \] where $\G_{2,1}$ is an $\ell\times n$ matrix whose rows form a basis of $R_{\G_2}\cap R_{\HH_1}$, and $\G_{2,2}$ is a $(k-\ell)\times n$ matrix. Since every row of $\G_{2,1}$ belongs to $R_{\HH_1}$, each of them can be expressed as an $\mathbb F_q$-linear combination of the rows of $\HH_1$. Consequently, for every $\bx\in\CC_1$, we have \[ \HH_1\odot_M\bx^\top=\mathbf{1}, \] which yields \[ \G_{2,1}\odot_M\bx^\top = \mathbf{1}_{\ell\times1}. \] Taking $\bx$ successively as each row of the generator matrix $\G_1$, whose rows form an $\mathbb F_q$-basis of $\CC_1$, we obtain \[ \G_{2,1}\odot_M\G_1^\top = \mathbf{1}_{\ell\times k}. \] Applying the entrywise logarithm gives \[ \log_{\xi} \left( \G_{2,1}\odot_M\G_1^\top \right) = \mathbf{0}_{\ell\times k}. \] Hence, \[ \log_{\xi} \left( \widetilde{\G}_2\odot_M\G_1^\top \right) = \begin{pmatrix} \log_{\xi}(\G_{2,1}\odot_M\G_1^\top)\\ \log_{\xi}(\G_{2,2}\odot_M\G_1^\top) \end{pmatrix} = \begin{pmatrix} \mathbf{0}_{\ell\times k}\\ \log_{\xi}(\G_{2,2}\odot_M\G_1^\top) \end{pmatrix}. \] Therefore, the first $\ell$ rows of \[ \log_{\xi} \left( \widetilde{\G}_2\odot_M\G_1^\top \right) \] are identically zero, and thus \[ \mathrm{rk}_q \left( \log_{\xi} \left( \widetilde{\G}_2\odot_M\G_1^\top \right) \right) \le k-\ell. \] Since $\widetilde{\G}_2$ is obtained from $\G_2$ by invertible row operations, the above rank coincides with \[ \mathrm{rk}_q \left( \log_{\xi} \left( \G_2\odot_M\G_1^\top \right) \right), \] and hence \[ \mathrm{rk}_q \left( \log_{\xi} \left( \G_2\odot_M\G_1^\top \right) \right) \le k-\ell. \] Finally, Theorem~\ref{th-3.1} implies that \[ \mathrm{rk}_q(\CC_1\cap\CC_2^M)\ge \ell. \] This completes the proof. \end{proof}}
	
	{
		
		\begin{thm}\label{th-3.3} For $i, j \in \{1, 2\}$ with $i \neq j$, let $\mathcal{C}_i$ be an additive $(n, q^k)$ code over $\mathbb{F}_{q^m}$. If \[ \mathrm{rk}_q\left(\mathcal{C}_1 \cap \mathcal{C}_2^M\right)=\ell, \] then there exist a generator matrix $\mathcal{G}_i$ and a parity-check matrix $\mathcal{H}_j$ of $\mathcal{C}_i$ and $\mathcal{C}_j$, respectively, such that \[ \mathrm{rk}_q \begin{pmatrix} \mathcal{G}_i\\ \mathcal{H}_j \end{pmatrix} = nm-\ell. \] \end{thm}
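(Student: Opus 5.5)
Throughout, $\mathrm{rk}_q$ of a stacked matrix means the $\mathbb F_q$-dimension of its row space. We also use the row-space description of Theorem~\ref{th-3.2}: the rows of a parity-check matrix $\mathcal{H}_j$ of $\mathcal{C}_j$ span $\mathcal{C}_j^M$, so $\operatorname{row}_{\mathbb F_q}(\mathcal{H}_j)=\mathcal{C}_j^M$. The rank of the stacked matrix therefore does not depend on which generator and parity-check matrices are chosen; it is just the dimension of the sum $\mathcal{C}_i+\mathcal{C}_j^M$. Hence any choice of $\mathcal{G}_i$ and $\mathcal{H}_j$ will serve, and the task reduces to computing $\dim_{\mathbb F_q}(\mathcal{C}_i+\mathcal{C}_j^M)$.

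Take the case $i=1$, $j=2$. We have $\dim\mathcal{C}_1=k$, and by the cardinality relation $|\mathcal{C}_2||\mathcal{C}_2^M|=q^{mn}$ we also have $\dim\mathcal{C}_2^M=mn-k$. The dimension formula for subspaces then gives
\[
\mathrm{rk}_q\begin{pmatrix}\mathcal{G}_1\\ \mathcal{H}_2\end{pmatrix}
=\dim(\mathcal{C}_1+\mathcal{C}_2^M)
=k+(mn-k)-\dim(\mathcal{C}_1\cap\mathcal{C}_2^M)
=mn-\ell .
\]

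For the case $i=2$, $j=1$, the same computation yields
\[
\mathrm{rk}_q\begin{pmatrix}\mathcal{G}_2\\ \mathcal{H}_1\end{pmatrix}=mn-\mathrm{rk}_q(\mathcal{C}_2\cap\mathcal{C}_1^M).
\]
The first corollary following Theorem~\ref{th-3.1} states that $\mathrm{rk}_q(\mathcal{C}_2\cap\mathcal{C}_1^M)=\mathrm{rk}_q(\mathcal{C}_1\cap\mathcal{C}_2^M)=\ell$, so this rank also equals $mn-\ell$.

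The main point needing care is the identification $\operatorname{row}(\mathcal{H}_j)=\mathcal{C}_j^M$ together with $\dim\mathcal{C}_j^M=mn-k$. The first follows from the definition of a parity-check matrix and is the same identification used in the proof of Theorem~\ref{th-3.2}. The second is the cardinality relation $|\mathcal{C}||\mathcal{C}^M|=|G|^n$ from \cite{Woo99}, applied with $G=\mathbb{F}_{q^m}$. With these two facts, the argument is simply the dimension formula for subspaces.
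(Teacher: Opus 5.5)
Your treatment of $(i,j)=(1,2)$ is the paper's argument. The paper extends a basis of $\CC_1\cap\CC_2^M$ to bases of $\CC_1$ and $\CC_2^M$, uses these as the rows of $\G_1$ and $\HH_2$, and applies the dimension formula to $\CC_1+\CC_2^M$. You are right that this special choice of bases is superfluous. The stacked matrix always has row space $\CC_i+\CC_j^M$, so the ``there exist'' in the statement can be strengthened to ``for all''.

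The gap is in the case $(i,j)=(2,1)$, and the paper's own proof shares it. The paper calls this case ``identical'', but the identical computation only gives $nm-\mathrm{rk}_q(\CC_2\cap\CC_1^M)$. You correctly see that one then needs $\mathrm{rk}_q(\CC_2\cap\CC_1^M)=\mathrm{rk}_q(\CC_1\cap\CC_2^M)$, and you invoke the corollary after Theorem~\ref{th-3.1}. That corollary does not hold for an arbitrary duality. Its proof asserts $\mathrm{rk}\log_\xi(\G_1\odot_M\G_2^\top)=\mathrm{rk}\log_\xi(\G_2\odot_M\G_1^\top)$. This is justified when $\chi_u(v)=\chi_v(u)^{\pm1}$ for all $u,v$, since then one matrix is $\pm$ the transpose of the other, but not otherwise.

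Here is a concrete counterexample. Take $q=m=2$, $n=1$, $\FF_4=\{0,1,\omega,\omega^2\}$, and the duality given by $\chi_1(1)=-1$, $\chi_1(\omega)=1$, $\chi_\omega(1)=\chi_\omega(\omega)=-1$. In coordinates with respect to $\{1,\omega\}$ this is $\chi_u(v)=(-1)^{uJv^\top}$ with $J=\begin{pmatrix}1&0\\1&1\end{pmatrix}$. Since $J$ is invertible, $u\mapsto\chi_u$ is an isomorphism. Let $\CC_1=\{0,1\}$ and $\CC_2=\{0,\omega\}$. Then $\CC_2^M=\{0,1\}$ and $\CC_1^M=\{0,\omega^2\}$. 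Hence $\mathrm{rk}_2(\CC_1\cap\CC_2^M)=1$ while $\mathrm{rk}_2(\CC_2\cap\CC_1^M)=0$. By your own independence remark, $\mathrm{rk}_2\begin{pmatrix}\G_2\\ \HH_1\end{pmatrix}=2\neq nm-\ell=1$ for every choice of matrices.

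So your proof, like the paper's, establishes the $(1,2)$ case in general, but the $(2,1)$ case only when $M$ is symmetric or skew-symmetric. For a general duality, that half of the statement is false as written; the correct value is $nm-\mathrm{rk}_q(\CC_2\cap\CC_1^M)$.
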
 \begin{proof} Assume that \[ \mathrm{rk}_q\left(\mathcal{C}_1\cap\mathcal{C}_2^M\right)=\ell. \] Let \[ \mathcal{B}_0=\{\mathbf{r}_1,\mathbf{r}_2,\ldots,\mathbf{r}_\ell\} \] be an $\mathbb{F}_q$-basis of the intersection subspace \[ \mathcal{C}_1\cap\mathcal{C}_2^M. \] Since $\mathcal{C}_1\cap\mathcal{C}_2^M$ is an $\mathbb{F}_q$-subspace of both $\mathcal{C}_1$ and $\mathcal{C}_2^M$, the basis extension theorem allows us to extend $\mathcal{B}_0$ to bases of these two spaces. More precisely, we may choose \[ \mathcal{B}_1= \{\mathbf{r}_1,\mathbf{r}_2,\ldots,\mathbf{r}_\ell, \mathbf{r}_{\ell+1},\ldots,\mathbf{r}_k\} \] to be an $\mathbb{F}_q$-basis of $\mathcal{C}_1$, and \[ \mathcal{B}_2= \{\mathbf{r}_1,\mathbf{r}_2,\ldots,\mathbf{r}_\ell, \mathbf{s}_{\ell+1},\ldots,\mathbf{s}_{mn-k}\} \] to be an $\mathbb{F}_q$-basis of $\mathcal{C}_2^M$. Let $\mathcal{G}_1$ be a generator matrix of $\mathcal{C}_1$ whose rows are precisely the vectors of $\mathcal{B}_1$. Likewise, let $\mathcal{H}_2$ be a parity-check matrix of $\mathcal{C}_2$. Since $\mathcal{H}_2$ is a generator matrix of $\mathcal{C}_2^M$, we may choose its rows to be exactly the vectors of $\mathcal{B}_2$. The row spaces of $\mathcal{G}_1$ and $\mathcal{H}_2$ therefore intersect in a subspace of dimension $\ell$, namely \[ \operatorname{span}_{\mathbb F_q}(\mathcal{B}_0). \] Consequently, \[ \begin{aligned} \mathrm{rk}_q \begin{pmatrix} \mathcal{G}_1\\ \mathcal{H}_2 \end{pmatrix} &= \rk_q(\mathcal{C}_1+\mathcal{C}_2^M)\\ &= \rk_q(\mathcal{C}_1)+\rk_q(\mathcal{C}_2^M) -\rk_q(\mathcal{C}_1\cap\mathcal{C}_2^M)\text{ by Remark~\ref{rk-1} }\\ &= k+(nm-k)-\ell\\ &= nm-\ell. \end{aligned} \] This proves the desired equality. The argument for the pair $(\mathcal{G}_2,\mathcal{H}_1)$ is identical. Hence, \[ \mathrm{rk}_q \begin{pmatrix} \mathcal{G}_i\\ \mathcal{H}_j \end{pmatrix} = nm-\ell, \] which completes the proof. \end{proof}}
	
	\begin{remark}
		We emphasize that if \(\ell(=k)=\min\{k,nm-k\}\), by Theorems~\ref{th-3.2}, \ref{th-3.3} and Lemma~\ref{lm-3.1}, we conclude that \(\mathrm{rk}_q\left(\CC_1 \cap \CC_2^M\right) = \ell\) if and only if there exist a generator matrix \(\mathcal{G}_i\) and a parity-check matrix \(\HH_j\) of \(\CC_i\) such that \(\mathrm{rk}_q\left(\begin{array}{c} \G_i \\ \HH_j \end{array}\right) = nm - \ell\) with \(i \neq j\). 
	\end{remark}
	
	{ A choice of duality matrix $M$ is defined as \textit{skew-symmetric} if its character evaluations satisfy
		$$\chi_u(v) = \chi_v(u)^{-1} \quad \text{for all } u, v \in G.$$
		When we apply the entry-wise discrete logarithm mapping base $\xi$ (where $\xi$ is a primitive $p$-th root of unity), then we get
		$$\log_{\xi}(\chi_u(v)) = -\log_{\xi}(\chi_v(u)).$$
	}
	
	The following theorem demonstrates the existence of non-zero \(\ell\)-RAIP of codes.
	\begin{thm}\label{th-3.8}
		Let $\mathcal{C}$ be an additive $(n, q^k)$ code over $\mathbb{F}_{q^m}$ with with an odd dimension $k$. If $q$ is odd and $M$ is skew-symmetric, then the pair $\{\mathcal{C}, \mathcal{C}^M\}$ is a non-zero $\ell$-RAIP of codes.
	\end{thm}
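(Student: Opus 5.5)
The plan is to apply Theorem~\ref{th-3.1} with $\CC_1=\CC_2=\CC$ and $\G_1=\G_2=\G$, a generator matrix of $\CC$. This gives
\[
\mathrm{rk}_q(\CC\cap\CC^M)=k-\mathrm{rk}\bigl(\log_{\xi}(\G\odot_M\G^\top)\bigr).
\]
It therefore suffices to show that the $k\times k$ matrix $A=\log_{\xi}(\G\odot_M\G^\top)$ is singular. Then $\ell=\mathrm{rk}_q(\CC\cap\CC^M)\ge 1$, so the pair is a non-zero $\ell$-RAIP. This is the additive analogue of the classical fact that an odd-order skew-symmetric matrix in odd characteristic has zero determinant.

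\textbf{Step 1 (entries of $A$).} The $(i,j)$ entry of $\G\odot_M\G^\top$ is
\[
\langle \mathbf g_i,\mathbf g_j\rangle_M=\prod_{t=1}^n\chi_{g_{it}}(g_{jt}).
\]
Each factor is a $p$-th root of unity, hence a power of $\xi$. Taking logarithms gives
\[
A_{ij}=\sum_{t}\log_\xi\chi_{g_{it}}(g_{jt})\in\FF_p .
\]

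\textbf{Step 2 (skew-symmetry of $A$).} The skew-symmetry hypothesis on $M$ gives $\log_\xi\chi_u(v)=-\log_\xi\chi_v(u)$ for all $u,v$. Summing over coordinates yields $A_{ji}=-A_{ij}$, so $A^\top=-A$.

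\textbf{Step 3 (singularity).} Since $q$ is odd, $p$ is odd, so $-1\neq 1$ in $\FF_p$. Then
\[
\det A=\det A^\top=\det(-A)=(-1)^k\det A=-\det A,
\]
because $k$ is odd. Hence $2\det A=0$, and as $2$ is invertible in $\FF_p$, $\det A=0$.

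\textbf{Step 4 (conclusion).} The rank of $A$ is the same whether computed over $\FF_p$ or over $\FF_q$. So $\mathrm{rk}(A)\le k-1$, and Theorem~\ref{th-3.1} gives $\ell=k-\mathrm{rk}(A)\ge 1$.

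\textbf{Main obstacle.} The delicate point is the bookkeeping in Theorem~\ref{th-3.1}. The coefficients $n_i$ lie in $\FF_q$, while the character exponents only see $\FF_p$. When $q>p$, scalar multiples such as $n\mathbf g$ with $n\in\FF_q\setminus\FF_p$ are not captured by exponentiating characters in the way that proof suggests.

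I would sidestep this by applying the theorem exactly as stated, taking it as given in the paper, and using only the rank of $A$. If a more robust argument is wanted, one can instead work with an $\FF_p$-basis of $\CC$, of size $ek$. The skew-symmetry argument of Steps 2 and 3 then needs an odd number of basis vectors, so this route requires checking that $ek$ is odd. That parity check is the step I would scrutinize most closely.
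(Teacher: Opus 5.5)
Your Steps 1--4 are the paper's proof: Theorem~\ref{th-3.1} with $\CC_1=\CC_2=\CC$, then $S^\top=-S$, then singularity of an odd-order skew-symmetric matrix in odd characteristic. However, the obstacle you flag at the end is a genuine gap, shared with the paper, and citing Theorem~\ref{th-3.1} as stated does not close it. That theorem's proof uses $\chi_{n_i\mathbf g}(\mathbf v)=\chi_{\mathbf g}(\mathbf v)^{n_i}$, which has no meaning for $n_i\in\FF_q\setminus\FF_p$. For $q=p^e$ with $e$ even, both Theorem~\ref{th-3.1} and the present statement fail.

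Here is a counterexample. Take $q=9$, $m=2$, $n=1$, $\omega\in\FF_9\setminus\FF_3$ and $h\in\FF_{81}\setminus\FF_9$. Let $B$ be the $\FF_3$-bilinear form on $\FF_{81}$ whose Gram matrix in the basis $(1,\omega,h,\omega h)$ is $\operatorname{diag}(J,J)$ with $J=\bigl(\begin{smallmatrix}0&1\\-1&0\end{smallmatrix}\bigr)$. Put $\chi_u(v)=\xi^{B(u,v)}$ with $\xi=e^{2\pi i/3}$; since $B$ is nondegenerate and antisymmetric, this defines a skew-symmetric duality $M$. The code $\CC=\FF_9$ is an additive $(1,9^1)$ code with $k=1$. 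For $u=a+b\omega+ch+d\omega h$ we have $B(u,1)=-b$ and $B(u,\omega)=a$. Hence $\CC^M=\FF_9h$ and $\CC\cap\CC^M=\{\mathbf 0\}$. Yet $\log_\xi(\G\odot_M\G^\top)=(0)$, so Theorem~\ref{th-3.1} predicts $\ell=1$. The failure is exactly the one you suspected: the exponent rule would give $\chi_{\omega\cdot 1}(1)=\chi_1(1)^{\omega}=1$, whereas in fact $\chi_\omega(1)=\xi^{-1}$.

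What survives is the route you sketched. The form $B(\mathbf u,\mathbf v)=\log_\xi\langle\mathbf u,\mathbf v\rangle_M$ is $\FF_p$-bilinear with $B(\mathbf v,\mathbf u)=-B(\mathbf u,\mathbf v)$, and $\CC\cap\CC^M$ is the radical of its restriction to $\CC$. The Gram matrix of that restriction on an $\FF_p$-basis of $\CC$ is $ek\times ek$ and skew-symmetric. So your determinant argument gives $\CC\cap\CC^M\neq\{\mathbf 0\}$ whenever $e$ is odd (so that $ek$ is odd), e.g.\ for $q$ prime. The example shows this parity condition cannot simply be dropped.

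Alternatively, suppose $M$ is $\FF_q$-compatible, i.e.\ $\log_\xi\chi_u(v)=\mathrm{Tr}_{\FF_q/\FF_p}(\beta(u,v))$ for a skew-symmetric $\FF_q$-bilinear $\beta$. Then $\CC^M$ is the $\beta$-orthogonal of $\CC$, because $\CC$ is closed under $\FF_q$-scalars. The odd-order argument then applies to the $k\times k$ matrix $(\beta(\mathbf g_i,\mathbf g_j))$ over $\FF_q$, not to $\log_\xi(\G\odot_M\G^\top)$, and yields the statement for every odd $q$.
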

	
	{\begin{proof}
			Let $\mathcal{G}$ be a generator matrix of $\mathcal{C}$ and set \[ S=\log_{\xi}\!\left(\mathcal{G}\odot_M\mathcal{G}^\top\right) \in \mathbb{M}_{k\times k}(\mathbb{F}_q). \] Since $M$ is skew-symmetric, we have \[ \chi_u(v)=\chi_v(u)^{-1} \qquad\text{for all }u,v\in\mathbb{F}_{q^m}. \] Consequently, \[ S_{i,j} = \log_{\xi}\!\bigl(\chi_{\mathbf{g}_i}(\mathbf{g}_j)\bigr) = -\log_{\xi}\!\bigl(\chi_{\mathbf{g}_j}(\mathbf{g}_i)\bigr) = -S_{j,i}. \] Hence, \[ S^\top=-S, \] showing that $S$ is a skew-symmetric matrix over $\mathbb{F}_q$. Since $k$ is odd, every skew-symmetric matrix of order $k$ over $\mathbb{F}_q$ has zero determinant. Therefore, \[ \det(S)=0, \] and thus $S$ is singular. It follows that \[ \mathrm{rk}_q(S)\le k-1. \] Applying Theorem~\ref{th-3.1}, we obtain \[ \mathrm{rk}_q(\mathcal{C}\cap\mathcal{C}^M) = k-\mathrm{rk}_q(S) \ge 1. \] This completes the proof. 
	\end{proof}}
	

We aim to construct an Additive Complementary Pair (ACP) of codes equivalent to the original codes. To achieve this, we formalize the coordination mapping. Let $\mathcal{B} = \{\alpha_1, \alpha_2, \ldots, \alpha_m\}$ be a fixed $\mathbb{F}_q$-basis for the extension field $\mathbb{F}_{q^m}$. We define the mapping $\Phi : \mathbb{F}_q^{mn} \rightarrow \mathbb{F}_{q^m}^n$ by
\begin{equation}\label{eq-mapping}
	\Phi(\mathbf{a}) = \left( \sum_{j=1}^m a_{1,j}\alpha_j, \sum_{j=1}^m a_{2,j}\alpha_j, \ldots, \sum_{j=1}^m a_{n,j}\alpha_j \right),
\end{equation}
where $\mathbf{n} = (a_{1,1}, \dots, a_{1,m}, \dots, a_{n,1}, \dots, a_{n,m}) \in \mathbb{F}_q^{mn}$. 

By construction, $\Phi$ is an $\mathbb{F}_q$-linear isomorphism, which implies the existence of inverse mapping $\Phi^{-1} : \mathbb{F}_{q^m}^n \rightarrow \mathbb{F}_q^{mn}$. Consequently, a subset $C \subseteq \mathbb{F}_q^{mn}$ is a linear $[mn, k]$ code over $\mathbb{F}_q$ if and only if its image $\Phi(C) \subseteq \mathbb{F}_{q^m}^n$ is an $\mathbb{F}_q$-linear additive $(n, q^k)$ code over $\mathbb{F}_{q^m}$. Conversely, for any additive $(n, q^k)$ code $\mathcal{C} \subseteq \mathbb{F}_{q^m}^n$, its preimage satisfies $\Phi^{-1}(\mathcal{C}) \in \mathbb{M}_{1 \times mn}(\mathbb{F}_q)$, forms a linear $[mn, k]$ code over $\mathbb{F}_q$.
We will introduce a helpful lemma for establishing results in the subsequent sequels.

\begin{lemma}\label{lm-3.1}
	For $i=1,2$, let $\CC_i$ be an additive $(n, q^k)$ code over $\FF_{q^m}$. Then $\Phi^{-1}\left(\CC_1\cap \CC_2^M\right)=\Phi^{-1}\left(\CC_1\right)\cap \Phi^{-1}\left(\CC_2^M\right)$.   
\end{lemma}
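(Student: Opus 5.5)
This is a purely set-theoretic fact. It uses only that $\Phi$ is a bijection between $\FF_q^{mn}$ and $\FF_{q^m}^n$; in particular, the same argument works with any pair of subsets in place of $\CC_1$ and $\CC_2^M$. The plan is to prove the two inclusions directly from the definition of preimage. Alternatively, one can note that $\Phi^{-1}$, being the inverse of a bijection, is itself injective, and injective maps preserve intersections. For clarity I will write out the elementwise argument.

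First take $\ba\in\Phi^{-1}(\CC_1\cap\CC_2^M)$. By definition $\Phi(\ba)\in\CC_1\cap\CC_2^M$. Hence $\Phi(\ba)\in\CC_1$ and $\Phi(\ba)\in\CC_2^M$, so $\ba\in\Phi^{-1}(\CC_1)$ and $\ba\in\Phi^{-1}(\CC_2^M)$. This gives the inclusion $\subseteq$; it holds for preimages under any map.

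Conversely, take $\ba\in\Phi^{-1}(\CC_1)\cap\Phi^{-1}(\CC_2^M)$. Then $\Phi(\ba)$ lies in both $\CC_1$ and $\CC_2^M$, hence in $\CC_1\cap\CC_2^M$, so $\ba\in\Phi^{-1}(\CC_1\cap\CC_2^M)$. This gives the inclusion $\supseteq$.

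There is one point to check. Here $\Phi^{-1}(\cdot)$ may be read either as the preimage under $\Phi$ or as the image under the inverse map $\Phi^{-1}$. Since $\Phi$ is an $\FF_q$-linear isomorphism, these two readings coincide, so the argument above covers either one. I will also remark afterwards that both sides are $\FF_q$-subspaces of $\FF_q^{mn}$, because $\Phi^{-1}$ is linear. Consequently $\mathrm{rk}_q(\CC_1\cap\CC_2^M)=\dim\bigl(\Phi^{-1}(\CC_1)\cap\Phi^{-1}(\CC_2^M)\bigr)$, which is the form in which the lemma is used later, for instance together with Proposition~\ref{p-2.1}. No step here is a real obstacle.
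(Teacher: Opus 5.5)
Your proof is correct and is essentially the paper's argument: both prove the two inclusions elementwise. The one difference is in the reverse inclusion. The paper reads $\Phi^{-1}(\cdot)$ as the image under the inverse map and uses injectivity of $\Phi^{-1}$ to force $\mathbf{c}=\mathbf{c}'$, whereas you read it as a preimage (where no injectivity is needed) and use bijectivity of $\Phi$ only to identify the two readings, which is exactly the alternative you mention.
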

\begin{proof}
	Let $\bx\in \Phi^{-1}\left(\CC_1\cap \CC_2^M\right)$, then there is $\cc\in\CC_1\cap\CC_2^M$ such that $\bx=\Phi^{-1}(\cc)$. Since, $\cc\in\CC_1\cap\CC_2^M$, we have $\cc\in\CC_1$ and $\cc\in\CC_2^M$, which implies $\bx=\Phi^{-1}(\cc)\in\Phi^{-1}\left(\CC_1\right)\cap \Phi^{-1}\left(\CC_2^M\right)$. Thus, we conclude that $\Phi^{-1}\left(\CC_1\cap \CC_2^M\right)\subseteq\Phi^{-1}\left(\CC_1\right)\cap \Phi^{-1}\left(\CC_2^M\right)$.
	
	Conversely, let $\by \in\Phi^{-1}\left(\CC_1\right)\cap \Phi^{-1}\left(\CC_2^M\right)$, then there are $\cc \in\CC_1$ and  $\cc'\in\CC_2^M$ such that $\Phi^{-1}(\cc)=\Phi^{-1}(\cc')$. Since, $\Phi^{-1}$ is injective, we conclude that $\cc=\cc'$, which implies $\cc\in\CC_1\cap\CC_2^M$. Thus, $\by=\Phi^{-1}(\cc)\in\Phi^{-1}\left(\CC_1\cap \CC_2^M\right)$. Hence, $\Phi^{-1}\left(\CC_1\right)\cap \Phi^{-1}\left(\CC_2^M\right)\subseteq\Phi^{-1}\left(\CC_1\cap \CC_2^M\right)$, which completes the proof.    
\end{proof}

\begin{remark}\label{rk-1}
	Since the mapping $\Phi: \mathbb{F}_q^{mn} \rightarrow \mathbb{F}_{q^m}^n$ defined in \eqref{eq-mapping} is an $\mathbb{F}_q$-linear vector space isomorphism, it preserves the underlying subspace dimension. Consequently, by Lemma \ref{lm-3.1}, we obtain
	$$ \mathrm{rk}_q\left(\mathcal{C}_1 \cap \mathcal{C}_2^M\right) = \mathrm{dim}_q\left(\Phi^{-1}\left(\mathcal{C}_1\right) \cap \Phi^{-1}\left(\mathcal{C}_2^M\right)\right). $$
\end{remark}

\begin{thm}\label{th-3.32}
	For $i, j \in \{1, 2\}$ with $i \neq j$, let $\mathcal{C}_i$ be an additive $(n, q^k)$ code over $\mathbb{F}_{q^m}$ with a generator matrix $\mathcal{G}_i$ and a parity-check matrix $\mathcal{H}_i$. If 
	$ \mathrm{rk}_q \begin{pmatrix} \mathcal{G}_i \\ \mathcal{H}_j \end{pmatrix} = nm - \ell, $
	then $\mathrm{rk}_q\left(\mathcal{C}_1 \cap \mathcal{C}_2^M\right) = \ell$.   
\end{thm}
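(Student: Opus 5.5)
The plan is to reduce the statement to the dimension formula for a sum of two subspaces in $\mathbb{F}_q^{mn}$, using the coordinate isomorphism $\Phi$ from \eqref{eq-mapping}. We treat the case $i=1$, $j=2$ first and then derive the case $i=2$, $j=1$ from the symmetry corollary that follows Theorem~\ref{th-3.1}.

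First, we identify the row spaces involved. The rows of $\mathcal{G}_1$ form an $\mathbb{F}_q$-basis of $\mathcal{C}_1$, so its $\mathbb{F}_q$-row space is $\mathcal{C}_1$, of rank $k$. The parity-check matrix $\mathcal{H}_2$ is a generator matrix of $\mathcal{C}_2^M$, so its $\mathbb{F}_q$-row space is $\mathcal{C}_2^M$. Since $|\mathcal{C}_2||\mathcal{C}_2^M|=q^{mn}$, this space has rank $mn-k$. Hence the $\mathbb{F}_q$-row space of the stacked matrix $\begin{pmatrix}\mathcal{G}_1\\ \mathcal{H}_2\end{pmatrix}$ is exactly $\mathcal{C}_1+\mathcal{C}_2^M$. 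Its $\mathbb{F}_q$-rank therefore equals $\mathrm{rk}_q(\mathcal{C}_1+\mathcal{C}_2^M)$.

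To make the rank computation completely rigorous, we pass through $\Phi^{-1}$, which turns the rows into vectors of $\mathbb{F}_q^{mn}$ and turns $\mathrm{rk}_q$ into ordinary $\mathbb{F}_q$-dimension. Because $\Phi$ is an $\mathbb{F}_q$-linear isomorphism, we have $\Phi^{-1}(\mathcal{C}_1+\mathcal{C}_2^M)=\Phi^{-1}(\mathcal{C}_1)+\Phi^{-1}(\mathcal{C}_2^M)$. Lemma~\ref{lm-3.1} and Remark~\ref{rk-1} identify the intersection on the $\mathbb{F}_q^{mn}$ side. The Grassmann dimension formula in $\mathbb{F}_q^{mn}$ then gives
\[
nm-\ell=\mathrm{rk}_q\begin{pmatrix}\mathcal{G}_1\\ \mathcal{H}_2\end{pmatrix}=k+(nm-k)-\dim_q\bigl(\Phi^{-1}(\mathcal{C}_1)\cap\Phi^{-1}(\mathcal{C}_2^M)\bigr).
\]
Combining this with Remark~\ref{rk-1} yields $\mathrm{rk}_q(\mathcal{C}_1\cap\mathcal{C}_2^M)=\ell$. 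This reproduces the computation in the proof of Theorem~\ref{th-3.3}, run in the reverse direction, and it also sharpens the inequality of Theorem~\ref{th-3.2} to an equality.

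For the case $i=2$, $j=1$, the same argument applied to $\begin{pmatrix}\mathcal{G}_2\\ \mathcal{H}_1\end{pmatrix}$ gives $\mathrm{rk}_q(\mathcal{C}_2\cap\mathcal{C}_1^M)=\ell$. The corollary asserting $\mathrm{rk}_q(\mathcal{C}_1\cap\mathcal{C}_2^M)=\mathrm{rk}_q(\mathcal{C}_2\cap\mathcal{C}_1^M)$, which relies on the rank equality of the two log-matrices in Theorem~\ref{th-3.1}, then transfers this to $\mathrm{rk}_q(\mathcal{C}_1\cap\mathcal{C}_2^M)=\ell$.

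The only delicate point is this symmetric case. For a non-symmetric duality $M$, the equality $\mathrm{rk}_q(\mathcal{C}_1\cap\mathcal{C}_2^M)=\mathrm{rk}_q(\mathcal{C}_2\cap\mathcal{C}_1^M)$ is not automatic. We rely on it as stated in that corollary, that is, on the fact that $\log_{\xi}(\mathcal{G}_2\odot_M\mathcal{G}_1^\top)$ has the same rank as $\log_{\xi}(\mathcal{G}_1\odot_M\mathcal{G}_2^\top)$. If one wanted to avoid this, the $i=2$ case could simply be read as the statement with $\mathcal{C}_1$ and $\mathcal{C}_2$ interchanged.
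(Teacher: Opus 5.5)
Your argument for $(i,j)=(1,2)$ is exactly the paper's proof. The $\mathbb{F}_q$-row space of $\begin{pmatrix}\mathcal{G}_1\\ \mathcal{H}_2\end{pmatrix}$ is $\mathcal{C}_1+\mathcal{C}_2^M$, and the dimension formula with $\mathrm{rk}_q(\mathcal{C}_1)=k$ and $\mathrm{rk}_q(\mathcal{C}_2^M)=nm-k$ gives $\mathrm{rk}_q(\mathcal{C}_1\cap\mathcal{C}_2^M)=\ell$. The detour through $\Phi^{-1}$ is harmless but unnecessary, since both codes are already $\mathbb{F}_q$-subspaces of $\mathbb{F}_{q^m}^n$. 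The paper proves only this case.

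The step you add for $(i,j)=(2,1)$ fails. The corollary $\mathrm{rk}_q(\mathcal{C}_1\cap\mathcal{C}_2^M)=\mathrm{rk}_q(\mathcal{C}_2\cap\mathcal{C}_1^M)$ rests on the ``by symmetry'' remark in Theorem~\ref{th-3.1}. The $(r,s)$ entry of $\log_{\xi}(\mathcal{G}_2\odot_M\mathcal{G}_1^\top)$ is $\log_{\xi}\chi_{\mathbf{g}_{2r}}(\mathbf{g}_{1s})$. This equals $\pm\log_{\xi}\chi_{\mathbf{g}_{1s}}(\mathbf{g}_{2r})$ when $M$ is symmetric or skew-symmetric, but for a general duality there is no such relation. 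The paper does use non-symmetric dualities, e.g.\ in Example~\ref{ex-raip-computation}.

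Here is a concrete counterexample. Take $q=2$, $m=2$, $n=1$, $\mathbb{F}_4=\{0,1,\omega,\omega^2\}$, and $\chi_u(v)=(-1)^{B(u,v)}$. Here $B$ is the $\mathbb{F}_2$-bilinear form with $B(1,1)=B(\omega,1)=B(\omega,\omega)=1$ and $B(1,\omega)=0$. Its Gram matrix is invertible, so $u\mapsto\chi_u$ is an isomorphism onto $\widehat{\mathbb{F}_4}$. With $\mathcal{C}_1=\{0,1\}$ and $\mathcal{C}_2=\{0,\omega\}$ one gets $\mathcal{C}_2^M=\{0,1\}$ and $\mathcal{C}_1^M=\{0,\omega^2\}$. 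Hence $\mathrm{rk}_2(\mathcal{C}_1\cap\mathcal{C}_2^M)=1$ while $\mathrm{rk}_2(\mathcal{C}_2\cap\mathcal{C}_1^M)=0$. Correspondingly, $\log_{\xi}(\mathcal{G}_1\odot_M\mathcal{G}_2^\top)=(0)$ but $\log_{\xi}(\mathcal{G}_2\odot_M\mathcal{G}_1^\top)=(1)$. The matrix $\begin{pmatrix}\mathcal{G}_2\\ \mathcal{H}_1\end{pmatrix}=\begin{pmatrix}\omega\\ \omega^2\end{pmatrix}$ has $\mathbb{F}_2$-rank $2=nm-0$. So the hypothesis holds for $(i,j)=(2,1)$ with $\ell=0$, yet $\mathrm{rk}_2(\mathcal{C}_1\cap\mathcal{C}_2^M)=1$.

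The literal $(2,1)$ reading of the statement is therefore false, and no argument can establish it. Your fallback reading is the correct version: the conclusion becomes $\mathrm{rk}_q(\mathcal{C}_2\cap\mathcal{C}_1^M)=\ell$, which follows from the same dimension count applied to $\mathcal{C}_2+\mathcal{C}_1^M$. That reading is required, not optional, and should replace the appeal to the corollary, which is valid only for symmetric or skew-symmetric $M$.
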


\begin{proof}
	Since \(\mathrm{rk}_{q}\left(\begin{array}{c}
		\G_1 \\
		\HH_2
	\end{array}\right) = nm - \ell\), it follows that \(\mathrm{rk}_q\left(\CC_1 + \CC_2^M\right) = nm - \ell\). Note that  \(\mathrm{rk}_q\left(\CC_1 + \CC_2^M\right) = \mathrm{rk}_q(\CC_1) + \mathrm{rk}_q(\CC_2^M) - \mathrm{rk}_q\left(\CC_1 \cap \CC_2^M\right)\) (by Remark~\ref{rk-1}). Since $\mathrm{rk}_q(\CC)=k$ and $\mathrm{rk}_q(\CC^M)=nm-k$, then \(\mathrm{ rk}_q\left(\CC_1 \cap \CC_2^M\right) = \ell\), which completes the proof.
\end{proof}

\begin{thm}
	For $i=1,2$, let $\CC_i$ be an additive $(n, q^k)$ code over $\FF_{q^m}$ with a generator matrix $\mathcal{G}_i$ and a parity check matrix $\HH_i$. Then $\mathrm{rk}_q\left(\CC_1\cap\CC_2^M\right)= \ell $ if and only if for $i\neq j$, $\mathrm{rk}_q\left(\begin{array}{c}
		\G_i \\
		\HH_j
	\end{array}\right)=nm-\ell$.    
\end{thm}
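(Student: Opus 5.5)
The plan is to prove the two directions separately, reducing each to the dimension formula for a sum of two $\mathbb{F}_q$-subspaces of $\mathbb{F}_{q^m}^n \cong \mathbb{F}_q^{mn}$ (via $\Phi$ and Remark~\ref{rk-1}).

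The key observation is that the $\mathbb{F}_q$-row space of $\begin{pmatrix}\G_i\\ \HH_j\end{pmatrix}$ is exactly $\CC_i+\CC_j^M$. This holds for \emph{any} generator matrix $\G_i$ of $\CC_i$ and any parity-check matrix $\HH_j$ of $\CC_j$. Indeed, the rows of $\G_i$ span $\CC_i$, and the rows of $\HH_j$ span $\CC_j^M$, since $\HH_j$ is a generator matrix of $\CC_j^M$. Hence
\[
\mathrm{rk}_q\begin{pmatrix}\G_i\\ \HH_j\end{pmatrix}=\mathrm{rk}_q(\CC_i+\CC_j^M)=k+(nm-k)-\mathrm{rk}_q(\CC_i\cap\CC_j^M)=nm-\mathrm{rk}_q(\CC_i\cap\CC_j^M).
\]
Here we use $\mathrm{rk}_q(\CC_i)=k$, $|\CC_j^M|=q^{mn}/q^k$, and the Grassmann dimension formula transported through $\Phi$ as in Lemma~\ref{lm-3.1} and Remark~\ref{rk-1}. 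In particular, the choice of matrices in Theorem~\ref{th-3.3} is irrelevant.

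For $(i,j)=(1,2)$ the displayed identity gives $\mathrm{rk}_q(\CC_1\cap\CC_2^M)=\ell$ if and only if $\mathrm{rk}_q\begin{pmatrix}\G_1\\ \HH_2\end{pmatrix}=nm-\ell$. This is Theorem~\ref{th-3.32} for the backward direction and Theorem~\ref{th-3.3} for the forward direction, with the caveat about matrix choice now removed. For $(i,j)=(2,1)$ the same identity gives $\mathrm{rk}_q\begin{pmatrix}\G_2\\ \HH_1\end{pmatrix}=nm-\mathrm{rk}_q(\CC_2\cap\CC_1^M)$. We then invoke the earlier Corollary asserting $\mathrm{rk}_q(\CC_1\cap\CC_2^M)=\mathrm{rk}_q(\CC_2\cap\CC_1^M)$, which rests on Theorem~\ref{th-3.1}, to conclude that this quantity also equals $nm-\ell$. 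This settles both directions for both orderings.

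The main obstacle is the symmetric case $i=2$, $j=1$, since it genuinely requires the equality of the two intersection ranks. That equality is not a formal consequence of the sum formula. It depends on the Corollary, that is, on $\log_\xi(\G_1\odot_M\G_2^\top)$ and $\log_\xi(\G_2\odot_M\G_1^\top)$ having the same rank. This in turn holds when one matrix is, up to sign, the transpose of the other, which is the case when $M$ is symmetric or skew-symmetric. I would take that Corollary as given from the paper. If an independent check is wanted, I would verify it from $|\CC||\CC^M|=|G|^n$ applied to $\CC_1+\CC_2^M$, using $(\CC_1+\CC_2^M)^M=\CC_1^M\cap\CC_2^{MM}$ together with $\CC^{MM}=\CC$, which holds for these dualities.
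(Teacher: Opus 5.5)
This is essentially the paper's proof, which simply combines Theorems~\ref{th-3.3} and~\ref{th-3.32}, i.e.\ the same dimension count for $\CC_1+\CC_2^M$ transported through $\Phi$. Your version is tidier in observing that the row space of $\left(\begin{smallmatrix}\G_i\\ \HH_j\end{smallmatrix}\right)$ is $\CC_i+\CC_j^M$ for every choice of matrices, and more careful in noting that the ordering $(i,j)=(2,1)$ computes $nm-\mathrm{rk}_q(\CC_2\cap\CC_1^M)$, a point the paper passes over by calling that case ``identical''.

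The symmetric/skew-symmetric caveat you attach to the Corollary is genuinely needed. Take $\FF_4=\{0,1,\omega,\omega^2\}$ with $q=2$, $n=1$, $\CC_1=\{0,1\}$ and $\CC_2=\{0,\omega\}$. Use the legitimate duality $\chi_u(v)=(-1)^{B(u,v)}$, where $B$ is the nondegenerate $\FF_2$-bilinear form with Gram matrix $\left(\begin{smallmatrix}1&1\\0&1\end{smallmatrix}\right)$ in the basis $\{1,\omega\}$. Then $\CC_2^M=\{0,\omega^2\}$ and $\CC_1^M=\{0,\omega\}$, hence $\mathrm{rk}_q(\CC_1\cap\CC_2^M)=0\neq 1=\mathrm{rk}_q(\CC_2\cap\CC_1^M)$. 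So the $(2,1)$ half of the statement is false for general $M$, and your argument, like any argument, requires that hypothesis.
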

\begin{proof}
	The result follows immediately by applying Theorems \ref{th-3.3} and \ref{th-3.32}.   
\end{proof}
\begin{cor}
	For $i=1,2$, let $\CC_i$ be an additive $(n, q^k)$ code over $\FF_{q^m}$ with a generator matrix $\mathcal{G}_i$ and a parity check matrix $\HH_i$. Then the pair $\{\CC_1,\CC_2^M\}$ is ACP if and only if for $i\neq j$, the matrix $\left(\begin{array}{c}
		\G_i \\
		\HH_j
	\end{array}\right)$ is invertible.    
\end{cor}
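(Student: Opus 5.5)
The plan is to read off the corollary as the case $\ell=0$ of the preceding theorem, which states that $\mathrm{rk}_q(\CC_1\cap\CC_2^M)=\ell$ if and only if $\mathrm{rk}_q\begin{pmatrix}\G_i\\ \HH_j\end{pmatrix}=nm-\ell$ for $i\neq j$. The only additional work is to interpret both sides of the ACP condition correctly.

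First I fix the sizes. The matrix $\G_i$ has $k$ rows and spans $\CC_i$. The matrix $\HH_j$ generates $\CC_j^M$, which has cardinality $q^{nm-k}$ because $|\CC_j||\CC_j^M|=q^{mn}$; so $\HH_j$ has $nm-k$ rows. Hence the stacked matrix $\begin{pmatrix}\G_i\\ \HH_j\end{pmatrix}$ has exactly $nm$ rows. After passing to coordinates over $\FF_q$ via $\Phi^{-1}$ from \eqref{eq-mapping}, it is a square $nm\times nm$ matrix over $\FF_q$. It is therefore invertible if and only if its $\FF_q$-rank equals $nm$. This is the sense of ``invertible'' I would use, as the only meaningful one.

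Next I unpack the ACP condition. By definition, $\{\CC_1,\CC_2^M\}$ is an ACP exactly when $|\CC_1||\CC_2^M|=q^{mn}$ and $\CC_1\cap\CC_2^M=\{\mathbf 0\}$. The cardinality condition holds automatically: $|\CC_1|=q^k$ and $|\CC_2^M|=q^{mn-k}$, since both codes have size $q^k$. So the ACP condition reduces to $\mathrm{rk}_q(\CC_1\cap\CC_2^M)=0$.

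Applying the preceding theorem with $\ell=0$, this is equivalent to $\mathrm{rk}_q\begin{pmatrix}\G_i\\ \HH_j\end{pmatrix}=nm$, that is, invertibility. To cover both orderings $i\neq j$ at once, I would invoke the corollary $\mathrm{rk}_q(\CC_1\cap\CC_2^M)=\mathrm{rk}_q(\CC_2\cap\CC_1^M)$. Alternatively, one can apply Theorem~\ref{th-3.32} directly, using the identity $\mathrm{rk}_q(\CC_i+\CC_j^M)=nm-\mathrm{rk}_q(\CC_i\cap\CC_j^M)$ from Remark~\ref{rk-1}.

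The main obstacle is the ``only if'' direction. Theorem~\ref{th-3.3} only asserts the rank equality for some choice of $\G_i$ and $\HH_j$. I would note that the rank of the stacked matrix equals $\mathrm{rk}_q(\CC_i+\CC_j^M)$, which does not depend on the chosen generator and parity-check matrices. So invertibility holds for every such choice once it holds for one, and the corollary follows.
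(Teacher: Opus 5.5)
Your argument for the ordering $(i,j)=(1,2)$ is the paper's argument: the case $\ell=0$ of the equivalence assembled from Theorems~\ref{th-3.3} and~\ref{th-3.32}. It is in fact tighter than the paper's. You check that $|\CC_1||\CC_2^M|=q^{mn}$ holds automatically. You also note that $\mathrm{rk}_q\begin{pmatrix}\G_1\\ \HH_2\end{pmatrix}=\mathrm{rk}_q(\CC_1+\CC_2^M)=nm-\mathrm{rk}_q(\CC_1\cap\CC_2^M)$ for every choice of $\G_1,\HH_2$, which removes the existential quantifier in Theorem~\ref{th-3.3}.

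The step that fails in the generality of the paper is the passage to $(i,j)=(2,1)$. That stacked matrix has rank $nm-\mathrm{rk}_q(\CC_2\cap\CC_1^M)$, so you need $\mathrm{rk}_q(\CC_2\cap\CC_1^M)=\mathrm{rk}_q(\CC_1\cap\CC_2^M)$. The corollary you cite for this rests on the claim that $\log_{\xi}(\G_1\odot_M\G_2^\top)$ and $\log_{\xi}(\G_2\odot_M\G_1^\top)$ have equal rank. That claim is automatic when one matrix is $\pm$ the transpose of the other, i.e.\ when $M$ is symmetric or skew-symmetric, but it fails for a general duality. Your alternative via Theorem~\ref{th-3.32} and Remark~\ref{rk-1} has the same dependence.

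Here is a counterexample. Take $q=m=2$ and $n=1$, write $u=u_1+u_2\omega\in\FF_4$ with $u_1,u_2\in\FF_2$, and set $\chi_u(v)=(-1)^{u_1v_1+u_1v_2+u_2v_2}$; this is a valid, non-symmetric duality. For $\CC_1=\{0,1\}$ and $\CC_2=\{0,\omega\}$ one gets $\CC_2^M=\{0,\omega^2\}$. So $\{\CC_1,\CC_2^M\}$ is an ACP and $\begin{pmatrix}\G_1\\ \HH_2\end{pmatrix}$ is invertible. However $\CC_1^M=\{0,\omega\}=\CC_2$, so $\begin{pmatrix}\G_2\\ \HH_1\end{pmatrix}$ has rank $1$.

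Thus a correct statement either refers only to $\begin{pmatrix}\G_1\\ \HH_2\end{pmatrix}$, for which your proof is complete, or assumes $M$ symmetric or skew-symmetric. This defect is inherited from the paper, whose Theorems~\ref{th-3.3} and~\ref{th-3.32} treat only $(1,2)$ and call the other ordering identical; you did not introduce it.
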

\begin{cor}
	Let $\CC$ be an additive $(n, q^k)$ code over $\FF_{q^m}$ with a generator matrix $\mathcal{G}$ and parity check matrix $\HH$. Then the code $\CC$ is ACD if and only if the matrix $\left(\begin{array}{c}
		\G \\
		\HH
	\end{array}\right)$ is invertible.    
\end{cor}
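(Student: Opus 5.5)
The statement is the special case $\CC_1=\CC_2=\CC$ of the preceding corollary. The plan is to reduce to it and then check what ``invertible'' means for the stacked matrix.

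\textbf{Step 1: identify ACD with a $0$-RAIP.} By definition, $\CC$ is ACD exactly when $\CC\cap\CC^M=\{\mathbf{0}\}$, that is, when $\mathrm{rk}_q(\CC\cap\CC^M)=0$. Since $\CC$ is an $(n,q^k)$ code, we have $|\CC||\CC^M|=q^{mn}=|\FF_{q^m}^n|$ by the cardinality identity $|\CC||\CC^M|=|G|^n$. Hence the cardinality requirement in the definition of an ACP holds automatically. So $\CC$ is ACD if and only if the pair $\{\CC,\CC^M\}$ is an ACP, which is the case $\CC_1=\CC_2=\CC$ of the previous corollary.

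\textbf{Step 2: apply the rank characterization.} Take $\G_1=\G_2=\G$ and $\HH_1=\HH_2=\HH$ in the theorem just proved, i.e.\ the equivalence combining Theorems~\ref{th-3.3} and \ref{th-3.32}, with $\ell=0$. This gives $\mathrm{rk}_q(\CC\cap\CC^M)=0$ if and only if
$$\mathrm{rk}_q\begin{pmatrix}\G\\ \HH\end{pmatrix}=nm .$$
The stacked matrix has $k+(nm-k)=nm$ rows. After applying $\Phi^{-1}$ row by row, it is an $nm\times nm$ matrix over $\FF_q$. Full $\FF_q$-rank is then the same as invertibility, in the sense the paper uses in the preceding corollary.

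\textbf{Step 3: justify the direction from ACD to full rank directly.} Theorem~\ref{th-3.3} only asserts the rank condition for \emph{some} choice of generator and parity-check matrices, so I will not rely on it alone. Instead I argue from Remark~\ref{rk-1}. For any generator matrix $\G$ of $\CC$ and any parity-check matrix $\HH$ (a generator matrix of $\CC^M$), the row space of the stack is $\CC+\CC^M$. Its rank is
$$k+(nm-k)-\mathrm{rk}_q(\CC\cap\CC^M),$$
which is independent of the choice of matrices. Therefore the rank equals $nm$ exactly when the intersection is trivial.

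The only delicate point is this independence from the chosen matrices, together with reading ``invertible'' through the coordinate isomorphism $\Phi$. Once the rank is expressed via $\mathrm{rk}_q(\CC+\CC^M)$, both directions follow immediately.
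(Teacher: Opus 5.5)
Your proposal is correct and follows the paper's route: the corollary is the case $\CC_1=\CC_2=\CC$, $\ell=0$ of the rank characterization obtained from Theorems~\ref{th-3.3} and~\ref{th-3.32}, and your Step~3 is exactly the dimension count $\mathrm{rk}_q(\CC+\CC^M)=k+(nm-k)-\mathrm{rk}_q(\CC\cap\CC^M)$ used in the proof of Theorem~\ref{th-3.32}. You also make explicit two points the paper leaves implicit: this count holds for every choice of $\G$ and $\HH$, not only the special choice in Theorem~\ref{th-3.3}, and ``invertible'' should be read as invertibility of the $nm\times nm$ matrix over $\FF_q$ obtained via $\Phi^{-1}$.
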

Any \( n \times n \) monomial matrix can be expressed as \( \mathcal{M} = DP \), where \( D = \textit{diag}(d_1, d_2, \ldots, d_n) \) is a diagonal matrix with \( d_i \in \mathbb{F}_{q^m}^* \) for \( i = 1, 2, \ldots, n \), and \( P \) is an \( n \times n \) permutation matrix. A code \( \mathcal{C} \) is said to be monomially equivalent to another code \( \mathcal{C}' \) if there exists a monomial matrix \( \mathcal{M} \) such that \( \mathcal{D} = \mathcal{C} \mathcal{M} \).

\begin{thm}\label{th-3.4}
	For $i=1,2$, let $C_i$ be linear $[n,k_i]$ codes over $\mathbb{F}_q$ with $q \geq 3$ and $k_1 + k_2 = n$. If $\dim(C_1 \cap C_2) = l$, then the pair $\{C_1, C_2\}$ is monomially equivalent to an $\ell$-DLIP of codes, where $0 \leq \ell \leq l$.
\end{thm}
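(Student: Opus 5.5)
We interpret the statement as follows: given $\{C_1,C_2\}$ with $k_1+k_2=n$ and $\dim(C_1\cap C_2)=l$, for every $0\le \ell\le l$ there is a monomial matrix $\mathcal{M}=DP$ (in fact we take $P=I$) such that $\dim(C_1\cap C_2\mathcal{M})=\ell$. By Proposition~\ref{p-2.1}(b) this amounts to $\mathrm{rk}\begin{pmatrix}\G_1\\ \G_2 D\end{pmatrix}=n-\ell$. The plan is to reduce $l$ one step at a time: we show that whenever $\dim(C_1\cap C_2)=l\ge 1$, some diagonal $D$ gives $\dim(C_1\cap C_2D)=l-1$. Iterating this (composing the diagonal matrices) reaches every value $\ell$ between $0$ and $l$, and $\ell=l$ is given by the identity.

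\emph{Normal form.} By row operations we write $\G_1=\begin{pmatrix}A\\ B_1\end{pmatrix}$ and $\G_2=\begin{pmatrix}A\\ B_2\end{pmatrix}$, where the rows of $A$ form a basis of $C_1\cap C_2$. Then the rows of $\begin{pmatrix}A\\ B_1\\ B_2\end{pmatrix}$ are independent, so they span a subspace $W$ of dimension $n-l$. Because $A\neq 0$, some coordinate $j$ is not identically zero on $C_1\cap C_2$. We take $D=\mathrm{diag}(1,\dots,\lambda,\dots,1)$ with $\lambda\neq 0,1$ in position $j$; such $\lambda$ exists since $q\ge 3$. Then $\G_2D=\G_2+(\lambda-1)\G_2E_{jj}$, where $E_{jj}$ is the matrix unit, so each row $\mathbf{g}$ of $\G_2$ is changed only by $(\lambda-1)g_j\mathbf{e}_j$.

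\emph{Rank computation.} Row-reducing $\begin{pmatrix}\G_1\\ \G_2D\end{pmatrix}$ against $\G_1$, the $A$-part of $\G_2D$ becomes $(\lambda-1)A\mathbf{e}_j$-type columns. These contribute exactly the vector $\mathbf{e}_j$ to the span, because some row of $A$ has a nonzero $j$-th entry. So the row space contains $W+\langle \mathbf{e}_j\rangle$, and we must show two things. First, $\mathbf{e}_j\notin W$: if $\mathbf{e}_j\in W$, we instead choose the coordinate $j$ differently, or argue that not every coordinate supported on $C_1\cap C_2$ can lie in $W$, since otherwise $W$ would contain $\mathrm{supp}(C_1\cap C_2)$. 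Second, the remaining rows $B_2D$ do not contribute beyond $W+\langle\mathbf{e}_j\rangle$, which holds because $B_2D\equiv B_2 \pmod{\mathbf{e}_j}$. Together these give rank exactly $n-l+1$, i.e.\ $\dim(C_1\cap C_2D)=l-1$.

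\emph{Main obstacle.} The delicate step is guaranteeing a coordinate $j$ with $(C_1\cap C_2)_j\neq 0$ and $\mathbf{e}_j\notin W$. If every such $\mathbf{e}_j$ lay in $W$, we would try to use several coordinates at once, scaling by distinct values $\lambda_j$. Alternatively, we would count: the set of bad $D\in(\FF_q^*)^n$ for which the rank fails to rise is a proper subvariety of a torus, and for $q\ge3$ the torus $(\FF_q^*)^n$ is large enough to avoid it. We would first try the single-coordinate argument, using the hypothesis $k_1+k_2=n$ together with the fact that $W$ is a proper subspace of dimension $n-l$, so that $W$ cannot contain all the $\mathbf{e}_j$ needed to span $\mathrm{supp}(A)$ modulo $W$.
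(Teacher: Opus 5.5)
Your single-coordinate computation is correct as far as it goes. If some coordinate $j$ has $(C_1\cap C_2)_j\neq 0$ and $\mathbf{e}_j\notin W=C_1+C_2$, then replacing $C_2$ by $C_2D$ with $D=\mathrm{diag}(1,\dots,\lambda,\dots,1)$, $\lambda\neq 0,1$, lowers the intersection dimension by exactly one.

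\textbf{The gap.} The gap is the existence of such a $j$, and it cannot be filled while you insist on $P=I$: diagonal scalings alone do not suffice. Take $C_1=C_2=\langle(1,0)\rangle\subseteq\FF_q^2$. Then $k_1+k_2=n$ and $l=1$, but $C_2D=C_2$ for every invertible diagonal $D$, so $\dim(C_1\cap C_2D)=1$ always. The only coordinate in the support of $C_1\cap C_2$ is $j=1$, and $\mathbf{e}_1\in W$. Here the set of ``bad'' $D$ is the whole torus. So the multi-coordinate and proper-subvariety fallbacks fail too, and your claim that the proper subspace $W$ cannot contain every $\mathbf{e}_j$ with $j$ in the support of $C_1\cap C_2$ is false. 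Zero coordinates are not the issue. For $C_1=C_2=\langle(1,1,1,0),(0,0,0,1)\rangle\subseteq\FF_q^4$, every $C_1+C_2D$ has dimension at most $3$, since on the first three coordinates it is spanned by $(1,1,1)$ and $(d_1,d_2,d_3)$. So no diagonal $D$ reaches $\ell=0$. In both examples a transposition is required, which is why the statement is phrased with monomial equivalence. The paper itself gives no argument beyond deferring to the relative-hull reduction of Anderson et al.~\cite{ACLMRS24}.

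\textbf{A repair.} First obtain $\ell=0$ using a genuine permutation, then interpolate.
\begin{enumerate}
\item Let $S$ be an information set of $C_1$. Let $P$ be a permutation matrix carrying an information set of $C_2$ onto $\bar S=\{1,\dots,n\}\setminus S$; this is possible since $k_2=n-k_1$.
\item Expand $f(d_1,\dots,d_n)=\det\begin{pmatrix}\G_1\\ \G_2P\,\mathrm{diag}(d_1,\dots,d_n)\end{pmatrix}$ by Laplace along the first $k_1$ rows. Then $f$ is multilinear, and its coefficient of $\prod_{i\in\bar S}d_i$ is $\pm\det(\G_1|_S)\det((\G_2P)|_{\bar S})\neq 0$, where $|_S$ denotes restriction to the columns indexed by $S$.
\item A nonzero multilinear polynomial cannot vanish on all of $(\FF_q^*)^n$ when $q-1\ge 2$; induct on the number of variables. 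This is exactly where $q\ge 3$ enters. Hence some $\mathcal{M}=PD$ gives $C_1\cap C_2\mathcal{M}=\{\mathbf{0}\}$.
\item Write $\mathcal{M}$ as a product of transpositions and single-coordinate scalings. Each such factor $E$ fixes pointwise a subspace of codimension at most one in the current code $U$, namely $\{\mathbf{u}\in U: u_i=u_j\}$ or $\{\mathbf{u}\in U: u_j=0\}$. Hence $|\dim(C_1\cap UE)-\dim(C_1\cap U)|\le 1$.
\item Along the resulting chain from $C_2$ to $C_2\mathcal{M}$, the intersection dimension moves from $l$ to $0$ in steps of size at most one. So every $\ell$ with $0\le\ell\le l$ is attained by some monomial matrix, namely a partial product of the chain.
\end{enumerate}
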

\begin{proof}
	Working in a similar manner as in Theorem 3.3 of Anderson et al. \cite{ACLMRS24}, the desired result follows.    
\end{proof}

The following theorem will characterize the ACP of codes.

\begin{thm}
	For $i=1,2$, let $\CC_i$ be an additive $(n, q^k)$ code over $\FF_{q^m}$, where $q \geq 3$. If $\mathrm{rk}_q\left(\CC_1\cap\CC_2^M\right) = l$, then the pair $\{\CC_1,\CC_2^M\}$ is monomially equivalent to an $\ell$-RAIP of codes, where $0 \leq \ell \leq l$.
	
\end{thm}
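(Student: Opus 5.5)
The plan is to reduce the statement to its linear counterpart, Theorem~\ref{th-3.4}, via the coordinate isomorphism $\Phi$ of \eqref{eq-mapping}, and then transport the result back.

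\textbf{Step 1: pass to linear codes.} Set $C_1=\Phi^{-1}(\CC_1)$ and $C_2=\Phi^{-1}(\CC_2^M)$. These are linear codes over $\FF_q$ of length $mn$, with dimensions $k$ and $mn-k$ respectively. Hence $k_1+k_2=mn$, which is exactly the hypothesis needed in Theorem~\ref{th-3.4}. By Lemma~\ref{lm-3.1} and Remark~\ref{rk-1}, we also have
\[
\dim(C_1\cap C_2)=\mathrm{rk}_q(\CC_1\cap\CC_2^M)=l .
\]

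\textbf{Step 2: apply the linear result.} Since $q\ge 3$, Theorem~\ref{th-3.4} applies. For each $\ell$ with $0\le \ell\le l$, it gives an $mn\times mn$ monomial matrix $\mathcal M$ over $\FF_q$ such that $\dim(C_1\mathcal M\cap C_2\mathcal M')=\ell$. Which of the two codes $\mathcal M$ acts on should be read off from the Anderson et al.\ construction. I expect $\mathcal M=\mathrm{diag}$ scaling, acting on one code only, with entries from $\FF_q^*\setminus\{1\}$; this is where $q\ge3$ is used.

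\textbf{Step 3: transport back.} Define $\CC_1'=\Phi(C_1\mathcal M)$, and similarly for the second code. Because $\Phi$ is an $\FF_q$-isomorphism, Lemma~\ref{lm-3.1} applied in reverse gives $\mathrm{rk}_q(\CC_1'\cap\CC_2'^{M})=\ell$.

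\textbf{Main obstacle.} The hard step is justifying that this is a \emph{monomial equivalence} of additive codes over $\FF_{q^m}$, in the sense defined just before Theorem~\ref{th-3.4}. A monomial map on $\FF_q^{mn}$ does not in general correspond to a monomial map on $\FF_{q^m}^n$. A permutation can scramble the $m$ coordinate blocks, and a scaling by distinct $\FF_q$-scalars inside one block is not multiplication by a single element of $\FF_{q^m}^*$.

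To handle this, I would restrict the monomial matrices used in the Anderson et al.\ argument to the block form $\mathcal M=\mathrm{diag}(d_1I_m,\dots,d_nI_m)(P\otimes I_m)$ with $d_i\in\FF_q^*$. Such a matrix corresponds under $\Phi$ to the monomial matrix $\mathrm{diag}(d_1,\dots,d_n)P$ over $\FF_{q^m}$. I would then check that the scaling argument still works when scalars are applied coordinatewise in $\FF_{q^m}^n$. The idea is to repeatedly rescale a single coordinate $i$ of one code by some $d\ne 1$ so as to decrease the intersection rank by at most one at each step. Since the intersection rank changes by at most the number of affected $\FF_q$-coordinates, it may drop by more than one in a single step. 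If so, I would instead use scalars $d\in\FF_{q^m}^*$ and an intermediate-value argument over a chain of such scalings to hit every value $\ell\le l$.

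If the second code must also be kept of the form $\CD^M$, I would use that $(\CC_2\mathcal M)^M=\CC_2^M\mathcal M'$ for a suitable monomial $\mathcal M'$ when $M$ is compatible with scaling. Failing that, I would interpret the conclusion as an equivalence of the pair $\{\CC_1,\CC_2^M\}$ directly.
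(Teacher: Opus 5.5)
Your Steps 1--3 are exactly the paper's proof: it sets $C_1=\Phi^{-1}(\CC_1)$ and $C_2=\Phi^{-1}(\CC_2^M)$, and notes that these are $[mn,k]$ and $[mn,mn-k]$ codes with $\dim(C_1\cap C_2)=l$ by Remark~\ref{rk-1}. It then invokes Theorem~\ref{th-3.4} and stops. The obstacle you raise is genuine, and the paper never addresses it. It tacitly reads an $\FF_q$-monomial equivalence in $\FF_q^{mn}$ as a monomial equivalence of additive codes, although monomial matrices were defined as $n\times n$ matrices $DP$ with $d_i\in\FF_{q^m}^*$.

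\textbf{The gap.} Your repair cannot be completed, because under that definition the statement is false for intermediate $\ell$. Take the following data:
\begin{itemize}
\item[] $q=3$ and $m=n=2$;
\item[] $M$ the trace duality $\chi_u(v)=\xi^{\mathrm{Tr}(uv)}$, with $\mathrm{Tr}:\FF_9\to\FF_3$ and $\xi$ a primitive cube root of unity;
\item[] $\CC_1=\FF_9(1,1)$ and $\CC_2=\FF_9(1,-1)$.
\end{itemize}
Both are $(2,3^2)$ additive codes. The trace dual of an $\FF_9$-linear code is its Euclidean dual, so $\CC_2^M=\FF_9(1,1)$ and $l=2$. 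Every $\FF_9$-monomial image of these codes, and every $M$-dual of such an image, is again $\FF_9$-linear. Hence every intersection you can produce is an $\FF_9$-subspace, with even $\FF_3$-rank, and $\ell=1$ is never reached.

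This defeats both versions of your plan. With scalars in $\FF_{q^m}^*$, intersections of $\FF_{q^m}$-linear codes have rank a multiple of $m$, so no intermediate-value chain can hit every value. Block scalars $d_i\in\FF_q^*$ are a special case of this, and for $n=1$ they fix every additive code.

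\textbf{What Steps 1--3 do prove.} They prove the statement for the coarser equivalence transported through $\Phi$: $\CC\sim\CC'$ if and only if $\Phi^{-1}(\CC')=\Phi^{-1}(\CC)\mathcal{M}$ for some $mn\times mn$ monomial matrix $\mathcal{M}$ over $\FF_q$. When $\mathcal{M}$ is diagonal, as a scaling argument produces, the induced map acts on each coordinate of $\FF_{q^m}^n$ by an $\FF_q$-linear bijection of $\FF_{q^m}$. It is therefore still a Hamming isometry, but not an $\FF_{q^m}$-monomial map.

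In the example, $\Phi^{-1}(\CC_1)=\Phi^{-1}(\CC_2^M)=\{(a,b,a,b): a,b\in\FF_3\}$. Negating the first $\FF_3$-coordinate of the first code gives $\{(-a,b,a,b)\}$, whose intersection with the second is $\{(0,b,0,b)\}$. So $\ell=1$ is attained under this equivalence. The right fix is to drop your repair and state the theorem for this weaker equivalence, which is the one the paper's proof implicitly uses.

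A minor slip: in Step 3 the transported code $\Phi(C_2\mathcal{M}')$ replaces $\CC_2^M$ itself. The rank to compute is therefore that of $\CC_1'\cap\Phi(C_2\mathcal{M}')$, not of $\CC_1'\cap(\CC_2')^{M}$.
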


\begin{proof}
	To demonstrate the result, we first define \( C_1 = \Phi^{-1}(\mathcal{C}_1) \) and \( C_2 = \Phi^{-1}(\mathcal{C}_2^M) \). It is essential to note that \( C_1 \) is a linear \([nm, k]\) code, whereas \( C_2 \) is a linear \([mn, mn-k]\) code over \(\mathbb{F}_q\). Based on Remark~\ref{rk-1}, we can conclude that \( \dim(C_1 \cap C_2) = l \). Then, by applying Theorem~\ref{th-3.4}, we arrive at our desired result.
\end{proof}

\begin{remark}
	As a result, we conclude that for \( i = 1, 2 \), let \( \mathcal{C}_i \) be an additive \( (n, q^k) \) code over \( \mathbb{F}_{q^m} \) with \( q \geq 2 \). The pair \( \{\mathcal{C}_1, \mathcal{C}_2^M\} \) is monomially equivalent to an ACP of codes.
	
\end{remark}

In the upcoming sequel, we address the equivalence problems for ACD codes. Specifically, we present an efficient characterization of a given code to determine if it is monomially equivalent to an ACD code.

Note that for any nonzero scalar $b \in \mathbb F_q^*$,
\[
\chi_{{b}^{-1}{u}}({b}{v})
= \chi_{{u}}({v}).
\]

\begin{thm}\label{th-3.5}
	Let $\mathcal{C}$ be an additive $(n, q^k)$ code over $\mathbb{F}_{q^m}$ with $q \geq 4$. Then $\mathcal{C}$ is monomially equivalent to an additive complementary dual (ACD) code.
\end{thm}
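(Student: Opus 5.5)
The plan is to reduce the statement to the invertibility criterion of the corollary that characterizes ACD codes: $\mathcal{C}$ is ACD if and only if $S=\log_{\xi}\left(\mathcal{G}\odot_M\mathcal{G}^\top\right)$ is invertible over $\mathbb{F}_q$. This is modeled on the argument of Carlet et al.\ that, for $q\geq 4$, every linear code is equivalent to an LCD code. Throughout I assume that the pairing $B(u,v)=\log_{\xi}\chi_u(v)$ is $\mathbb{F}_q$-bilinear on $\mathbb{F}_{q^m}$. This holds for the trace-type dualities, and it is what the displayed identity $\chi_{b^{-1}u}(bv)=\chi_u(v)$ before the statement encodes. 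If this hypothesis fails, the argument below does not go through as written.

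First, since a permutation $P$ only permutes coordinates and $\langle\cdot,\cdot\rangle_M$ is a coordinatewise product, $\mathcal{C}P$ is ACD if and only if $\mathcal{C}$ is. So it suffices to find a diagonal $D=\mathrm{diag}(d_1,\dots,d_n)$ such that $\mathcal{C}D$ is ACD. The matrix $\mathcal{G}D$ generates $\mathcal{C}D$, and its Gram matrix is
$$S(D)=\sum_{s=1}^{n}\Bigl[B\bigl(d_s g_{is},\,d_s g_{js}\bigr)\Bigr]_{i,j}.$$
Taking $d_s\in\mathbb{F}_q^*$ and using bilinearity, this becomes $S(D)=\sum_s \lambda_s S_s$ with $\lambda_s=d_s^2$, where $S_s=\bigl[B(g_{is},g_{js})\bigr]_{i,j}$ is the contribution of coordinate $s$. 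Hence $f(\lambda_1,\dots,\lambda_n)=\det S(D)$ is a polynomial over $\mathbb{F}_q$, and the goal is to show that $f$ does not vanish at some point with every $\lambda_s$ a nonzero square.

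Second, I will show that $f$ is not the zero polynomial. For this I pass to $\Phi^{-1}(\mathcal{C})$ via Remark~\ref{rk-1} and choose a generator matrix in which a set of coordinates carries an identity-type block. Then, specializing the $\lambda_s$ outside a suitable support to $0$, the determinant reduces to that of a nonsingular block, so $f\not\equiv 0$.

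Third, comes the evaluation step, which I expect to be the main obstacle. In the linear case each $S_s$ has rank $1$, so $f$ has degree at most $1$ in each $\lambda_s$. Then two distinct admissible values of each $\lambda_s$ suffice, and $\mathbb{F}_q^{*2}$ has $(q-1)/2\geq 2$ elements for odd $q\geq 5$; for even $q$ every element is a square. Here, however, $S_s$ can have rank up to $m$, so $f$ has degree at most $m$ in each variable. The Schwartz--Zippel/Combinatorial Nullstellensatz argument would then need more than $m$ admissible values, and squares in $\mathbb{F}_q^*$ may be too few.

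To handle this, I would try first to allow $d_s\in\mathbb{F}_{q^m}^*$, as monomial equivalence permits. I would then treat $S(D)$ as a polynomial in the $\mathbb{F}_q$-coordinates of $d_s$, exploiting that $u\mapsto d_s u$ is an $\mathbb{F}_q$-linear automorphism of $\mathbb{F}_{q^m}$. This gives a far larger parameter space, of size $q^m-1\geq 3^m>m$ per coordinate. Alternatively, I would argue coordinate by coordinate, fixing all but one $d_s$ and inducting on $n$. In that approach the case $q=3$, where the only nonzero square in $\mathbb{F}_3$ is $1$, marks where the hypothesis $q\geq 4$ is needed. Once a nonvanishing point is found, the corollary yields that $\mathcal{C}DP$ is ACD.
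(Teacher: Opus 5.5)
\textbf{The proposal has a genuine gap at Step 2.}

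Your Step 2 is where the argument breaks, and it is the crux of the proof, not a technicality. The ``identity-block'' specialization is the linear-code argument, where each coordinate contributes a rank-one term. Here the situation is different. The $m$ $\mathbb{F}_q$-coordinates of the $s$-th $\mathbb{F}_{q^m}$-coordinate share the single parameter $\lambda_s=d_s^2$. They contribute $\lambda_s S_s$ with $S_s=A_s\beta A_s^\top$, where $A_s$ is the $k\times m$ coordinate matrix of column $s$ and $\beta$ is the Gram matrix of $B$. Since $S_s$ may be singular or even zero, specializing the $\lambda_s$ need not leave a nonsingular block.

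Concretely, take the trace pairing with $p\mid m$ and $\mathcal{C}=\mathbb{F}_q^n\subseteq\mathbb{F}_{q^m}^n$. Then $\mathrm{Tr}_{\mathbb{F}_{q^m}/\mathbb{F}_q}(uv)=m\,uv=0$ for $u,v\in\mathbb{F}_q$, so every $S_s=0$ and $f\equiv 0$. In fact $\mathcal{C}D=\mathcal{C}$ for every diagonal $D$ over $\mathbb{F}_q^*$, and for $n=1$ every $\mathbb{F}_q$-linear $\mathcal{C}$ is fixed by such scalings. So allowing scalars $d_s\in\mathbb{F}_{q^m}^*$ is not an optional enlargement for Step 3. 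You need it already to make $\det S(D)$ not identically zero, and you give no argument for that.

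Bilinearity alone cannot supply such an argument. Suppose $B$ is skew-symmetric, e.g.\ $q=5$, $m=2$, $B(u,v)=u_1v_2-u_2v_1$ in coordinates. This form is bilinear and nondegenerate, and it satisfies $\chi_{b^{-1}u}(bv)=\chi_u(v)$. For odd $k$, every Gram matrix $S(D)$ is then skew-symmetric of odd order, hence singular for every monomial map. This is exactly Theorem~\ref{th-3.8}. For instance, $\mathbb{F}_5u\subseteq\mathbb{F}_{25}$ (with $u\neq 0$, $n=1$) is equivalent to no ACD code. So the statement itself needs a symmetry (or at least non-alternating) hypothesis on $M$, and your plan never uses one.

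Step 3 is also left open. You name two strategies but carry out neither. A naive Schwartz--Zippel count in the $\mathbb{F}_q$-coordinates of $d_s$ faces a degree bounded only by $2m$ per coordinate block, which $q\ge 4$ does not dominate, and it must also respect $d_s\neq 0$.

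For comparison, the paper rescales $l$ hull coordinates of $\Phi^{-1}(\mathcal{C})\subseteq\mathbb{F}_q^{mn}$ by $\lambda_i$ with $\lambda_i^2\neq 1$ and pulls back through $\Phi$. However, for $m\ge 2$, rescaling individual $\mathbb{F}_q$-coordinates is in general not induced by a monomial map of $\mathbb{F}_{q^m}^n$. This is the same block-structure obstruction, so that route does not supply your missing step either.
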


\begin{proof}
	If $\mathrm{rk}_q\left(\mathcal{C} \cap \mathcal{C}^M\right) = 0$, the code is an ACD code by definition. Assume $\mathrm{rk}_q\left(\mathcal{C} \cap \mathcal{C}^M\right) = l > 0$. 
	
	Let $C = \Phi^{-1}(\mathcal{C})$ and $C^\perp = \Phi^{-1}(\mathcal{C}^M)$ be the corresponding preimages under the $\mathbb{F}_q$-linear vector space isomorphism $\Phi$. By the property of the mapping, $C$ is a linear $[mn, k]$ code and $C^\perp$ is a linear $[mn, mn-k]$ code over $\mathbb{F}_q$. From Remark \ref{rk-1}, the dimension of their intersection satisfies
	$$ \dim_q(C \cap C^\perp) = \mathrm{rk}_q\left(\mathcal{C} \cap \mathcal{C}^M\right) = l. $$
	
	Choose a basis of $C\cap C^\perp$ and extend to bases of $C$ and $C^\perp$. Consider generator matrices of $C$ and $C^\perp$ are
	\[
	G_1=\begin{pmatrix}
		I_l & 0 & P_1\\
		0 & I_{k-l} & P_2
	\end{pmatrix},\qquad
	G_2=\begin{pmatrix}
		I_l & 0 & P_1\\
		B_1 & B_2 & B_3
	\end{pmatrix},
	\]
	where $P_1\in\mathbb{F}_q^{l\times(mn-k)}$, $P_2\in\mathbb{F}_q^{(k-l)\times(mn-k)}$, $B_1\in\mathbb{F}_q^{(mn-k-l)\times l}$, $B_2\in\mathbb{F}_q^{(mn-k-l)\times(k-l)}$, $B_3\in\mathbb{F}_q^{(mn-k-l)\times(mn-k)}$. The first $l$ rows of both matrices form a basis of $C\cap C^\perp$.
	
	Let $\lambda_1,\dots,\lambda_l\in\mathbb{F}_q^*$ satisfy $\lambda_i^2\neq1$. Define the diagonal matrix
	\[
	D=\operatorname{diag}(\lambda_1,\dots,\lambda_l,1,\dots,1)\in\mathbb{F}_q^{mn\times mn}.
	\]
	Then $DC$ and $D^{-1}C^\perp$ have generator matrices
	\[
	G_1'=\begin{pmatrix}
		\Lambda & 0 & P_1\\
		0 & I_{k-l} & P_2
	\end{pmatrix},\qquad
	G_2'=\begin{pmatrix}
		\Lambda^{-1} & 0 & P_1\\
		\Lambda^{-1}B_1 & B_2 & B_3
	\end{pmatrix},
	\]
	where $\Lambda=\operatorname{diag}(\lambda_1,\dots,\lambda_l)$ and $\Lambda^{-1}=\operatorname{diag}(\lambda_1^{-1},\dots,\lambda_l^{-1})$.
	
	Consider the $mn\times mn$ matrix
	\[
	\mathcal{M}=\begin{pmatrix}G_1'\\ G_2'\end{pmatrix}.
	\]
	Hence
	\[
	\operatorname{rk}(\mathcal{M})=\operatorname{rk}\begin{pmatrix}
		\Lambda-\Lambda^{-1} & 0 & 0\\
		0 & I_{k-l} & P_2\\
		\Lambda^{-1}B_1 & B_2 & B_3
	\end{pmatrix}.
	\]
	
	{
		The determinant of $\mathcal{M}$ is nonzero if and only if
		$\lambda_i-\lambda_i^{-1}\neq 0,$ for all $ 1\le i\le l,$
		or equivalently,
		$\lambda_i^2\neq 1,
		\text{ for all } 1\le i\le l.$
		Over a finite field $\mathbb{F}_q$, the equation
		$x^2-1=0$
		has at most two solutions, namely $x=1$ and $x=-1$. Therefore, each $\lambda_i$ must be chosen from $\mathbb{F}_q^\ast\setminus\{1,-1\}$. Since $|\mathbb{F}_q^\ast|=q-1$, the existence of such choices requires
		$q-1>2,$
		which is equivalent to
		$q>3.$}
	
	As \( q \geq 4 \), we choose \( \lambda_i \in \mathbb{F}_q^* \) such that \( \lambda_i^2 \neq 1 \) for all \( 1 \leq i \leq l \) such that
	\[
	\mathrm{rk}_q\left(\begin{array}{c} 
		\mathcal{G}_{\mathbf{a}}\\
		\mathcal{G}_{\mathbf{a}^{-1}}
	\end{array}\right) = mn.
	\]
	
	Since \( \Phi \) is an isomorphism, there exists \( \mathbf{b} \in (\mathbb{F}_{q^m}^*)^n \) such that \( \Phi(\mathbf{a}) = \mathbf{b} \). Therefore, we have 
	
	\[
	\mathrm{rk}_q\left(\mathbf{b}\mathcal{C} \cap \mathbf{b}^{-1} \mathcal{C}^M\right) = 0.
	\]
	
	Furthermore, \( \mathbf{b}^{-1}\mathcal{C}^M \) is the dual code of \( \mathcal{C} \) with respect to \( M \) because 
	
	\[
	\chi_{\mathbf{b}^{-1}\mathbf{u}}(\mathbf{b}\mathbf{v})=\prod_{i=1}^n\chi_{{b}_i^{-1}{u}_i}({b}_i{v}_i) = \chi_{\mathbf{u}}(\mathbf{v}) = 1,
	\]
	for all \(\mathbf{u} \in \mathcal{C}^M\) and \(\mathbf{v} \in \mathcal{C}\). Hence, the result follows immediately.
\end{proof}

\begin{remark}
	Suppose $\CC$ is an additive code with its dual $\CC^{M}$ under $M$. Then, $\Phi(\CC^{M})$ is the dual code of $\Phi(\CC)$ under some suitable $\Phi$. For example, let $\FF_{2^2}=\{0, 1, \omega, \omega^2\}$. Let $\CC=\langle(0, 1) \rangle=\{(0, 0), (0, 1) \}$ and consider the duality $M$ as
	$$\begin{array}{|c|c|c|c|c|}
		\hline
		M & 0 & 1 & \omega & \omega^2 \\
		\hline
		0 & 1 & 1 & 1 & 1  \\
		1 & 1 & 1 & -1 & -1  \\
		\omega & 1 & -1 & -1 & 1 \\
		\omega^2 & 1 & -1 & 1 & -1 \\
		\hline
	\end{array}.$$ 
	For $\Phi_1(1)=(1,0)$, $\Phi_1(\omega)=(0,1)$, we have
	\[
	\Phi_1(\mathcal C) = \operatorname{span}_{\mathbb F_2}\{(0,0,1,0)\},
	\]
	so
	\[
	(\Phi_1(\mathcal C))^\perp = \{(x_1,x_2,x_3,x_4)\in\mathbb F_2^4 \mid x_3=0\},
	\]
	whereas
	\[
	\Phi_1(\mathcal C^M) = \{(x_1,x_2,x_3,0)\in\mathbb F_2^4 \mid x_i\in\mathbb F_2\}.
	\]
	Hence $\Phi_1(\mathcal C^M) \neq (\Phi_1(\mathcal C))^\perp$.
	
	For $\Phi_2(1)=(1,1)$, $\Phi_2(\omega)=(0,1)$, one verifies that $\operatorname{Tr}(uv) = \langle \Phi_2(u), \Phi_2(v) \rangle$ for all $u,v$. Then
	\[
	\Phi_2(\mathcal C) = \operatorname{span}_{\mathbb F_2}\{(0,0,1,1)\},
	\]
	and
	\[
	(\Phi_2(\mathcal C))^\perp = \{(x_1,x_2,x_3,x_3)\in\mathbb F_2^4 \mid x_i\in\mathbb F_2\}.
	\]
	Moreover,
	\[
	\Phi_2(\mathcal C^M) = \{(x_1,x_2,x_3,x_3)\in\mathbb F_2^4 \mid x_i\in\mathbb F_2\},
	\]
	so $\Phi_2(\mathcal C^M) = (\Phi_2(\mathcal C))^\perp$.
	
	Thus, the proof of Theorem~\ref{th-3.5} requires fixing a compatible $\Phi$. Furthermore, Theorem 3.16 can be derived from \cite[Corollary~14]{CMT18}.
\end{remark}

\section{Construction of $\ell$-RAIP Codes}\label{sec:4}

The main goal of this section is to construct $\ell$-RAIP codes for given additive codes $\CC_i$, where $i=1,2$.

\begin{thm}
	Let $\CC_1=\langle \textbf{u}_1, \textbf{u}_2, \ldots, \textbf{u}_r\rangle$ and $\CC_1=\langle \textbf{v}_1, \textbf{v}_2, \ldots, \textbf{v}_r\rangle$ be two additive codes of length $n \equiv 0 \pmod q$ over $\FF_{q^{e}}=\{\langle\beta_1, \beta_2, \ldots, \beta_{e}\rangle\}$, where $\textbf{u}_i=(\beta_i, \beta_i, \ldots, \beta_i)$, for $1\leq i \leq r$ and $\textbf{v}_{i-r}=(\alpha_i, \alpha_i, \ldots, \alpha_i)$, for $r+1\leq i \leq 2r\leq e$. Then the pair $\{\CC_1,\CC_2^M\}$ has $r$-RAIP of codes.  
\end{thm}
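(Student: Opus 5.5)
The plan is to show directly that $\CC_1\subseteq\CC_2^M$. Then $\CC_1\cap\CC_2^M=\CC_1$, and the rank of the intersection is the $\FF_q$-rank of $\CC_1$, which is $r$.

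I read the statement as follows. The second code is $\CC_2=\langle\mathbf v_1,\ldots,\mathbf v_r\rangle$, and the elements $\alpha_i=\beta_i$ for $r+1\le i\le 2r$ are further members of the fixed basis $\{\beta_1,\ldots,\beta_e\}$ of $\FF_{q^e}$ over $\FF_q$. Here $e$ plays the role of $m$. The duality $M$ is arbitrary.

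The first step is the rank of $\CC_1$. The vectors $\mathbf u_i=(\beta_i,\ldots,\beta_i)$ are $\FF_q$-linearly independent, because their first coordinates $\beta_1,\ldots,\beta_r$ are part of an $\FF_q$-basis. Hence $\mathrm{rk}_q(\CC_1)=r$, and in the same way $\mathrm{rk}_q(\CC_2)=r$. So both codes are $(n,q^r)$ codes, and Theorem~\ref{th-3.1} applies with $k=r$.

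The second step, which is the key computation, evaluates the generator inner products. For $1\le i,j\le r$,
\[
\langle \mathbf u_i,\mathbf v_j\rangle_M=\prod_{t=1}^{n}\chi_{\beta_i}(\beta_{r+j})=\bigl(\chi_{\beta_i}(\beta_{r+j})\bigr)^{n}.
\]
Every character of the additive group of $\FF_{q^e}$ takes values in $\langle\xi\rangle$, where $\xi$ is a primitive $p$-th root of unity. The hypothesis $n\equiv 0 \pmod q$ gives $p\mid n$, so this product equals $1$. Thus $\mathcal G_1\odot_M\mathcal G_2^\top=\mathbf 1_{r\times r}$, and $\log_\xi(\mathcal G_1\odot_M\mathcal G_2^\top)$ is the zero matrix. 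The symmetric computation gives the same for $\mathcal G_2\odot_M\mathcal G_1^\top$.

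The third step concludes. Both logarithm matrices have rank $0=k-\ell$ with $k=r$, so Theorem~\ref{th-3.1} (its proof gives $\mathrm{rk}_q(\CC_1\cap\CC_2^M)=k-r_{\log}$) yields $\mathrm{rk}_q(\CC_1\cap\CC_2^M)=r$. Alternatively, the orthogonality on generators extends by multiplicativity to all of $\CC_1\times\CC_2$, which gives $\CC_1\subseteq\CC_2^M$ directly.

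The only real obstacle is interpretive rather than mathematical. The statement's typo (writing $\CC_1$ twice) and the unspecified indexing of the $\alpha_i$ must be resolved as above. One must also note that the argument uses only $p\mid n$, which follows from $q\mid n$. Notably, it does not need $\alpha_i\ne\beta_j$. The condition $2r\le e$ matters only for guaranteeing $\mathrm{rk}_q(\CC_2)=r$.
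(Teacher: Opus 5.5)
Your proposal is correct and follows essentially the paper's route. Each entry of $\G_1\odot_M\G_2^\top$ is an $n$-th power of a $p$-th root of unity with $p\mid n$, so $\log_\xi(\G_1\odot_M\G_2^\top)$ vanishes (the paper phrases this as the coefficient matrix $n\log_\xi(\cdot)$ being zero), and Theorem~\ref{th-3.1} then gives $\mathrm{rk}_q(\CC_1\cap\CC_2^M)=r$.

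Your bookkeeping is cleaner than the paper's: you reduce modulo $p$ rather than $q$, and you pair $\beta_i$ with $\alpha_{r+j}$ rather than $\beta_j$. Your direct containment $\CC_1\subseteq\CC_2^M$ is also valid, and the simplest justification is that every codeword of either code is a constant vector. Each pairing is then $\chi_a(b)^n=1$, with no appeal to multiplicativity in $\FF_q$-scalars.
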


\begin{proof}
	
	To prove the result, suppose that \(\bx \in \CC_1 \cap \CC^M_2\). This implies that \(\bx = \sum_{i=1}^{r} n_i \textbf{u}_i\), which belongs to both \(\CC_1\) and \(\CC_2^M\). Therefore, we have \(\chi_{(\sum_{i=1}^{r} n_i \textbf{u}_i)}(\textbf{v}_j) = 1\) for all \(j = 1, 2, \ldots, k\).
	
	Consequently, it follows that 
	
	\[
	\prod_{i=1}^{r} \left(\chi_{\textbf{u}_i}(\textbf{v}_j)\right)^{n_i} = \prod_{i=1}^{r} \left(\chi_{\beta_i}(\beta_j)\right)^{n_i} = 1.
	\]
	
	Now, assume that \(\chi_{\beta_i}(\beta_j) = \xi^{s_{ij}}\) for \(1 \leq i, j \leq r\), where \(s_{ij} \in \FF_q\) and \(\xi\) is a primitive \(p\)-th root of unity. This leads to the expression \(\xi^{n\sum_{i=1}^{r} s_{ij} n_i} = 1\),
	which implies that 
	
	\(n\sum_{i=1}^{r} s_{ij} n_i = 0 \mod q\).
	
	Thus, we arrive at a system of linear equations as follows
	
	$$n\left( {\begin{array}{ccccc}
			s_{11} & s_{21}&\cdots & s_{k1} \\
			s_{12} & s_{22}&\cdots & s_{k2} \\
			\vdots & \vdots&\ddots & \vdots \\
			s_{1k} & s_{2k}&\cdots & s_{kk}
	\end{array} } \right) \left( {\begin{array}{c}
			n_1  \\
			n_2  \\
			\vdots  \\
			n_k 
	\end{array} } \right)=\left( {\begin{array}{c}
			0 \\
			0  \\
			\vdots  \\
			0 
	\end{array} } \right).$$ Consequently, we get
	$$n\left(log_{\xi}\left(\G_1\odot_{M} \G_2^\top \right)\right)^\top\left( {\begin{array}{c}
			n_1  \\
			n_2  \\
			\vdots  \\
			n_k 
	\end{array} } \right)=\left( {\begin{array}{c}
			0 \\
			0  \\
			\vdots  \\
			0 
	\end{array} } \right),$$ where $\G_i$ is the generator matrix of $\CC_i$, for $i=1,2$. As, $n\equiv 0\pmod q$, so the rank of the matrix $n\left(log_{\xi}\left(\G_1\odot_{M} \G_2^\top \right)\right)$ is equal to zero. Thus, the results follow immediately from Theorem~\ref{th-3.1}.  
\end{proof}

\begin{remark}
	Note that, if $n\not\equiv 0\pmod q$ in the above theorem, whether RAIP of code $\{\CC_1, \CC_2^M\}$  depends on the matrix $n\left( {\begin{array}{ccccc}
			s_{11} & s_{21}&\cdots & s_{k1} \\
			s_{12} & s_{22}&\cdots & s_{k2} \\
			\vdots & \vdots&\ddots & \vdots \\
			s_{1k} & s_{2k}&\cdots & s_{kk}
	\end{array} } \right)$. However, in the above theorem if $\CC$ is an additive code of length $n\not\equiv 0\pmod q$ with rank one and $\chi_{1}(\beta)\neq 1$ for some $\beta\in \FF_q^{*}$, then the pair $\{\CC_1, \CC_2^M\}$ forms an ACP of codes. For example, let $\FF_{2^2}=\langle  1, \lambda\rangle$. Let $\CC=\langle(1, 1, 1, 1) \rangle$, $\CC_2=\langle \lambda, \lambda, \lambda, \lambda\rangle$ and consider the duality $M$ as $\chi_{1}(1)=1$, $\chi_{1}(\lambda)=\xi$, $\chi_{\lambda}(1)=\xi^2$, and $\chi_{\lambda}(\lambda)=1$. It is easy to see that $\{\CC_1,\CC_2^M\}$ is an ACP of codes. 
\end{remark}
\begin{thm}\label{th-4.2}
	Let \(\CC\) be a self-orthogonal additive \((n, q^k)\) code over \(\FF_q\) with a generator matrix \(\G\). Additionally, let \(\CC_1\) and \(\CC_2\) be additive \((n+k, q^k)\) codes with generator matrices \(\G_1 = [\mathrm{diag}(x_1, x_2, \ldots, x_k) | \G]\) and \(\G_2 = [\mathrm{diag}(y_1, y_2, \ldots, y_k) | \G]\), respectively. 
	If \(\chi_{x_i}(y_i) = 1\) for \(1 \leq i \leq \ell\) and \(\chi_{x_i}(y_i) \neq 0\) for \(\ell + 1 \leq i \leq k\), then the pair \(\{\CC_1, \CC_2\}\) is an \(\ell + 1\)-RAIP of codes.
\end{thm}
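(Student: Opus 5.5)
The plan is to reduce the statement to the rank criterion of Theorem~\ref{th-3.1}, read as the paper does for such pairs: the intersection to be measured is $\mathrm{rk}_q(\CC_1\cap\CC_2^M)$. I would compute the $k\times k$ matrix $S=\log_{\xi}(\G_1\odot_M\G_2^\top)$ explicitly and show that $\mathrm{rk}_q(S)=k-(\ell+1)$. Theorem~\ref{th-3.1} then gives $\mathrm{rk}_q(\CC_1\cap\CC_2^M)=\ell+1$, which is the claimed $(\ell+1)$-RAIP.

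\textbf{Step 1: compute $S$.} Write the $i$-th rows as $\mathbf{g}_{1i}=(x_i\mathbf{e}_i\mid \mathbf{g}_i)$ and $\mathbf{g}_{2j}=(y_j\mathbf{e}_j\mid \mathbf{g}_j)$, where $\mathbf{g}_i$ is the $i$-th row of $\G$. By multiplicativity of $\langle\cdot,\cdot\rangle_M$ over coordinates,
\[
\langle \mathbf{g}_{1i},\mathbf{g}_{2j}\rangle_M=\chi_{x_i\delta_{ij}}(y_j\delta_{ij})\cdot\langle \mathbf{g}_i,\mathbf{g}_j\rangle_M .
\]
Self-orthogonality of $\CC$ means $\G\odot_M\G^\top=\mathbf{1}$, so the second factor always equals $1$. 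For $i\neq j$ the first factor is $\chi_0(0)=1$. Hence $S=\operatorname{diag}(s_1,\dots,s_k)$ with $s_i=\log_\xi\chi_{x_i}(y_i)$. The hypothesis gives $s_1=\dots=s_\ell=0$.

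\textbf{Step 2: the remaining diagonal entries (main obstacle).} To reach $\ell+1$ rather than $\ell$, I must show that exactly one further diagonal entry vanishes, so that $\mathrm{rk}_q(S)=k-\ell-1$. The hypothesis $\chi_{x_i}(y_i)\neq 0$ for $i>\ell$ is automatic, since character values are roots of unity. It therefore does not separate the indices $i>\ell$, and I would have to extract the additional kernel vector from elsewhere. The first thing I would try is to exhibit an explicit extra element of $\CC_1\cap\CC_2^M$ beyond $\mathrm{span}\{\mathbf{g}_{11},\dots,\mathbf{g}_{1\ell}\}$. Candidates are a combination $\sum_{i>\ell}n_i\mathbf{g}_{1i}$ whose coefficients satisfy the single linear relation $\sum_{i>\ell} s_in_i=0$. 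This works only if the equations $s_in_i=0$ collapse to one relation, for instance if $(s_{\ell+1},\dots,s_k)$ is forced to be a multiple of a fixed vector. I would check directly whether the self-orthogonal tail $\G$, through $\G\odot_M\G^\top=\mathbf{1}$, imposes such a dependency.

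\textbf{Caveat.} If Step~2 fails, the diagonal form of $S$ shows that $\mathrm{rk}_q(S)$ equals the number of indices $i>\ell$ with $\chi_{x_i}(y_i)\neq1$. Under the stated hypothesis alone, the count $\ell+1$ would then require one additional index $i_0>\ell$ with $\chi_{x_{i_0}}(y_{i_0})=1$. In that case I would record exactly that extra condition, for example by reading the hypothesis as holding for $1\le i\le \ell+1$, under which the argument above closes immediately.

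For the literal pair $\{\CC_1,\CC_2\}$ there is a parallel computation. Since the rows of $\G$ are independent, $\sum a_i\mathbf{g}_{1i}=\sum b_i\mathbf{g}_{2i}$ forces $a=b$ and $a_i(x_i-y_i)=0$. The same issue then appears as the number of indices with $x_i=y_i$.
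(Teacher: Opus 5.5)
Your Step 1 is correct, and it is exactly the paper's computation. Self-orthogonality of $\G$ makes every off-diagonal entry of $\G_1\odot_M\G_2^\top$ trivial, so $S=\log_{\xi}(\G_1\odot_M\G_2^\top)=\mathrm{diag}(s_1,\dots,s_k)$ with $s_i=\log_{\xi}\chi_{x_i}(y_i)$.

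Step 2 cannot be carried out, and you should drop it rather than hedge. Because $S$ is diagonal, the condition $\sum_i n_i\mathbf{g}_{1i}\in\CC_2^M$ means $\xi^{s_jn_j}=1$ for every $j$ separately, i.e.\ $s_jn_j=0$ for all $j$. The single relation $\sum_{i>\ell}s_in_i=0$ only expresses orthogonality to the one vector $\sum_j\mathbf{g}_{2j}$. Moreover, $\G$ has already been used up in making the off-diagonal entries trivial, and the $s_i$ depend only on the free parameters $x_i,y_i$. So nothing can force an extra zero, and $\mathrm{rk}_q(\CC_1\cap\CC_2^M)=\#\{i:\chi_{x_i}(y_i)=1\}$ is all one can say. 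Your computation $\mathrm{rk}_q(\CC_1\cap\CC_2)=\#\{i:x_i=y_i\}$ for the literal pair is also correct.

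Your caveat has actually detected that the statement is false, so there is no missing idea to find. Since $\chi_{x_i}(y_i)\neq 0$ always holds, the hypothesis with $\ell$ implies the hypothesis with $\ell-1$. The theorem would then assert both rank $\ell+1$ and rank $\ell$. Concretely, $x_1=\dots=x_k=0$ satisfies the hypothesis with $\ell=k$. In that case $\CC_1\subseteq\CC_2^M$, so the rank is $k$, not $k+1$. For the literal pair, $x_i=y_i=0$ gives $\CC_1=\CC_2$, again of rank $k$.

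The paper's proof has the same defect. It silently reads the hypothesis as $\log_{\xi}\chi_{x_i}(y_i)\in\FF_q^*$ for $i>\ell$ and obtains your diagonal $S$ with exactly $\ell$ zeros, hence $\mathrm{rk}_q(S)=k-\ell$. Theorem~\ref{th-3.1} then yields $\mathrm{rk}_q(\CC_1\cap\CC_2^M)=\ell$, not the claimed $\ell+1$. Example~\ref{ex-raip-computation} makes the same off-by-one slip: it computes $3-2=1$ and then calls the pair a $2$-RAIP.

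Your patch of assuming $\chi_{x_i}(y_i)=1$ for $1\le i\le\ell+1$ gives only $\ge\ell+1$ unless you also require $\chi_{x_i}(y_i)\neq1$ for the remaining indices. The clean provable version is this: if $\chi_{x_i}(y_i)=1$ for $i\le\ell$ and $\chi_{x_i}(y_i)\neq 1$ for $i>\ell$, then $\{\CC_1,\CC_2^M\}$ is an $\ell$-RAIP. Your Step 1 together with Theorem~\ref{th-3.1} proves that in one line.
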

\begin{proof}
	To prove the theorem, we assume $\G_i$ be a row of $\G$, for $i=1,2,\ldots,k$.
	Now, the matrix $\G_1\odot\G_2^\top$ can be written as 
	$$\G_1\odot\G_2^\top=\left(\begin{array}{cccc}
		\chi_{x_1}(y_1)\chi_{\G_{1}}(\G_{1}) & \chi_{\G_{1}}(\G_{2}) & \cdots & \chi_{\G_{1}}(\G_{k}) \\
		\chi_{\G_{2}}(\G_{1}) & \chi_{x_2}(y_2)\chi_{\G_{2}}(\G_{2}) & \cdots & \chi_{\G_{2}}(\G_{k}) \\
		\vdots & \vdots & \ddots & \vdots \\
		\chi_{\G_{k}}(\G_{1}) & \chi_{\G_{k}}(\G_{2}) & \cdots & \chi_{x_k}(y_k)\chi_{\G_{k}}(\G_{k}) \\
	\end{array}\right).$$
	Since, $\CC$ is self-orthogonal, then $\chi_{\G_i}(\G_j)=1$, for $1\leq i,j \leq k$. By hypothesis, we assume $\chi_{x_i}(y_i)=\xi^{\beta_{ij}}$, for $\ell+1\leq i,j \leq k$ and $\beta_{ij}\in\FF_q^*$. Consequently, we have
	$$\textit{log}_{\xi}\left(\G_1\odot\G_2^\top\right)=\left(\begin{array}{ccccccc}
		0 & 0 & \cdots & 0 & 0 & \cdots & 0 \\
		0 & 0 & \cdots & 0 & 0 & \cdots & 0 \\
		\vdots & \vdots & \ddots & \vdots & \vdots & \ddots & \vdots  \\
		0 & 0 & \cdots & 0 & 0 & \cdots & 0 \\
		0 & 0 & \cdots & 0 & \alpha_{\ell+1,\ell+1} & \cdots & 0 \\
		\vdots & \vdots & \ddots & \vdots & \vdots & \ddots & \vdots  \\
		0 & 0 & \cdots & 0 & 0 & \cdots & \alpha_{k,k} \\
	\end{array}\right).$$ Hence, the result follows immediately from Theorem~\ref{th-3.1}.
\end{proof}

{\begin{example}\label{ex-raip-computation}
		Let $\mathbb{F}_{2^3}$ be a finite extension field with basis $\{1, \lambda, \lambda^2\}$ over $\mathbb{F}_2$. Consider the self-orthogonal linear code $\mathcal{C}$ of length $n = 4$ generated by
		\[
		\mathcal{G} = \begin{pmatrix}
			1 & 1 & 1 & 1 \\
			\lambda & 0 & \lambda & \lambda \\
			\lambda & \lambda & 0 & \lambda
		\end{pmatrix}.
		\]
		Take $x_1 = \lambda$, $x_2 = 1$, $x_3 = 1$, and $y_1 = y_2 = y_3 = \lambda$. Let $\mathcal{C}_1$ and $\mathcal{C}_2$ be two additive $(7, 2^3)$ codes generated by $\mathcal{G}_1$ and $\mathcal{G}_2$, respectivly,
		\[
		\mathcal{G}_1 = \begin{pmatrix}
			\lambda & 0 & 0 & 1 & 1 & 1 & 1 \\
			0 & 1 & 0 & \lambda & 0 & \lambda & \lambda \\
			0 & 0 & 1 & \lambda & \lambda & 0 & \lambda
		\end{pmatrix}, \quad 
		\mathcal{G}_2 = \begin{pmatrix}
			\lambda & 0 & 0 & 1 & 1 & 1 & 1 \\
			0 & \lambda & 0 & \lambda & 0 & \lambda & \lambda \\
			0 & 0 & \lambda & \lambda & \lambda & 0 & \lambda
		\end{pmatrix}.
		\]
		
		Let $\xi = -1$ be the primitive second root of unity over $\mathbb{F}_2$. The duality character relation $M: \mathbb{F}_{2^3} \times \mathbb{F}_{2^3} \to \mathbb{C}^*$ satisfies $\chi_{1}(1) = 1$, $\chi_{1}(\lambda) = \xi = -1$, $\chi_{\lambda}(1) = \xi^2 = 1$, and $\chi_{\lambda}(\lambda) = 1$. 
		
		Now, 
		$$ \mathcal{G}_1 \odot \mathcal{G}_2^\top = \begin{pmatrix}
			1 & \xi & \xi \\
			1 & \xi & 1 \\
			1 & 1 & \xi
		\end{pmatrix}. $$
		Thus,
		$$ \log_{\xi}\left(\mathcal{G}_1 \odot \mathcal{G}_2^\top\right) = \begin{pmatrix}
			0 & 1 & 1 \\
			0 & 1 & 0 \\
			0 & 0 & 1
		\end{pmatrix}. $$
		
		By Theorem \ref{th-3.1}, we get $\ell = k - \mathrm{rk} = 3 - 2 = 1$. Thus, the pair $\{\mathcal{C}_1, \mathcal{C}_2^M\}$ forms a $(1+1) = 2$-RAIP of codes, verifying the Theorem~\ref{th-4.2}.
\end{example}}

\begin{thm}\label{th-4.4}
	For $i=1,2$, let $\CC_i$ be an additive $(n, q^k)$ code over $\FF_{q^m}$ with a generator matrix $\mathcal{G}_i$. Also, let let $\CC_i'$ be an additive $(n, q^{k+1})$ code over $\FF_{q^m}$ with a generator matrix $\mathcal{G}_i'=\left(\begin{array}{cc}
		\ba_i  \\
		\G_i 
	\end{array}\right)$, where $\ba_1\in\CC_2^M$, $\ba_2\in\CC_1^M$, and $\chi_{\ba_1}(\ba_2)=1$. If the pair $\{\CC_1,\CC_2\}$ is $\ell$-RAIP of codes, then the pair $\{\CC_1',\CC_2'\}$ is $\ell+1$-RAIP of codes. 
\end{thm}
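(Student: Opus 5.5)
The plan is to reduce everything to the rank criterion of Theorem~\ref{th-3.1}. I read ``$\{\CC_1,\CC_2\}$ is an $\ell$-RAIP'' in the sense used throughout Section~\ref{sec:3}, namely $\mathrm{rk}_q(\CC_1\cap\CC_2^M)=\ell$. Theorem~\ref{th-3.1} then says that
\[
S:=\log_{\xi}\left(\G_1\odot_M\G_2^\top\right)
\]
satisfies $\mathrm{rk}_q(S)=k-\ell$. For the enlarged codes it suffices to show
\[
\mathrm{rk}_q\left(\log_{\xi}\left(\G_1'\odot_M{\G_2'}^\top\right)\right)=k-\ell=(k+1)-(\ell+1).
\]
Applying Theorem~\ref{th-3.1} to the $(n,q^{k+1})$ codes $\CC_1',\CC_2'$ then gives $\mathrm{rk}_q(\CC_1'\cap{\CC_2'}^M)=\ell+1$.

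First I check that $\G_i'$ really is a generator matrix of an $(n,q^{k+1})$ code, i.e.\ that $\ba_i\notin\CC_i$. This is implicit in the hypothesis that $\CC_i'$ has $q^{k+1}$ codewords, and I will take it from there.

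Next I write $\G_1'\odot_M{\G_2'}^\top$ in block form according to the splitting of rows and columns into $\ba_i$ and the rows of $\G_i$. This gives four blocks:
\begin{itemize}
\item the corner entry $\chi_{\ba_1}(\ba_2)$;
\item the first row, with entries $\chi_{\ba_1}(\mathbf g_{2j})$, where $\mathbf g_{2j}$ runs over the rows of $\G_2$;
\item the first column, with entries $\chi_{\mathbf g_{1i}}(\ba_2)$, where $\mathbf g_{1i}$ runs over the rows of $\G_1$;
\item the remaining $k\times k$ block, which is $\G_1\odot_M\G_2^\top$.
\end{itemize}
The corner entry is $1$ by hypothesis. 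The first row is all ones because $\ba_1\in\CC_2^M$. The first column is all ones because $\ba_2\in\CC_1^M$. Taking $\log_\xi$ entrywise therefore turns the matrix into $\begin{pmatrix}0&\mathbf 0\\ \mathbf 0& S\end{pmatrix}$, whose rank is $\mathrm{rk}_q(S)=k-\ell$, as required.

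The step I expect to be the main obstacle is the first-column claim. The condition $\ba_2\in\CC_1^M$ literally gives $\chi_{\ba_2}(\mathbf g_{1i})=1$, whereas the matrix entry is $\chi_{\mathbf g_{1i}}(\ba_2)$. For a symmetric duality, or a skew-symmetric one where $\chi_u(v)=\chi_v(u)^{-1}$, these agree up to inversion, so the entry is still $1$. For a general $M$, however, the first column could be a nonzero vector $\mathbf c$, and then the rank of $\begin{pmatrix}0&\mathbf 0\\ \mathbf c&S\end{pmatrix}$ may jump to $k-\ell+1$.

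I would handle this in one of two ways. The first is to work with the transpose ordering $\G_2'\odot_M{\G_1'}^\top$ (allowed by the symmetry statements in Theorem~\ref{th-3.1}) and check which orientation the two membership hypotheses actually kill. The second, if that fails, is to add the standing assumption that $M$ is symmetric or skew-symmetric, under which both border strips vanish and the argument above goes through verbatim.
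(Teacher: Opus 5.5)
Your route is the paper's route. The paper also writes $\G_1'\odot_M{\G_2'}^\top$ in the same bordered block form (corner $\chi_{\ba_1}(\ba_2)$, first row $\chi_{\ba_1}(\mathbf g_{2j})$, first column $\chi_{\mathbf g_{1i}}(\ba_2)$, lower block $\G_1\odot_M\G_2^\top$), declares the whole border to be ones, and applies Theorem~\ref{th-3.1}. The step you single out is exactly where the paper's proof is unjustified, and the gap is real: $\ba_2\in\CC_1^M$ gives $\chi_{\ba_2}(\mathbf g_{1i})=1$, not $\chi_{\mathbf g_{1i}}(\ba_2)=1$. For a non-symmetric duality the statement is in fact false. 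Take $q=2$, $m=4$, $n=k=1$, identify $\FF_{2^4}$ with $\FF_2^4$, and let $\chi_u(v)=(-1)^{u^\top Jv}$, where $J$ is the permutation matrix with $J_{13}=J_{24}=J_{32}=J_{41}=1$ (a legitimate duality, since $J$ is invertible). With $\CC_1=\CC_2=\langle e_1\rangle$, $\ba_1=e_2$, $\ba_2=e_3$ one gets $\CC_1^M=\CC_2^M=\{u:u_4=0\}$. Hence $\mathrm{rk}_q(\CC_1\cap\CC_2^M)=1=\ell$, $\ba_1\in\CC_2^M$, $\ba_2\in\CC_1^M$, and $\chi_{\ba_1}(\ba_2)=(-1)^{J_{23}}=1$. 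But ${\CC_2'}^M=\{u:u_1=u_4=0\}$, so $\CC_1'\cap{\CC_2'}^M=\langle e_2\rangle$ has rank $1$, not $2$. In your notation, $\log_\xi(\G_1'\odot_M{\G_2'}^\top)=\begin{pmatrix}0&0\\1&0\end{pmatrix}$, and the $1$ is your vector $\mathbf c=\log_\xi\chi_{e_1}(e_3)=J_{13}$.

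Consequently your first repair cannot work. In $\G_2'\odot_M{\G_1'}^\top$ the roles merely swap: $\ba_2\in\CC_1^M$ now kills the first row, but the first column $\chi_{\mathbf g_{2i}}(\ba_1)$ and the corner $\chi_{\ba_2}(\ba_1)$ are uncontrolled (the corner is $(-1)^{J_{32}}=-1$ above). That matrix also measures $\CC_2'\cap{\CC_1'}^M$, which in the example is $\langle e_1\rangle$, again of rank $1$. Moreover, the orientation-swap clause of Theorem~\ref{th-3.1} that you would invoke rests on $\log_\xi(\G_2\odot_M\G_1^\top)=\pm\bigl(\log_\xi(\G_1\odot_M\G_2^\top)\bigr)^\top$, which itself needs $M$ symmetric or skew-symmetric.

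Your second repair is the correct one, and it is genuinely necessary rather than merely convenient. Either assume $M$ is symmetric or skew-symmetric, so that $\chi_{\mathbf g_{1i}}(\ba_2)=\chi_{\ba_2}(\mathbf g_{1i})^{\pm1}=1$. Or, more economically, replace $\ba_2\in\CC_1^M$ by the condition $\chi_{\mathbf v}(\ba_2)=1$ for all $\mathbf v\in\CC_1$, which is the only property of $\ba_2$ the computation uses. Under either hypothesis your bordered-matrix argument is a complete proof of the corrected statement, including your observation that $\ba_i\notin\CC_i$ is forced by $|\CC_i'|=q^{k+1}$.
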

\begin{proof}
	For the proof of the theorem, we assume $\G_{li}$ be a row of $G_{l}$, for $l=1,2$.
	Now, the matrix $\G_1'\odot_{M}\G_2'^\top$ can be written as 
	$$\G_1'\odot_{M}\G_2'^\top=\left(\begin{array}{cccc}
		\chi_{\ba_1}(\ba_2) & \chi_{\ba_{1}}(\G_{21}) & \cdots & \chi_{\ba_{1}}(\G_{2k}) \\
		\chi_{\G_{12}}(\ba_{2}) &  &  &  \\
		\vdots &  & \G_1'\odot_{M}\G_2'^\top  &  \\
		\chi_{\G_{1k}}(\ba_{2}) &  &  &  \\
	\end{array}\right).$$  
	Consequently, we have 
	$$\G_1'\odot_{M}\G_2'^\top=\left(\begin{array}{cccc}
		1 & 1 & \cdots & 1 \\
		1 &  &  &  \\
		\vdots &  & \G_1'\odot_{M}\G_2'^\top  &  \\
		1 &  &  &  \\
	\end{array}\right).$$
	Thus, the result follows from Theorem~\ref{th-3.1}.      
\end{proof}
{\begin{example}\label{ex-raip-extension}
		Let $\mathbb{F}_{2^3}$ be a finite extension field with basis $\{1, \lambda, \lambda^2\}$ over $\mathbb{F}_2$. Consider two additive codes $\mathcal{C}_1$ and $\mathcal{C}_2$ of length $n = 3$ and dimension $k = 1$ over $\mathbb{F}_{2^3}$ generated by 
		\[
		\mathcal{G}_1 = \begin{pmatrix} 1 & 1 & 1 \end{pmatrix}, \quad \mathcal{G}_2 = \begin{pmatrix} \lambda & \lambda & \lambda \end{pmatrix}.
		\]
		Let $\xi = -1$ be the primitive second root of unity over $\mathbb{F}_2$. We define the character duality relation $M: \mathbb{F}_{2^3} \times \mathbb{F}_{2^3} \to \mathbb{C}^*$ by setting $\chi_{1}(1) = \xi = -1$, $\chi_{1}(\lambda) = 1$, $\chi_{\lambda}(1) = 1$, and $\chi_{\lambda}(\lambda) = \xi^2 = 1$.
		
		We first compute 
		$$ \mathcal{G}_1 \odot_M \mathcal{G}_2^\top = \begin{pmatrix} \chi_{1}(\lambda)^3 \end{pmatrix} = \begin{pmatrix} (1)^3 \end{pmatrix} = \begin{pmatrix} 1 \end{pmatrix}. $$
		Therefore $\log_{\xi}(\mathcal{G}_1 \odot_M \mathcal{G}_2^\top) = \begin{pmatrix} 0 \end{pmatrix}$. Since $k = 1$,  $\ell = k - \mathrm{rk} = 1 - 0 = 1$ (by Theorem~\ref{th-3.1}), the pair $\{\mathcal{C}_1, \mathcal{C}_2\}$ forms a $1$-RAIP.
		
		Now, choose $\mathbf{a}_1 = (1, 1, 0)$ and $\mathbf{a}_2 = (\lambda, 0, \lambda)$ and we construct the expanded $(3, 2^2)$ codes $\mathcal{C}_1'$ and $\mathcal{C}_2'$ with generator matrices
		\[
		\mathcal{G}_1' = \begin{pmatrix} 1 & 1 & 0 \\ 1 & 1 & 1 \end{pmatrix}, \quad \mathcal{G}_2' = \begin{pmatrix} \lambda & 0 & \lambda \\ \lambda & \lambda & \lambda \end{pmatrix}.
		\]
		Now,
		$$ \mathcal{G}_1' \odot_M (\mathcal{G}_2')^\top = \begin{pmatrix} 1 & 1 \\ 1 & 1 \end{pmatrix}. $$
		Thus,
		$$ \log_{\xi}\left(\mathcal{G}_1' \odot_M (\mathcal{G}_2')^\top\right) = \begin{pmatrix} 0 & 0 \\ 0 & 0 \end{pmatrix}. $$
		Since the new matrix dimension is $k' = 2$, by Theorem~\ref{th-3.1}, $\ell' = k' - \mathrm{rk} = 2 - 0 = 2$. Therefore, the pair $\{\mathcal{C}_1', (\mathcal{C}_2')^M\}$ forms a $2$-RAIP of codes, verifying Theorem~\ref{th-4.4}.
\end{example}}

\section{$\ell$-RAIP of codes from GRS and extended GRS codes}\label{sec:5}

The main aim of this section is to focus on the construction of $\ell$-RAIP of codes from GRS and extended GRS codes. Let $\ba=(a_1,a_2,\cdots,a_{mn})$ and $\mathbf{v}=(v_1, v_2, \ldots, v_{mn})$ be two vectors in $\FF_q^{mn}$ such that $mn\leq q$, with $a_1, a_2, \ldots a_{mn}$ being a distinct element of $\FF_q$ and $v_i$ is a non-zero element of $\FF_q$, for $i=1,2,\ldots, mn$. Also, let $P(x)$ be a polynomial in $\FF_q$ of degree less than or equal to $nm$ such that $P(a_i)\neq 0$, for $i=1,2,\ldots, mn$.
According to Jin's paper \cite{JIN}, we define generalized Reed Solomon (GRS) code as 
$$\textit{GRS}{(\ba, P(x), \mathbf{v})}=\Bigl\{ \left(\dfrac{v_1f(a_1)}{P(a_1)}, \dfrac{v_2f(a_2)}{P(a_2)}, \ldots, \dfrac{v_1f(a_{mn})}{P(a_{mn})}\right)\mid f(x)\in\FF_q;~\deg f(x)< \deg P(x)\Bigl\}.$$ Note that, the parameter of this code is $[nm, \deg P(x), nm-\deg P(x)+1]$.

The following result can be proved using a similar methodology to that used to prove Theorem~3.1 in \cite{Guenda2019}.

\begin{prop}\cite[Theorem~3.1]{Guenda2019}\label{p-4.1}
	Let $\textit{GRS}{(\ba, P(x), \mathbf{v})}$  and $\textit{GRS}{(\ba, Q(x), \mathbf{v})}$ be two GRS codes such that  and satisfying the following conditions
	\begin{enumerate}
		\item[1)] $\deg P(x)+ \deg Q(x)=nm$.
		\item[2)] $\gcd(P(x)Q(x),\prod\limits_{i=1}^{mn}(x-a_i))=1$.
	\end{enumerate}
	Then $\textit{GRS}{(\ba, P(x), \mathbf{v})} \cap \textit{GRS}{(\ba, Q(x), \mathbf{v})}=\textit{GRS}{(\ba, \gcd(P(x), Q(x)), \mathbf{v})}$.
\end{prop}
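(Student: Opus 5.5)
The plan is to prove the two inclusions directly from the evaluation-code description, writing $\mathcal{P}=\textit{GRS}(\ba,P(x),\mathbf{v})$, $\mathcal{Q}=\textit{GRS}(\ba,Q(x),\mathbf{v})$, $D(x)=\gcd(P(x),Q(x))$ and $A(x)=\prod_{i=1}^{mn}(x-a_i)$. By condition 2), $P$, $Q$ and $D$ do not vanish at any $a_i$, so all three codes are well defined. For any polynomial $R$ coprime to $A$ and any $f$ with $\deg f<\deg R$, I will write $\mathrm{ev}_R(f)=\bigl(v_if(a_i)/R(a_i)\bigr)_i$.

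\textbf{Inclusion $\supseteq$.} Write $P=DP_1$. Given $f$ with $\deg f<\deg D$, set $g=fP_1$. Then $\deg g<\deg D+\deg P_1=\deg P$ and $g/P=f/D$ as rational functions. Hence $\mathrm{ev}_D(f)=\mathrm{ev}_P(g)\in\mathcal{P}$. The same argument with $Q=DQ_1$ gives $\mathrm{ev}_D(f)\in\mathcal{Q}$.

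\textbf{Inclusion $\subseteq$ (the main step).} Suppose $\mathrm{ev}_P(f)=\mathrm{ev}_Q(g)$ with $\deg f<\deg P$ and $\deg g<\deg Q$. Since every $v_i\neq 0$, this means $f(a_i)Q(a_i)=g(a_i)P(a_i)$ for all $i$. So the polynomial $h=fQ-gP$ vanishes at all $mn$ distinct points $a_i$. Its degree is strictly less than $\deg P+\deg Q=nm$ by condition 1), and it has $mn$ roots, so $h=0$. Thus $fQ_1=gP_1$ after cancelling $D$. Because $\gcd(P_1,Q_1)=1$, $P_1$ divides $f$; write $f=P_1 u$. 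Then $\deg u<\deg P-\deg P_1=\deg D$, and
\[
\frac{f(a_i)}{P(a_i)}=\frac{u(a_i)}{D(a_i)},
\]
so the common codeword equals $\mathrm{ev}_D(u)\in\textit{GRS}(\ba,D(x),\mathbf{v})$.

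The only delicate point is the degree count forcing $h=0$. This is exactly where conditions 1) and 2) are used: 1) bounds $\deg h$ below the number of evaluation points, and 2) guarantees the denominators are nonzero so the evaluations can be cleared. The paper's stated bound "$\deg P\le nm$" is compatible with this argument. I would also record in passing that $\mathrm{ev}_D$ is injective, because $\deg u<\deg D\le nm$, so the intersection has dimension $\deg\gcd(P,Q)$. This is the form the result will be used in to obtain an $\ell$-DLIP with $\ell=\deg\gcd(P,Q)$, which then transfers to an $\ell$-RAIP through $\Phi$ and Remark~\ref{rk-1}.
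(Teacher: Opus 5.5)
Your proof is correct and is essentially the standard argument the paper defers to (it gives no proof of its own, only pointing to Guenda et al., Theorem~3.1, which follows Jin). That argument likewise gets $\supseteq$ by rewriting $f/D = fP_1/P$, and gets $\subseteq$ by using $\deg(fQ-gP)<nm$ with $nm$ distinct roots to force $fQ=gP$, then using $\gcd(P_1,Q_1)=1$ to show the common codeword lies in $\textit{GRS}(\ba,\gcd(P,Q),\mathbf{v})$.
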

\begin{thm}
	Let $\textit{GRS}{(\ba, P(x), \mathbf{v})}$  and $\textit{GRS}{(\ba, Q(x), \mathbf{v})}$ be two GRS codes such that  and satisfying the following conditions
	\begin{enumerate}
		\item[1)] $\deg P(x)+ \deg Q(x)=nm$.
		\item[2)] $\gcd(P(x)Q(x),\prod\limits_{i=1}^{mn}(x-a_i))=1$.
	\end{enumerate}
	Then the pair $\{\Phi^{-1}\textit{GRS}{(\ba, P(x), \mathbf{v})}, \Phi^{-1}\textit{GRS}{(\ba, Q(x), \mathbf{v})}\}$ is a $\ell$-RAIP of codes over $\FF_q^{m}$, where $\ell=\deg\gcd(P(x), Q(x))$. 
\end{thm}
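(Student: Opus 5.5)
The plan is to reduce the statement to Proposition~\ref{p-4.1} and then move the dimension count across the isomorphism $\Phi$. Write $C_P=\textit{GRS}(\ba,P(x),\mathbf{v})$ and $C_Q=\textit{GRS}(\ba,Q(x),\mathbf{v})$. These are linear codes of length $mn$ over $\FF_q$, and we view them inside $\FF_q^{mn}$. The additive codes in question are their images under the $\FF_q$-linear isomorphism $\Phi:\FF_q^{mn}\to\FF_{q^m}^n$ of \eqref{eq-mapping}. I read the notation $\Phi^{-1}\textit{GRS}$ in the statement as this transport between $\FF_q^{mn}$ and $\FF_{q^m}^n$. Since $\Phi$ is an isomorphism, the direction does not affect any rank count.

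\textbf{Step 1.} Hypotheses 1) and 2) are exactly those of Proposition~\ref{p-4.1}. That result therefore gives
\[
C_P\cap C_Q=\textit{GRS}(\ba,\gcd(P(x),Q(x)),\mathbf{v}).
\]
By the parameter statement after the definition of GRS codes, this intersection has dimension $\deg\gcd(P(x),Q(x))$ over $\FF_q$.

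\textbf{Step 2.} Next we transport the intersection through $\Phi$. The argument of Lemma~\ref{lm-3.1} uses only that $\Phi$ is a bijection. Hence the same argument shows $\Phi(C_P\cap C_Q)=\Phi(C_P)\cap\Phi(C_Q)$, and likewise for $\Phi^{-1}$. Because $\Phi$ is $\FF_q$-linear and bijective, it preserves $\FF_q$-dimension, as in Remark~\ref{rk-1}. So
\[
\mathrm{rk}_q\bigl(\Phi(C_P)\cap\Phi(C_Q)\bigr)=\dim_{\FF_q}(C_P\cap C_Q)=\deg\gcd(P(x),Q(x))=\ell.
\]
Both images are $\FF_q$-subspaces of $\FF_{q^m}^n$, so they are additive codes of length $n$ over $\FF_{q^m}$. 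By the definition of $\ell$-RAIP, the pair is then an $\ell$-RAIP.

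\textbf{Main obstacle.} The step I expect to need care is the dimension claim in Step 1, namely that $\textit{GRS}(\ba,\gcd(P,Q),\mathbf{v})$ has dimension exactly $\deg\gcd(P,Q)$. This requires the evaluation map $f\mapsto\bigl(v_if(a_i)/D(a_i)\bigr)_i$, with $D=\gcd(P,Q)$, to be injective on polynomials of degree $<\deg D$. Injectivity holds because $\deg D\le \deg P\le mn$ and the $a_i$ are distinct. The denominators $D(a_i)$ are nonzero because $D$ divides $P$ and, by 2), $\gcd(P(x),\prod_i(x-a_i))=1$. I would include this short check explicitly rather than rely only on the quoted parameters.
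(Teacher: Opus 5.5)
Your proposal is correct and follows essentially the same route as the paper. Proposition~\ref{p-4.1} identifies the intersection as $\textit{GRS}(\ba,\gcd(P(x),Q(x)),\mathbf{v})$, and Lemma~\ref{lm-3.1} with Remark~\ref{rk-1} carries its $\FF_q$-dimension across the isomorphism $\Phi$; your explicit check of injectivity and of $D(a_i)\neq 0$ just makes rigorous the dimension count that the paper takes from the stated GRS parameters.
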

\begin{proof}
	
	Since \(\Phi\) is an isomorphism, the sets \(\{\Phi^{-1}\text{GRS}(\mathbf{a}, P(x), \mathbf{v})\}\) and \(\{\Phi^{-1}\text{GRS}(\mathbf{a}, Q(x), \mathbf{v})\}\) are two additive codes of length \(n\) over \(\mathbb{F}_{q^m}\). By applying Lemma~\ref{lm-3.1} and Remark~\ref{rk-1}, we find that \(\text{rk}_q\left(\{\Phi^{-1}\text{GRS}(\mathbf{a}, P(x), \mathbf{v})\} \cap \{\Phi^{-1}\text{GRS}(\mathbf{a}, Q(x), \mathbf{v})\}\right) = \dim\left(\text{GRS}(\mathbf{a}, P(x), \mathbf{v}) \cap \text{GRS}(\mathbf{a}, Q(x), \mathbf{v})\right)\). Thus, from Proposition~\ref{p-4.1}, we can immediately conclude the desired result.\end{proof}

Next, we provide the construction of \(\ell\)-RAIP codes from the extended GRS code. First, let us examine the structure of the extended GRS code. Suppose that \(r(x) = \frac{a(x)}{b(x)} = \frac{\sum\limits_{i=0}^t a_{i} x^i}{\sum\limits_{i=0}^t b_{i} x^i}\). If \(b_t \neq 0\), then \(r(\infty) = \frac{a_t}{b_t}\). From this, we conclude that \(r(\infty) = 0\) if and only if \(\deg r(x) < 0\), where $\deg r(x) = \deg a(x) - \deg b(x)$.

According to Jin's paper \cite{JIN}, for \(\mathbf{b} = (b_1, b_2, \ldots, b_{mn})\) with \(b_i \in \mathbb{F}_q^*\) and \(nm \leq q+1\), we define the extended GRS code as follows

$$\textit{GRS}_{\infty}{(\bb, P(x), \mathbf{v})}=\Bigl\{ \left(\dfrac{v_1f(b_1)}{P(b_1)}, \dfrac{v_2f(b_2)}{P(b_2)}, \ldots, \dfrac{v_1f(b_{mn})}{P(b_{mn})}, v_n\left(\dfrac{xf}{P}\right)\right)\mid f(x)\in\FF_q;~\deg f(x)< \deg P(x)\Bigl\}.$$ The obtained  code is precisely an $[nm, \deg P(x), nm-\deg P(x)+1]$-code.

We can show the following result by following a proof similar to that of Theorem~3.2 in \cite{Guenda2019}.
\begin{prop}\cite[Theorem~3.2]{Guenda2019}\label{p-4.2}
	Let $\textit{GRS}_{\infty}{(\bb, P(x), \mathbf{v})}$  and $\textit{GRS}_{\infty}{(\bb, Q(x), \mathbf{v})}$ be two extended GRS codes such that  and satisfying the following conditions
	\begin{enumerate}
		\item[1)] $\deg P(x)+ \deg Q(x)=nm$.
		\item[2)] $\gcd(P(x)Q(x),\prod\limits_{i=1}^{mn}(x-a_i))=1$.
	\end{enumerate}
	Then $\textit{GRS}_{\infty}{(\bb, P(x), \mathbf{v})} \cap \textit{GRS}_{\infty}{(\bb, Q(x), \mathbf{v})}=\textit{GRS}_{\infty}{(\bb, \gcd(P(x), Q(x)), \mathbf{v})}$.
\end{prop}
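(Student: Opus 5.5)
We plan to follow the proof of \cite[Theorem~3.2]{Guenda2019}, as for Proposition~\ref{p-4.1}, and argue by polynomial divisibility. Put $R(x)=\gcd(P(x),Q(x))$ and write $P=RP_1$, $Q=RQ_1$ with $\gcd(P_1,Q_1)=1$. Condition 2) guarantees that $P$, $Q$ and $R$ vanish at no evaluation point $b_i$, so all three extended codes are well defined. Whenever a degree comparison has to be made, we will use the convention that the last coordinate of a codeword is $v_n$ times the value at $\infty$ of the rational function $xf/P$ (or $xf/Q$, $xg/R$).

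\textbf{The inclusion $\supseteq$.} Take a codeword of $\textit{GRS}_{\infty}(\bb,R(x),\mathbf{v})$ with parameter $g$, where $\deg g<\deg R$. Set $f=gP_1$. Then $f/P=g/R$ as rational functions, and $\deg f<\deg P$. Consequently the finite coordinates $v_if(b_i)/P(b_i)$ and the value at $\infty$ agree with those of the $R$-codeword. The same argument with $Q_1$ places the codeword in the $Q$-code.

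\textbf{The inclusion $\subseteq$.} Suppose a vector $\cc$ arises both from $f$ with $\deg f<\deg P$ and from $h$ with $\deg h<\deg Q$. Equality at the $mn$ finite points gives $f(b_i)Q(b_i)=h(b_i)P(b_i)$ for every $i$, using $v_i\neq0$. Dividing by $R(b_i)\neq0$, the polynomial $F=fQ_1-hP_1$ vanishes at all $mn$ distinct points. Its degree is less than $\deg P+\deg Q_1\le \deg P+\deg Q=mn$. Hence $F=0$, so $fQ_1=hP_1$. Since $P_1$ and $Q_1$ are coprime, $P_1\mid f$. Write $f=gP_1$; then $\deg g<\deg R$ and $f/P=g/R$, so $\cc$ lies in $\textit{GRS}_{\infty}(\bb,R,\mathbf{v})$.

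\textbf{Where the argument may fail.} The degree count is the delicate point. In the extended setting, $mn$ is also meant to count the point at infinity, and condition 1) in the statement uses the total $nm$. If the number of finite points is only $mn-1$, the bound $\deg F<mn$ is not enough to force $F=0$. In that case we will use the coordinate at $\infty$ to supply the missing equation. If $\deg F=mn-1$ exactly, compare the leading coefficients of $fQ_1$ and $hP_1$; these are encoded in the last coordinate, since the values of $xf/P$ and $xh/Q$ at $\infty$ are leading-coefficient ratios. Equality of the last coordinates should force those leading coefficients to cancel, giving $\deg F<mn-1$, after which the finite points alone suffice. We expect this bookkeeping at infinity to be the main obstacle. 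Once it is settled, the two inclusions give the stated equality.
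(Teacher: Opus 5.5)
Your argument is correct and follows the route the paper intends; the paper gives no proof of its own beyond deferring to \cite[Theorem~3.2]{Guenda2019}, which likewise clears denominators, gets $fQ_1=hP_1$ from the vanishing of $F=fQ_1-hP_1$ at the evaluation points, and concludes $f/P=g/R$ with $g=f/P_1$. Your fallback at $\infty$ also works as sketched: since $\deg F\le \deg P+\deg Q-\deg R-1$, the case $\deg F=mn-1$ forces $\deg R=0$, and then equality of the last coordinates says (coefficient of $x^{\deg P-1}$ in $f$)$/$(leading coefficient of $P$) $=$ (coefficient of $x^{\deg Q-1}$ in $h$)$/$(leading coefficient of $Q$), which is exactly the vanishing of the $x^{mn-1}$ coefficient of $F$, so $mn-1$ finite points suffice.
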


\begin{thm}
	Let $\textit{GRS}_{\infty}{(\bb, P(x), \mathbf{v})}$  and $\textit{GRS}_{\infty}{(\bb, Q(x), \mathbf{v})}$ be two GRS codes  satisfying the following conditions
	\begin{enumerate}
		\item[1)] $\deg P(x)+ \deg Q(x)=nm$.
		\item[2)] $\gcd(P(x)Q(x),\prod\limits_{i=1}^{mn}(x-a_i))=1$.
	\end{enumerate}
	Then,  the pair $\{\Phi^{-1}\textit{GRS}_{\infty}{(\bb, P(x), \mathbf{v})}, \Phi^{-1}\textit{GRS}_{\infty}{(\bb, Q(x), \mathbf{v})}\}$ is a $\ell$-RAIP of codes over $\FF_q^{m}$, where $\ell=\deg\gcd(P(x), Q(x))$. 
\end{thm}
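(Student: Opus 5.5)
The plan is to reduce to the linear intersection result for extended GRS codes over $\FF_q$, Proposition~\ref{p-4.2}, and transport it through the coordinate isomorphism $\Phi$. This is exactly how the preceding theorem for ordinary GRS codes was handled. Write $C_P=\textit{GRS}_{\infty}(\bb, P(x), \mathbf{v})$ and $C_Q=\textit{GRS}_{\infty}(\bb, Q(x), \mathbf{v})$. These are linear codes over $\FF_q$ whose length is a multiple of $m$ matching the domain of the coordinatization. They should be viewed as subspaces of $\FF_q^{mn}$, and their images under the $\FF_q$-linear isomorphism of \eqref{eq-mapping} (written $\Phi^{-1}$ in the statement, following the paper's convention) are $\FF_q$-subspaces of $\FF_{q^m}^n$, i.e.\ additive codes of length $n$ over $\FF_{q^m}$.

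First, observe that both images are additive codes. Their $\FF_q$-ranks are $\deg P(x)$ and $\deg Q(x)$, because an $\FF_q$-linear isomorphism preserves dimension.

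Second, apply Lemma~\ref{lm-3.1}; its proof only uses injectivity of the map, so it holds for arbitrary subspaces in place of $\CC_1,\CC_2^M$. Together with Remark~\ref{rk-1}, this shows that the rank of the intersection of the two images equals $\dim_{\FF_q}(C_P\cap C_Q)$.

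Third, invoke Proposition~\ref{p-4.2} using hypotheses 1) and 2). This identifies $C_P\cap C_Q$ with $\textit{GRS}_{\infty}(\bb,\gcd(P(x),Q(x)),\mathbf{v})$. By the parameter description of extended GRS codes, that code has dimension $\deg\gcd(P(x),Q(x))=\ell$. Hence the intersection has $\FF_q$-rank $\ell$, and the pair is an $\ell$-RAIP.

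The main obstacle is the dimension claim for the extended GRS code attached to $\gcd(P,Q)$. One must check that the evaluation map $f\mapsto (v_if(b_i)/G(b_i),\ldots)$ is injective on polynomials of degree less than $\deg G$, where $G=\gcd(P,Q)$. This follows because a nonzero $f$ of degree below $\deg G\le nm$ cannot vanish at all the distinct evaluation points (including the point at infinity). Note also that $G$ does not vanish at the $b_i$: it divides $P$, and $\gcd(PQ,\prod(x-a_i))=1$, where the points $a_i$ in hypothesis 2) are to be read as the evaluation points $b_i$. With these checks, Proposition~\ref{p-4.2} finishes the argument.
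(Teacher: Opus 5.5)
Your proposal is correct and follows the paper's proof: transport the two extended GRS codes through the $\FF_q$-isomorphism $\Phi$, use Lemma~\ref{lm-3.1} and Remark~\ref{rk-1} to identify the $\FF_q$-rank of the intersection with the dimension of the intersection of the underlying linear codes, and conclude with Proposition~\ref{p-4.2}. Your extra checks fill in details the paper leaves implicit: reading $\Phi^{-1}$ as the image under $\Phi$, reading the points $a_i$ in the second hypothesis as the $b_i$, and verifying that $\textit{GRS}_{\infty}(\bb,\gcd(P,Q),\mathbf{v})$ has dimension $\deg\gcd(P,Q)$. Like the paper, though, you take for granted that the extended code has length $nm$, whereas the displayed definition has $nm+1$ coordinates.
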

\begin{proof}
	Since, $\Phi$ is an isomorphism, then  $\{\Phi^{-1}\textit{GRS}_{\infty}{(\bb, P(x), \mathbf{v})}$ and $\Phi^{-1}\textit{GRS}_{\infty}{(\bb, Q(x), \mathbf{v})}\}$ are two additive codes of length $n$ over $\FF_{q^m}$. By applying Lemma~\ref{lm-3.1} and Remark~\ref{rk-1}, we have $$\mathrm{rk}_q\left(\{\Phi^{-1}\textit{GRS}_{\infty}{(\bb, P(x), \mathbf{v})}\cap \Phi^{-1}\textit{GRS}_{\infty}{(\bb, Q(x), \mathbf{v})}\right)=\dim\{\textit{GRS}_{\infty}{(\bb, P(x), \mathbf{v})}\cap \textit{GRS}{(\bb, Q(x), \mathbf{v})}\}.$$ Thus, from Proposition~\ref{p-4.2}, the desired result follows immediately.  
\end{proof}

\section{Conclusion}\label{sec:conclusion}

We have introduced \(\ell\)-rank additive intersection pairs (RAIP) of codes over finite fields and examined their properties (namely related to their duality) through their generator and parity-check matrices. Our study investigated the relationships between these intersection pairs and several well-known families of codes, including additive complementary (ACD) and additive complementary (ACP) codes, as well as the hull of an additive code. We explored their interconnections, categorized them, derived several constructions, and contributed to their classification. This type of code pair builds on various concepts in the literature, specifically ACD and ACP codes, as well as the hull of an additive code. As a result, future research avenues appear promising. These could lead to significant applications, particularly for complementary pairs of codes related to side-channel and fault injection attacks, which have recently emerged as viable countermeasures against passive and active side-channel analyses in embedded cryptosystems. An interesting direction for future work is the investigation of $\ell$-RAIP codes with respect to Hermitian duality. Such a study may provide new insights into the design of quantum error-correcting codes that allow varying amounts of entanglement, following the framework introduced by Luo et al.~\cite{LFGS24}.

\section*{Acknowledgement}  The authors sincerely thank the Associate Editor and the referees for their valuable comments and constructive suggestions, which have significantly improved the quality and presentation of this paper. The first author would like to thank the Indian Institute of Technology Guwahati for its hospitality and support.


\begin{thebibliography}{00}
	
	\bibitem{Agrawal24} 
	A. Agrawal, R. K. Sharma.
	\newblock ACD codes over skew-symmetric dualities.
	\newblock{\em Cryptogr. Commun.}, \textbf{14}:1013–1032, 2024.
	
	\bibitem{Agrawa24} 
	A. Agrawal, R. K. Sharma.
	\newblock Additive one-rank hull codes over finite fields.
	\newblock{\em Finite Fileds Appl.}, \textbf{96}:102426, 2024.
	
	\bibitem{ACLMRS24}
	S.~E. Anderson, E.~Camps-Moreno, H.~H. L{\'o}pez, G.~L. Matthews, D.~Ruano, and I.~Soprunov.
	\newblock Relative hulls and quantum codes.
	\newblock {\em IEEE Trans. Inf. Theory}, \textbf{70}(5):3190--3201, 2024.
	
	\bibitem{BDM23}
	S. Bhowmick, D. K. Dalai, and S. Mesnager.
	\newblock On Linear Complementary Pairs of Algebraic Geometry Codes over Finite Fields.
	\newblock In {\em  Discrete Mathematics}, \textbf{347}(12):114193, 2024.
	
	\bibitem{BD24}
	S. Bhowmick and D. K. Dalai.
	\newblock Additive complementary pairs of codes.
	\newblock{\em Adv. Math. Commun.}, \textbf{19}(6):1694-1712, 2025.
	
	
	\bibitem{Bhowmick23} S. Bhowmick, A. F. Tabue, J. Pal.
	\newblock On the $\ell$-DLIPs of codes over finite commutative rings.
	\newblock{Discrete Mathematics,} \textbf{347}(4):113853, 2024.
	
	
	\bibitem{Bierbrauer}
	J. Bierbrauer, Y. Edel, G. Faina, S. Marcugini, F. Pambianco.
	\newblock Short additive quaternary codes.
	\newblock {\em IEEE Trans. Inform. Theory}, \textbf{55}:952-954, 2009.
	
	\bibitem{Blokhuis04} A. Blokhuis, A. Brouwer.
	\newblock Small additive quaternary codes.
	\newblock{Eur. J. Comb.} \textbf{25}(2):161-167, 2004.
	
	\bibitem{Cal98} 
	A.R. Calderbank, E.M. Rains, P.M. Shor, N.J.A. Sloane.
	\newblock Quantum error correction via codes over $GF(4)$.
	\newblock{\em Phys. Rev. A}, \textbf{44}(2):1369–1387, 1998. 
	
	\bibitem{CG18}
	C. Carlet, C. G{\"{u}}neri, F. {\"{O}}zbudak, B. {\"{O}}zkaya, and P. Sol{\'{e}}.
	\newblock On linear complementary pairs of codes.
	\newblock {\em }, \textbf{64}(10):6583--6589, 2018.
	
	\bibitem{CMT18}
	C. Carlet, S. Mesnager, C. Tang, Y. Qi, and R. Pellikaan.
	\newblock Linear codes over $\FF_q$ are equivalent to {LCD} codes for $q>3$.
	\newblock {\em IEEE Trans. Inform. Theory}, \textbf{64}(4):3010--3017, 2018.
	
	\bibitem{CMTQ19}
	C. Carlet, S. Mesnager, C. Tang, and Y. Qi.
	\newblock On $\sigma$-LCD codes.
	\newblock {\em IEEE Trans. Inform. Theory}, \textbf{65}(3):1694--1704, 2019.
	
	\bibitem{choi23} 
	W. H.  Choi, C. G{\"{u}}neri, J. L. Kim, F. {\"{O}}zbudak.
	\newblock Theory of additive complementary dual codes, constructions and computations.
	\newblock{\em Finite Fileds Appl.}, \textbf{92}:102303, 2023.
	
	\bibitem{Guenda2019}  K. Guenda, T. A. Gulliver, S. Jitman, S. Thipworawimon.
	\newblock Linear $\ell$-Intersection Pairs of Codes and Their Applications. 
	\newblock{\em Designs, Codes and Cryptography,} \textbf{88}(1), 133-152, 2020.
	
	\bibitem{Del73} P. Delsarte.  
	\newblock An algebraic approach to the association schemes of coding theory. 
	\newblock{\em Philips Res.Rep. Suppl.}, 1973.
	
	\bibitem{Dougherty22} S.T. Dougherty, S. Şahinkaya, D. Ustun.  
	\newblock Additive complementary dual codes from group characters. 
	\newblock{\em IEEE Trans. Inform. Theory}, \textbf{68}(7): 4444-4452, 2022.
	
	\bibitem{Dou124} S.T. Dougherty, S. Şahinkaya, D. Ustun.
	\newblock On additive codes with one-rank hulls. 
	\newblock{\em Appl. Algebra Eng. Commun. Comput.}, https://doi.org/10.1007/s00200-024-00663-5.
	
	\bibitem{Ezerman} M. F. Ezerman, S. Ling and P. Sol\'e.  
	\newblock Additive asymmetric quantum codes. 
	\newblock{\em IEEE Trans. Inform. Theory}, \textbf{57}: 5536--5550, 2011.
	
	\bibitem{Huff13} W. C. Huffman.  
	\newblock On the theory of $\FF_q$-linear $\FF_{q^t}$-codes. 
	\newblock{\em Advances in Mathematics of Communications}, \textbf{7}: 349--378, 2013.
	
	\bibitem{Kim17} J. L. Kim and N. Lee.  
	\newblock Secret sharing schemes based on additive codes over GF(4). 
	\newblock{\em Appl. Algebra Eng. Commun. Comput.}, \textbf{28}: 79--97, 2017.
	
	\bibitem{LFGS24} 
	G. Luo, M.-F. Ezerman, M. Grassl, and S. Ling.
	\newblock Constructing quantum error-correcting codes that require a variable amount of entanglement.
	\newblock{\em Quantum Information Processing}, \textbf{23}(4), 2024.
	
	\bibitem{Mas92}
	J. L. Massey.
	\newblock Linear codes with complementary duals.
	\newblock {\em Discrete Mathematics}, \textbf{106-107}:337--342, 1992.
	
	\bibitem{JIN} L. Jin.  
	\newblock Construction of MDS codes with complementary duals. 
	\newblock{\em IEEE Transactions on Information Theory}, \textbf{63}: 2843–2847, 2017.
	
	\bibitem{shi22} 
	M. Shi, S. Liu, J. L. Kim, and P. Sol{\'{e}}.
	\newblock Additive complementary dual codes over $\FF_4$.
	\newblock{\em Des., Codes Cryptogr.}, \textbf{91}:273-284, 2022.
	
	
	
	\bibitem{shi222} 
	M. Shi, S. Liu, F. {\"{O}}zbudak, and P. Sol{\'{e}}.
	\newblock Additive cyclic complementary dual codes over $\FF_4$.
	\newblock{\em Finite Fileds Appl.}, \textbf{75}:102087, 2022.
	
	\bibitem{Woo99} J. Wood. 
	\newblock Duality for modules over finite rings and applications to coding theory. 
	\newblock{ American Journal of Mathematics,} \textbf{121}, 555--575, (1999).
	
	
	
\end{thebibliography}
\end{document}